\newcommand{\itangle}[1]{\rotatebox{-10}{\scalebox{1}[0.9]{$#1$}}}
\definecolor{linkColor}{RGB}{0, 128, 128}
\definecolor{citeColor}{RGB}{0, 112, 64}
\definecolor{urlColor}{RGB}{120, 0, 120}
\definecolor{oracleColor}{HTML}{A0BDD3} 
\definecolor{noiseColor}{HTML}{F2B547} 
\definecolor{cmdColor}{RGB}{0, 0, 0}
\theoremstyle{plain}
\newtheorem{thm}{Theorem}
\newtheorem{lem}[thm]{Lemma}
\newtheorem{clm}[thm]{Claim}
\theoremstyle{definition}
\newtheorem{defn}{Definition}
\newcommand{\cH}{\mathcal{H}}
\renewcommand{\>}{\rangle}
\newcommand{\<}{\langle}
\newcommand{\qqAnd}{\qquad\text{and}\qquad}
\newcommand{\ccO}{\mathscr{O}}
\newcommand{\ccN}{\mathscr{N}}
\newcommand{\ccI}{\mathscr{I}} 
\newcommand{\regQ}{\mathsf{Q}}
\newcommand{\regQi}{\mathsf{Q_i}}
\newcommand{\regQo}{\mathsf{Q_o}}
\newcommand{\regR}{\mathsf{R}}
\newcommand{\regT}{\mathsf{T}}
\newcommand{\regW}{\mathsf{W}}
\newcommand{\regFunc}{\regT} 
\newcommand\unif{u}
\newcommand\bin{{\mathrm{b}}}
\newcommand\Pauli{{\mathbb{P}}}
\newcommand\actv{{act}} 
\newcommand\pasv{{pas}} 
\newcommand{\OO}{\mathrm{O}}
\newcommand\progA{{\mathfrak{A}}} 
\newcommand\progB{{\mathfrak{B}}} 
\newcommand\progC{{\mathfrak{C}}} 
\newcommand{\progLab}{{\mathfrak{Lab}}}
\title{Addendum to ``Quantum Search with Noisy Oracle''}
\author{
Ansis Rosmanis\thanks{E-mail: \texttt{rosmanis@protonmail.com}}\\ [.5ex]
\normalsize  (unaffiliated)
}
\date{May 20, 2024}
\begin{document}

\maketitle

\begin{abstract}
In this note, I generalize the techniques of my recent work (arXiv:2309.14944) and show that, even if just a single known qubit of query registers is affected by the depolarizing noise of rate $p$, quantum search among $n$ elements cannot be done any faster than in $\OO(np)$ queries.
This holds both when the affected qubit is one of the $\log n$ index qubits and when it is the target qubit.
\end{abstract}

\section{Overview of the Main Result}

\subsection{Noise model}

In the original article, \cite{Rosmanis:2023:NoisyOracle}, we considered a noise that simultaneously depolarizes all query registers. The action of such a depolarizing noise commutes with the faultless oracle call. On the other hand, as we will implicitly show, the action of the depolarizing noise on a single qubit does not commute with the faultless oracle call.

 Indeed, if the single-qubit depolarizing noise is promised to happen only before or only after the faultless oracle call, quadratic quantum speedups survive. We sketch adaptations of Grover's algorithm to these scenarios in Appendix~\ref{sec:OneTwoSided}. On the other hand, the main result of this note is that, if the noise happens both before and after the faultless oracle call, those speedups vanish.
 
Therefore we assume that, independently before and after the faultless oracle call, a given qubit is completely depolarized with probability $r$ (see Figure~\ref{fig:noisy_qubit_orac} for an illustration). More formally, we define the faulty oracle call as follows.

\def\ygap{0.35}

\begin{figure}[!h]
\centering
\begin{tikzpicture}
\draw [white] (-0.1,-0.35-\ygap*2.5) rectangle (0.1,0.25+\ygap*4); 
  \foreach \y in {1,...,5}
  {
      \node [gray] at (0,\y*\ygap-\ygap) {\tiny $\y$};
   }
   \node [gray] at (0,-2.5*\ygap) {\tiny $0$};
\end{tikzpicture}
\hspace{13pt}
\begin{tikzpicture}
\draw [white] (-2.4,-0.35-\ygap*2.5) rectangle (2.6,0.25+\ygap*4); 
  \foreach \y in {1,...,5}
  {
      \draw (-2.4,\ygap*\y-\ygap)--(2.4,\ygap*\y-\ygap) ;
   }
   \draw (-2.4,-\ygap*2.5)--(2.4,-\ygap*2.5);
   \draw [draw=black,fill=oracleColor] (-0.5,-0.15-\ygap*2.5) rectangle (0.5,0.15+\ygap*4);
   \draw [fill=noiseColor, rounded corners = 1.3mm] (-1.8,-0.26-2.5*\ygap) rectangle (-0.8,0.26-2.5*\ygap);
   \draw [fill=noiseColor, rounded corners = 1.3mm] (0.8,-0.26-2.5*\ygap) rectangle (1.8,0.26-2.5*\ygap);

   \node at (0,\ygap) {$O_f$};
    \node at (-1.28,-.03-2.5*\ygap) {$\ccN_{0,r}$};
    \node at (1.32,-.03-2.5*\ygap) {$\ccN_{0,r}$};
\end{tikzpicture}
\hspace{17pt}
\begin{tikzpicture}
\draw [white] (-2.4,-0.35-\ygap*2.5) rectangle (2.6,0.25+\ygap*4); 
  \foreach \y in {1,...,5}
  {
      \draw (-2.4,\ygap*\y-\ygap)--(2.4,\ygap*\y-\ygap) ;
   }
   \draw (-2.4,-\ygap*2.5)--(2.4,-\ygap*2.5);
   \draw [draw=black,fill=oracleColor] (-0.5,-0.15-\ygap*2.5) rectangle (0.5,0.15+\ygap*4);
   \draw [fill=noiseColor, rounded corners = 1.3mm] (-1.8,-0.26+3*\ygap) rectangle (-0.8,0.26+3*\ygap);
   \draw [fill=noiseColor, rounded corners = 1.3mm] (0.8,-0.26+3*\ygap) rectangle (1.8,0.26+3*\ygap);

   \node at (0,\ygap) {$O_f$};
    \node at (-1.28,-.03+3*\ygap) {$\ccN_{4,r}$};
    \node at (1.32,-.03+3*\ygap) {$\ccN_{4,r}$};
\end{tikzpicture}

{\hspace{29pt}\small(\emph{target-qubit noise}) \hspace{78pt}
\small(\emph{index-qubit noise})}
\captionsetup{font=small}
\captionsetup{width=0.9\textwidth}
\caption[my caption]{%
The circuit diagram of the noisy oracle call $\ccO_{f,j,r}$ when the target qubit is noisy ($j=0$; on the left) and when an index qubit is noisy ($j=4$; on the right). The noise acts as the identity on all qubits except the $j$-th query qubit, which independently before and after the faultless oracle call $O_f$ is completely depolarized with probability $r$.
 The qubits of the query register $\regQ=\regQi\regQo$ are enumerated using values $j\in\{0,1,\ldots,\log n\}$ (here $n=2^5$), with $0$-th qubit forming $\regQo$ and $j$-th qubits with $j\in\{1,\ldots,\log n\}$ forming $\regQi$.
}
\label{fig:noisy_qubit_orac}
\end{figure}
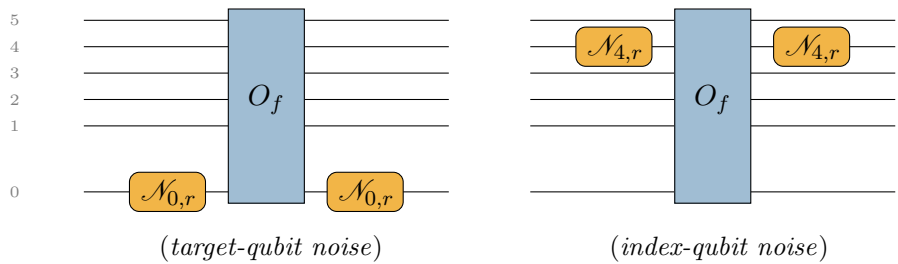

First of all, for the sake of notational convenience, in this note we consider the phase oracle rather than the standard oracle. That is, the \emph{faultless} oracle call to $f$ is
\begin{equation}
\label{eq:faultless}
O_f := \sum_{x\in[n]}\sum_{y\in\{0,1\}}(-1)^{f(x)y}|x,y\>\<x,y|;
\end{equation}
here and throughout the note we consider $f$ to be Boolean-valued. 
As before, we denote the CPTP map corresponding to $O_f$ by $\ccO_f$.

In addition, let us assume that $n$ is a power of $2$, so $O_f$ acts on $\log n+1$ qubits: the $\log n$ index qubits forming the query-input register $\regQi$ and the sole target qubit forming the query output register $\regQo$.

As in \cite[Figure 5]{Rosmanis:2023:NoisyOracle}, let $x_{\mathrm b}=x_{\mathrm b,\log n}\ldots x_{\mathrm b,2}x_{\mathrm b,1}$ denote the $\log n$-bit binary encoding of an input $x$. In particular, the bits $x_{\mathrm b,j}\in\{0,1\}$, where $j\in\{1,\ldots,\log n\}$, are such that 
$x=\sum_{j=1}^{\log n}2^{j-1}x_{\mathrm b,j}$. For $j\in\{1,\ldots,\log n\}$, let us refer to the qubit of the register $\regQi$ used for storing $x_{\mathrm b, j}$ as \emph{$j$-th query qubit}. Let us also occasionally refer to the target qubit $\regQo$ as \emph{$0$-th query qubit} in order to address the noise on any single qubit of $\regQ=\regQi\regQo$ in a consistent way.

For $j\in\{0,1,\ldots,\log n\}$, we define the depolarizing noise of $j$-th query qubit of noise rate $r$ as a CPTP map that with probability $1-r$ acts as the identity and with probability $r$ completely depolarizes $j$-th query qubit. We denote this CPTP map by $\ccN_{j,r}$.
We define the noisy oracle call as the CPTP map $\ccO_{f,j,r}:=\ccN_{j,r}\circ\ccO_f\circ\ccN_{j,r}$.%
\footnote{We could have used a similar definition in the original article, \cite{Rosmanis:2023:NoisyOracle}, where the noise $\ccN_{p}$ acting on all query registers commuted with $\ccO_f$. In particular, in that case, equating $r=1-\sqrt{1-p}\approx p/2$ would give us 
 $\ccN_{r}\circ \ccO_f \circ \ccN_{r} = \ccN_{r}\circ \ccN_{r}\circ \ccO_f = \ccN_{2r-r^2}\circ \ccO_f  = \ccO_{f,p}$.}

\subsection{Hardness of search with a noisy oracle qubit}

In this note, we show the following.

\begin{thm}
\label{thm:mainSingle}
Every algorithm that is given a known $j\in\{0,1,\ldots,\log n\}$ and an access to the noisy oracle with noisy $j$-th query qubit of noise rate $r$ (namely, $\ccO_{f,j,r}$) and that solves the unstructured search problem in the worst case setting with probability at least $1-\epsilon$ must make at least $nr(1-\epsilon)/2000-1$ queries to the noisy oracle.
\end{thm}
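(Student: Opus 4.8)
My plan is to prove the bound by a hybrid argument that compares the algorithm's evolution on a uniformly random single-marked input $x^\ast\in[n]$ against its evolution on the all-zero function $\mathbf 0$, reducing worst-case search to the averaged distinguishing advantage $\mathbb{E}_{x^\ast}\tfrac12\|\rho_T^{x^\ast}-\rho_T^{\mathbf 0}\|_1$, which must be at least $1-\epsilon-1/n$ whenever the success probability is $1-\epsilon$ (under $\mathbf 0$ the output is independent of the uniform $x^\ast$, so it is correct with probability at most $1/n$). The first thing I would record, and the crux of why the statement is nontrivial, is that a naive single-query hybrid is hopelessly lossy here. Writing $\ccO_{f,j,r}-\ccO_{\mathbf 0,j,r}=\ccN_{j,r}\circ(\ccO_f-\ccO_{\mathbf 0})\circ\ccN_{j,r}$ and expanding $\ccN_{j,r}=(1-r)\,\mathrm{id}+r\,\mathcal{D}_j$ (with $\mathcal{D}_j$ the complete depolarization of the $j$-th qubit), the leading term $(1-r)^2(\ccO_f-\ccO_{\mathbf 0})$ already contributes a per-query trace distance of order $1/\sqrt n$ on a uniform-superposition state, so summing over the $T$ queries only recovers the Grover bound $\OO(\sqrt n)$. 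Obtaining the quadratically stronger $\OO(nr)$ thus forces a genuinely multi-query analysis that tracks the \emph{loss of coherence}, rather than bounding each query in isolation.

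Accordingly, the heart of the argument would be a potential $\Phi_t$ measuring the coherent ``signal'' the algorithm has built up toward $x^\ast$ after $t$ queries. Following \cite{Rosmanis:2023:NoisyOracle}, I would realize this on the purified state and take $\Phi_t$ to be a weighted norm of the off-diagonal block that correlates the workspace with the marked index. The decisive estimate is a one-step recursion of the form $\sqrt{\Phi_t}\le(1-\Omega(r))\sqrt{\Phi_{t-1}}+\OO(1/\sqrt n)$: the faultless oracle $\ccO_f$ can raise the coherent amplitude $\sqrt{\Phi}$ by only $\OO(1/\sqrt n)$ per query (the usual adversary/hybrid gain), while the two-sided single-qubit noise contracts the coherence-bearing part of the state by a multiplicative factor $1-\Omega(r)$. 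This contraction is precisely where the non-commutation of $\ccN_{j,r}$ with $\ccO_f$ is exploited: because the noise sits both before and after $O_f$, the off-diagonal Pauli components of the $j$-th query qubit that carry the signal are damped by $\Theta(r)$ at a place where $O_f$ cannot restore them, unlike the one-sided scenarios sketched in Appendix~\ref{sec:OneTwoSided}.

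The recursion forces the coherent amplitude to stay confined to ``windows'' of length $\sim 1/r$, within which it reaches only $\OO(1/(r\sqrt n))$, so that each window contributes a distinguishing probability of only $\OO(1/(nr^2))$. Because decoherence between windows renders their contributions essentially independent, these add incoherently rather than coherently, and over $T$ queries (about $Tr$ windows) the total advantage grows no faster than $\OO(T/(nr))$. Equating this with the required $1-\epsilon-1/n$ and carefully bookkeeping the absolute constants then yields $T\ge nr(1-\epsilon)/2000-1$, the $-1$ absorbing the boundary effect of the first (information-free) query.

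The step I expect to be the main obstacle is turning ``the windows add incoherently'' into a rigorous, monotone potential whose per-query increment is provably $\OO(1/(nr))$ and not $\OO(1/\sqrt n)$, uniformly over all adaptive, entangled global states the algorithm may reach. This is where I would invest the technical effort: a Pauli/representation-theoretic decomposition isolating the $j$-th qubit's coherence components, bounding \emph{simultaneously} the oracle's coherent gain and the noise's damping inside the same quadratic form. A secondary subtlety is treating the target qubit ($j=0$) and the index qubits ($j\ge 1$) by a single argument; I would do so by using only the structural property that $\ccN_{j,r}$ damps, two-sidedly and with factor $\Theta(r)$, exactly the one qubit on which the phase written by $O_f$ depends, so that the coherence Grover needs across that qubit is destroyed regardless of whether it is the target or an index bit.
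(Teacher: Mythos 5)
Your high-level picture matches the paper's mechanism closely: the paper's key estimate (Claim~\ref{clm:targ:ProgAtoProgB}) is exactly your one-step recursion, namely $\|\Pi^{\progB,\actv}_{t+1}O_{j,r}\Pi^{\progB,\actv}_t\|\le 1-r/9$ together with $\|\Pi^{\progB,\actv}_{t+1}O_{j,r}\Pi^{\progA}_t\|\le 2/\sqrt n$, so the coherent signal does equilibrate at $\OO(1/(r\sqrt n))$ as you say. But your proposal stops exactly where the real work begins, and the step you flag as ``the main obstacle'' is not a technicality --- it is the theorem. The recursion alone only shows that the \emph{coherent} part of the state carries advantage $\OO(1/(nr^2))$ at any fixed time; it says nothing about the signal that the noise continually converts into classical correlations (the depolarized qubit's value is correlated with whether the reflection was applied), and that converted signal accumulates over all $\tau$ queries. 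Your ``windows add incoherently'' step is an unproven assertion that this accumulation is at most $\OO(\tau/(nr))$: you give no potential, no inequality, and no mechanism valid for adaptive, entangled algorithms. Summing per-window trace distances naively would require each window to restart from a state close to the zero-function hybrid, which is false in general, so the $\OO(\sqrt n)$ loss you correctly diagnose for the single-query hybrid simply reappears at the window level unless the leaked amplitude is explicitly accounted for.

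The paper's resolution --- and its stated main technical innovation --- is a specific Kraus decomposition of $\ccO_{f_x,j,r}$ (Lemma~\ref{lem:KrausAlt}) in which the ``collapse'' operators $K^{x,j}_{1,P}=\sigma_P^{(j)}c_P\Xi_{x,j}$ are supported on the marked element only and satisfy $\sum_{P}c_P^2\le 2r$. Which Kraus operator fired is recorded in a dedicated record register; once a collapse operator is recorded, that amplitude is surrendered to the subspace $\cH^{\progC}$ and counted in the potential $\Psi_t=\|\Pi^{\progC}_t|\phi_t\>\|^2+40\,\|\Pi^{\progB}_t|\phi_t\>\|^2$ with weight $1$, while the still-coherent progress carries weight $40$. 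Claim~\ref{clm:targ:toProgC} bounds the per-query leakage into $\cH^{\progC}$ by $\OO(r)\cdot\|\Pi^{\progB,\actv}_t|\phi_t\>\|^2+\OO(r/n)$, and combining this with the contraction inside a single quadratic form gives $\Psi_{t+1}-\Psi_t\le 2000/(nr)$, from which the theorem follows. To complete your argument you would need an analogue of this bookkeeping: one monotone quantity that simultaneously charges the oracle's coherent gain, the noise's damping, and the destination of the damped amplitude, with the relative weights tuned so the increments telescope. Without it, your proposal is an accurate restatement of the intuition in Section~\ref{sec:intuition} rather than a proof.
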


Let us make several remarks regarding this result.
\begin{itemize}
\item The lower bound given by Theorem~\ref{thm:mainSingle} is almost tight in $n$ and $r$. In particular, assuming $r\le 0.99$, the algorithm outlined in the original article (see \cite[Section 7]{Rosmanis:2023:NoisyOracle}) can solve the unstructured search using $\OO(\max\{nr(1+r\log n),\sqrt{n}\})$ queries to $\ccO_{f,j,r}$. 
\item Unlike in the original article \cite{Rosmanis:2023:NoisyOracle}, where the complexity of the problem remains the same even when one is given flag bits that indicate occurrences of errors, now, in the case of a single-qubit noise, such flag bits make the problem easier. In particular, if, for every noisy oracle call, we had two bits indicating the occurrence of complete depolarizing of $j$-th query qubit (i.e., the map $\ccN_{j,1}$) before and after the noiseless oracle call (see Figure~\ref{fig:noisy_qubit_orac_signal}), we could perform the unstructured search in $\OO(\max\{nr^2,\sqrt{n}\})$ queries. 
We elaborate the details in Appendix~\ref{sec:search_with_flags}.
The matching $\Omega(nr^2)$ query lower bound can be shown using techniques similar to those in \cite{Rosmanis:2023:NoisyOracle}.
\item The lower bound given by Theorem~\ref{thm:mainSingle} also applies to the stronger noise model where every qubit independently suffers the depolarizing noise of rate $r$, that is,  the noisy oracle call is 
$\bigotimes_j\ccN_{j,r}\circ\ccO_f\circ\bigotimes_j\ccN_{j,r}$.
Here we may consider the tensor product $\bigotimes_j\ccN_{j,r}$ to be over all qubits, not only the query qubits.
Hence, for noisy unstructured search, Theorem~\ref{thm:mainSingle} is a strict improvement upon the bound given by Chen, Cotler, Huang, and Li ~\cite{ChenCHL:2022:nisq} in the noisy intermediate-scale quantum (NISQ) computational model in the sense that Theorem~\ref{thm:mainSingle} considers a weaker noise model and that it gets rid of logarithmic factors.
\end{itemize}

\subsection{Techniques}

Both in the original article~\cite{Rosmanis:2023:NoisyOracle} and in this note, we construct purifications of noisy oracle calls. Formally speaking, these purifications are Stinespring representations of the corresponding CPTP maps.

In the original article, to purify the noisy oracle call $\ccO_{f,p}$, we first used the record register to purify the CPTP noise $\ccN_p$, after which the faultless oracle call $O_f$ was applied. Because $O_f$ is unitary, it did not affect the record register at all.

Now, however, we consider the joint ``noise+oracle'' operation as a single CPTP map, and, for it, we find a specific Kraus representation so that certain Kraus operators essentially indicate that the algorithm has found the marked element. The record register now purifies this joint CPTP noise+oracle map by ``recording" which Kraus operator has been applied. This new approach also lets one to reprove the 2008 result by Regev and Schiff \cite{regev:faultySearch}, which we do in Appendix~\ref{app:negligent}.

\subsection{Intuition behind the bound}
\label{sec:intuition}

For a single qubit, we can think of the completely depolarizing channel as an operation that first completely dephases the qubit, collapsing it to either $|0\>$ or $|1\>$, and then flips the value of the resulting bit with probability $1/2$.

\bigskip

Consider the depolarizing noise affecting the target qubit. 
In the faultless case, $O_f$ reflects about marked elements if the target qubit is set to $|1\>$ and acts as the identity if the target qubit is set to $|0\>$. However, if the target qubit gets depoarized before applying $O_f$, which happens with probability $r$, whether or not the desired reflection gets applied is out of control of the algorithm. What the algorithm could do in this case is to measure the target qubit after the noisy oracle call and see if the desired reflection was applied.\footnote{We elaborate on this in Appendix~\ref{sec:OneTwoSided}, where we consider one-sided noise.} However, still, if the target qubit gets depoarized after applying $O_f$, which, independently, also happens with probability $r$, then this error obfuscates whether the desired reflection was performed.

Hence, the algorithm cannot fully trust that the correct reflection will be applied and that it will be possible to tell if it got applied. While the probability of the errors occurring both before and after $O_f$ is only $r^2$, some error happens with probability $2r-r^2$, and the algorithm cannot tell whether it happened before or after $O_f$. 

As a concrete example, suppose the target qubit is initialized to $|1\>$, then the noisy oracle is called, and afterwards this qubit is measured to yield $|0\>$ (the measurement will yield $|0\>$ with probability $r-r^2/2$). We know that at least one error has occurred and the likelihood of having successfully performed a reflection about marked elements is exactly $50\%$.

This situation, where with probability $\Theta(r)$, instead of performing the reflection about marked elements, the oracle does not do anything and we have no ability to detect this inaction is very reminiscent of the negligent oracle studied by Regev and Schiff. Therefore, it is not surprising that we obtain the same lower bound as they do.

\bigskip

When it is an index qubit that is affected by the noise, a similar intuition holds. For simplicity, suppose there is exactly one marked element. In this scenario, the value of the qubit affected by the noise determines whether or not the state of other query qubits undergo any non-trivial reflection. The inability to know perfectly what was the value of this qubit when the faultless oracle was applied (as a part of a noisy oracle call) leads to essentially the same issues as in the scenario of the noisy target qubit.

\section{Kraus Representations of the Noisy Oracle}

As in the original article, \cite{Rosmanis:2023:NoisyOracle}, we only consider functions with exactly one marked element. Recall that we denote the function for which $x$ is the unique marked element by $f_x$. In this section, we consider Kraus representations of the noisy oracle call $\ccO_{f_x,j,p}$. In particular, we will provide a specific Kraus representation of $\ccO_{f_x,j,p}$ that we will utilize in Section~\ref{sec:progMes} to define the progress measure $\Psi_t$.

\subsection{A Kraus representation of the noise}

First, let us consider a Kraus representation of the depolarizing noise  $\ccN_{j,r}$ acting on $j$-th query qubit.
For that, recall Pauli operators
\[
\sigma_I:=\left[\begin{array}{cc}1&0\\0&1\end{array}\right], \quad
\sigma_X:=\left[\begin{array}{cc}0&1\\1&0\end{array}\right], \quad
\sigma_Y:=\left[\begin{array}{cc}0&-i\\i&0\end{array}\right], \quad
\sigma_Z:=\left[\begin{array}{cc}1&0\\0&-1\end{array}\right],
\]
and we will often refer to them jointly as $\sigma_P$, where $P\in \{I,X,Y,Z\}$. 
For conciseness, let $\Pauli:=\{I,X,Y,Z\}$, which is the set of labels of Pauli operators.
Note that 
\[
\sigma_X\sigma_Y=i\sigma_Z=-\sigma_Y\sigma_X,\qquad
\sigma_Y\sigma_Z=i\sigma_X=-\sigma_Z\sigma_Y,\qquad
\sigma_Z\sigma_X=i\sigma_Y=-\sigma_X\sigma_Z.
\]
When we want to emphasize that a Pauli operator $\sigma_P$ acts on $j$-th query qubit, we may write it as $\sigma_P^{(j)}$.

The completely depolarizing noise $\ccN_{j,1}$ has Kraus operators $\sigma_I^{(j)}\!/2$, $\sigma_X^{(j)}\!/2$, $\sigma_Y^{(j)}\!/2$, $\sigma_Z^{(j)}\!/2$. Namely, $\ccN_{j,1}$ acts on a density operator $\rho$ as 
\[
\ccN_{j,1} \colon \rho \mapsto \sum_{P\in\Pauli}  
\big(\sigma_I^{(j)}\!/2\big)\rho\big(\sigma_I^{(j)}\!/2\big)^*.
\]
Given this, we can easily obtain a Kraus representation of the depolarizing noise $\ccN_{j,r}$ of rate $r$.
Let $d_{I,r} := \sqrt{4-3r}/2$ and $d_{X,r} = d_{Y,r} = d_{Z,r} := \sqrt{r}/2$.
The depolarizing noise $\ccN_{j,r}$ has Kraus operators $d_{I,r}\sigma_I^{(j)}$, $d_{X,r}\sigma_X^{(j)}$, $d_{Y,r}\sigma_Y^{(j)}$, $d_{Z,r}\sigma_Z^{(j)}$, namely, $\ccN_{j,r}$ acts as 
\[
\ccN_{j,r} \colon \rho \mapsto 
\sum_{P\in\Pauli}
\big(d_{P,r} \sigma_P^{(j)}\big)\rho\big(d_{P,r} \sigma_P^{(j)}\big)^*.
\]

\subsection{A simple Kraus representation of the noisy oracle}
\label{sec:KrausOfNoisy}

Because the faultless oracle call $\ccO_{f_x}$ has a simple Kraus representation $\ccO_{f_x}\colon \rho\mapsto O_{f_x}\rho O_{f_x}$, by using the above Kraus representation of $\ccN_{j,r}$, we can easily obtain a Kraus representation of $\ccO_{f_x,j,r}=\ccN_{j,r}\circ\ccO_{f_x}\circ\ccN_{j,r}$.
In particular, $\ccO_{f_x,j,r}$ can be represented using sixteen Kraus operators $G_{P,P'}^{x,j} := d_{P,r} d_{P',r} \sigma_P^{(j)}O_{f_x}\sigma_{P'}^{(j)}$, where $P,P'\in\Pauli$. That is,
\[
\ccO_{f_x,j,r} \colon \rho \mapsto 
\sum_{P,P'\in\Pauli}
G_{P,P'}^{x,j}\rho (G_{P,P'}^{x,j})^*.
\]

Let us inspect Kraus operators $G_{P,P'}^{x,j}$ in more detail. A lot of analysis will be the same regardless whether the noise affects the target qubit  (i.e., $j=0$)  or it affects an index qubit (i.e., $j\in\{1,\ldots,\log n\}$). However, occasionally we will differentiate the two scenarios. In that case, for a given operator $A$, we use a caron (i.e., $\check{A}$) to indicate that we are defining this operator for the $j=0$ scenario and a circumflex (i.e., $\hat{A}$) to indicate that we are defining this operator for the $j\in\{1,\ldots,\log n\}$ scenario.

To start with, let us define the projector $\Pi_{x,j}$ and a linear isometry $\Xi_{x,j}$, both acting on register $\regQ$. Intuitively, $\Pi_{x,j}$ projects on the $(2n-2)$-dimensional subspace (of the space corresponding to $\regQ$) unaffected by the noisy oracle call, while $\Xi_{x,j}$ acts on the $2$-dimensional subspace affected by the noisy oracle call, the effect of the oracle call on the latter subspace being exactly $\Xi_{x,j}$ when $r=0$. The definitions of $\Pi_{x,j}$ and $\Xi_{x,j}$ slightly differ in the two scenarios.

\paragraph{Noisy target qubit, $j=0$.}

For every $x\in [n]$, we define 
 \[
\check\Pi_{x,j}:=I_\regQ-|x\>\<x|\otimes \sigma_I,
\qquad
\check\Xi_{x,j}:=|x\>\<x|\otimes \sigma_Z.
\]

\paragraph{Noisy index qubit, $j\in\{1,\ldots,\log n\}$.}

Given $x\in[n]$, let $x^j\in[n]$ be such that the binary representation of $x^j$ is obtained by flipping the $j$-th bit in the binary representation of $x$. In other words, we have $|x^j\>=\sigma_X^{(j)}|x\>$. For every $x\in [n]$, we define 
\[
\hat\Pi_{x,j} := I_{\regQ} - |x,1\>\<x,1| - |x^j,1\>\<x^j,1|,
\qquad
\hat\Xi_{x,j}:=  |x^j,1\>\<x^j,1|- |x,1\>\<x,1|.
\]

\bigskip

In both scenarios, we have $O_{f_x} = \Pi_{x,j} + \Xi_{x,j}$ and $I_\regQ = \Pi_{x,j} + \Xi^2_{x,j}$. Moreover, $\Pi_{x,j}$ commutes with $\sigma_P^{(j)}$ for all $P\in\Pauli$, while $\Xi_{x,j}$ commutes with $\sigma_I^{(j)}$ and $\sigma_Z^{(j)}$, but anticommutes with $\sigma_X^{(j)}$ and $\sigma_Y^{(j)}$.%
\footnote{For this statement to hold in both scenarios is exactly the reason why we consider the phase oracle.}
Hence, we have that
\begin{equation}
\label{eq:GppViaPiXi}
G_{P,P'}^{x,j} = d_{P,r} d_{P',r} \sigma_P^{(j)}\sigma_{P'}^{(j)}
(\Pi_{x,j}+\mathfrak{c}_{Z,P'}\Xi_{x,j}),
\end{equation}
where $\mathfrak{c}_{Z,P'}=1$ if $P'\in\{I,Z\}$ and $\mathfrak{c}_{Z,P'}=-1$ if $P'\in\{X,Y\}$.%
\footnote{We can think of $\mathfrak{c}_{Z,P'}$ as the group-theoretic commutator $\sigma_Z\sigma_{P'}\sigma_Z\sigma_{P'}$.}

\subsection{A useful Kraus representation of the noisy oracle}

While we have already obtained one Kraus representation of $\ccO_{f,j,p}$,
not all Kraus representations are equally useful for our purposes. Therefore, let us present another Kraus representation of $\ccO_{f,j,p}$, which will turn out to be more useful.

It is known that two Kraus representations $\{G_i\}_i$ and $\{K_{i'}\}_{i'}$ correspond to the same CPTP map if and only if there exist coefficients $u_{i,i'}$ such that both $K_{i'}=\sum_i u_{i,i'}G_i$ for all $i'$ and the matrix $(u_{i,i'})_{i,i'}$ is unitary (see, for example, \cite{NielsenChuang,watrous2018}). 
We will refer to $\{K_{i'}\}_{i'}$ as a \emph{unitary combination} of $\{G_i\}_i$.

Let us define coefficients $a_P,b_P,c_P$ for $P\in\Pauli$ as follows:
\begin{subequations}
\label{eq:abcCoeffs}
\begin{alignat}{3}
& a_I := \frac{\sqrt{4-6r+3r^2}}{2}, &\qquad& b_I := \frac{(2-r)(1-r)}{\sqrt{4-6r+3r^2}}, &\qquad& c_I:= \frac{r\sqrt{8-12r+5r^2}}{2\sqrt{4-6r+3r^2}}, \\
& a_X := \frac{\sqrt{r(2-r)}}{2}, && b_X := 0 , && c_X:= \frac{\sqrt{r(2-r)}}{2}, \\
& a_Y := \frac{\sqrt{r(2-r)}}{2}, && b_Y := 0 , && c_Y:= \frac{\sqrt{r(2-r)}}{2}, \\
& a_Z := \frac{\sqrt{r(2-r)}}{2}, && b_Z :=  \frac{(1-r)\sqrt{r}}{\sqrt{2-r}} , && c_Z:=  \frac{r\sqrt{4-3r}}{2\sqrt{2-r}}.
\end{alignat}
\end{subequations}
In turn, let us define operators
\[
K_{0,P}^{x,j} := \sigma_P^{(j)}\big( a_P \Pi_{x,j} + b_P \Xi_{x,j}\big)
\qqAnd
K_{1,P}^{x,j} :=  \sigma_P^{(j)} c_P \Xi_{x,j}.
\]
We claim that the set of these operators form a Kraus representation of the noisy oracle call $\ccO_{f_x,j,r}$.

\begin{lem}
\label{lem:KrausAlt}
For any operator $\rho$ on the space corresponding to the register $\regQ$, we have
\[
\ccO_{f_x,j,r} (\rho) = \sum_{P\in\Pauli}\sum_{\beta\in\{0,1\}} 
K^{x,j}_{\beta,P} \rho (K^{x,j}_{\beta,P})^*.
\]
\end{lem}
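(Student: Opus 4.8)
The plan is to show the two Kraus families define the same CPTP map by reducing each superoperator to a common normal form and matching coefficients; this avoids constructing an explicit unitary combination of the sixteen operators $\{G^{x,j}_{P,P'}\}$, although that route (via the unitary-freedom theorem quoted just above the lemma) would also work. Since we already know $\{G^{x,j}_{P,P'}\}$ represents $\ccO_{f_x,j,r}$, it suffices to show that $\sum_{P,P'}G^{x,j}_{P,P'}\rho(G^{x,j}_{P,P'})^*$ and $\sum_{P,\beta}K^{x,j}_{\beta,P}\rho(K^{x,j}_{\beta,P})^*$ agree for every $\rho$. Both representations are built solely from the Paulis $\sigma^{(j)}_P$, the projector $\Pi_{x,j}$, and the operator $\Xi_{x,j}$, via \eqref{eq:GppViaPiXi} and the definitions of $K^{x,j}_{\beta,P}$, so I would expand everything in terms of the four ``sandwich'' operators $\Pi\nu\Pi$, $\Xi\nu\Xi$, $\Pi\nu\Xi$, $\Xi\nu\Pi$, where $\nu$ is a Pauli-conjugate of $\rho$ (suppressing the subscripts $x,j$, and using throughout that $\Pi_{x,j},\Xi_{x,j},\sigma^{(j)}_P$ are Hermitian and the coefficients real, so adjoints merely reverse the order of factors).

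The key algebraic tool is the set of (anti)commutation relations already recorded: $\Pi_{x,j}$ commutes with every $\sigma^{(j)}_P$, while $\sigma^{(j)}_P\Xi_{x,j}=\mathfrak{c}_{Z,P}\,\Xi_{x,j}\sigma^{(j)}_P$. Writing $\epsilon_P:=\mathfrak{c}_{Z,P}$ (so $\epsilon_I=\epsilon_Z=1$, $\epsilon_X=\epsilon_Y=-1$) and $\mu_P:=d_{P,r}^2$, these relations let me push all Paulis to the outside. For the $G$-side, starting from \eqref{eq:GppViaPiXi} and conjugating the bracket $\Pi\rho\Pi+\Xi\rho\Xi+\epsilon_{P'}(\Pi\rho\Xi+\Xi\rho\Pi)$ by $\sigma^{(j)}_P\sigma^{(j)}_{P'}$, the cross term acquires the factor $\epsilon_{P'}\cdot\epsilon_P\epsilon_{P'}=\epsilon_P$ — the crucial simplification being that it ends up depending only on the \emph{outer} Pauli $P$. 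Using $\sigma^{(j)}_P\sigma^{(j)}_{P'}\propto\sigma^{(j)}_{P\cdot P'}$ (so the relevant conjugate of $\rho$ is $\nu_Q:=\sigma^{(j)}_Q\rho\sigma^{(j)}_Q$ with $Q=P\cdot P'$) and grouping the sum by $Q$, I would obtain
\[
\sum_{P,P'}G^{x,j}_{P,P'}\rho(G^{x,j}_{P,P'})^*
=\sum_{Q\in\Pauli}\big[S_Q(\Pi\nu_Q\Pi+\Xi\nu_Q\Xi)+T_Q(\Pi\nu_Q\Xi+\Xi\nu_Q\Pi)\big],
\]
where $S_Q:=\sum_{P\cdot P'=Q}\mu_P\mu_{P'}$ and $T_Q:=\sum_{P\cdot P'=Q}\mu_P\mu_{P'}\epsilon_P$ (each sum having four terms).

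Running the same computation on the $K$-side — now conjugating only by the single $\sigma^{(j)}_P$ — gives
\[
\sum_{P,\beta}K^{x,j}_{\beta,P}\rho(K^{x,j}_{\beta,P})^*
=\sum_{Q\in\Pauli}\big[a_Q^2\,\Pi\nu_Q\Pi+(b_Q^2+c_Q^2)\,\Xi\nu_Q\Xi+\epsilon_Q a_Q b_Q(\Pi\nu_Q\Xi+\Xi\nu_Q\Pi)\big].
\]
Equality of the two maps then reduces to the three scalar identities $a_Q^2=S_Q$, $b_Q^2+c_Q^2=S_Q$, and $\epsilon_Q a_Q b_Q=T_Q$ for each $Q\in\Pauli$, which I would check directly from \eqref{eq:abcCoeffs} together with $\mu_I=(4-3r)/4$ and $\mu_X=\mu_Y=\mu_Z=r/4$; e.g.\ $S_I=(4-6r+3r^2)/4$, $T_I=(1-r)(2-r)/2$, $S_X=S_Z=r(2-r)/4$, $T_Z=r(1-r)/2$, while $T_X=T_Y=0$ makes the $X,Y$ identities immediate since $b_X=b_Y=0$. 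I expect the main obstacle to be not any single hard idea but the careful bookkeeping in the $G$-side reduction — tracking the two independent Pauli conjugations and verifying that the cross-term coefficient collapses to $\epsilon_P$ — after which the remaining work is a routine, if slightly tedious, verification of the nested square roots appearing in $a_I,b_I,c_I$ and $a_Z,b_Z,c_Z$.
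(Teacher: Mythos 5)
Your proof is correct, but it takes a genuinely different route from the paper. The paper proves the lemma by exhibiting an explicit unitary combination: Table~\ref{tab:KGKraus} lists four $4\times4$ unitary blocks (padded with zero operators to a $16\times16$ block-diagonal unitary) expressing $\{K^{x,j}_{\beta,P}\}$ in terms of $\{G^{x,j}_{P',P''}\}$, and then invokes the unitary-freedom theorem for Kraus representations. You instead verify equality of the two superoperators directly, by pushing all Paulis to the outside via $\sigma^{(j)}_P\Xi_{x,j}=\mathfrak{c}_{Z,P}\Xi_{x,j}\sigma^{(j)}_P$ and the commutation of $\Pi_{x,j}$ with every $\sigma^{(j)}_P$, reducing both sides to combinations of the sandwiches $\Pi\nu_Q\Pi$, $\Xi\nu_Q\Xi$, $\Pi\nu_Q\Xi+\Xi\nu_Q\Pi$ with $\nu_Q=\sigma^{(j)}_Q\rho\sigma^{(j)}_Q$, and matching coefficients; I checked the resulting scalar identities $a_Q^2=S_Q$, $b_Q^2+c_Q^2=S_Q$, $\epsilon_Qa_Qb_Q=T_Q$ against (\ref{eq:abcCoeffs}) and they all hold (e.g.\ $S_I=(4-6r+3r^2)/4$, $T_I=(1-r)(2-r)/2$, $S_X=S_Y=S_Z=r(2-r)/4$, $T_Z=r(1-r)/2$, $T_X=T_Y=0$), including the key observation that the cross-term sign $\epsilon_{P'}\cdot\epsilon_P\epsilon_{P'}=\epsilon_P$ depends only on the outer Pauli. (The unit phase in $\sigma^{(j)}_P\sigma^{(j)}_{P'}\propto\sigma^{(j)}_{P\cdot P'}$ cancels between the two sides of the conjugation, so this causes no trouble.) What each approach buys: your computation needs only twelve scalar identities and sidesteps both the unitary-freedom theorem and the (nontrivial) verification that the matrices in Table~\ref{tab:KGKraus} are unitary; note that matching coefficients of the sandwich operators is only a sufficient condition for equality of the maps, but sufficiency is all that is needed here. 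The paper's route produces the explicit operator-level relation between the two families, which is more information than the lemma strictly requires but makes the ``which Kraus operator was recorded'' intuition concrete.
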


\begin{proof}
Table~\ref{tab:KGKraus} shows the coefficients that express the set of operators $\{K^{x,j}_{\beta,P}\}_{\beta,P}$ as a unitary combination of the set of Kraus operators $\{G^{x,j}_{P',P''}\}_{P',P''}$. Since the space spanned by $\{G^{x,j}_{P',P''}\}_{P',P''}$ is 8-dimensional, a half of the linear combinations form $\{K^{x,j}_{\beta,P}\}_{\beta,P}$, the other half are all $0$ matrices. Note that, technically, we would have to provide a $16\times 16$ unitary matrix, but we can think of the four $4\times4$ unitary matrices given by Table~\ref{tab:KGKraus} to form one $16\times16$ block-diagonal matrix.
In other words, we set all the coefficients corresponding to pairs $(G^{x,j}_{P',P''}, K^{x,j}_{\beta,P})$ with $\sigma_P\not\propto \sigma_{P'}\sigma_{P''}$ to $0$, and do no display them in Table~\ref{tab:KGKraus}.
\end{proof}

\begin{table}[!h]
\begin{tabular}{l|cccc}
& $K_{1,I}^{x,j}$ & $K_{0,I}^{x,j}$ & $0$ & $0$ \\ \hline
$G_{I,I}^{x,j}=\frac{4-3r}{4}(\Pi_{x,j}+\Xi_{x,j})$ & $\frac{r(4-3r)}{2\sqrt{4-6r+3r^2}\sqrt{8-12r+5r^2}}$ 
& $\frac{4-3r}{2\sqrt{4-6r+3r^2}}$ & $\frac{r}{\sqrt{2(8-12r+5r^2)}}$ & 0 \\
$G_{X,X}^{x,j}= \frac{r}{4} (\Pi_{x,j}-\Xi_{x,j})$ &  $-\frac{\sqrt{8-12r+5r^2}}{2\sqrt{4-6r+3r^2}}$ & $\frac{r}{2\sqrt{4-6r+3r^2}}$ & 0 & $\frac1{\sqrt 2}$ \\
$G_{Y,Y}^{x,j}= \frac{r}{4} (\Pi_{x,j}-\Xi_{x,j})$ & $-\frac{\sqrt{8-12r+5r^2}}{2\sqrt{4-6r+3r^2}}$ & $\frac{r}{2\sqrt{4-6r+3r^2}}$ & 0 & $-\frac1{\sqrt 2}$ \\
$G_{Z,Z}^{x,j}= \frac{r}{4} (\Pi_{x,j}+\Xi_{x,j})$ & $\frac{r^2}{2\sqrt{4-6r+3r^2}\sqrt{8-12r+5r^2}}$ 
& $\frac{r}{2\sqrt{4-6r+3r^2}}$ & $-\frac{4-3r}{\sqrt{2(8-12r+5r^2)}}$ & 0
\end{tabular}

\bigskip

\begin{tabular}{l|cccc}
& $K_{1,X}^{x,j}$ & $K_{0,X}^{x,j}$ & $0$ & $0$
 \\ \hline %
$G_{I,X}^{x,j}=\frac{\sqrt{r(4-3r)}}{4} \sigma_X^{(j)} (\Pi_{x,j}-\Xi_{x,j}),$ & $-\frac{\sqrt{4-3r}}{2\sqrt{2-r}}$ & $\frac{\sqrt{4-3r}}{2\sqrt{2-r}}$ &
 $\frac{\sqrt{r}}{\sqrt{2(2-r)}}$ & $0$  
 \\ %
$G_{X,I}^{x,j}= \frac{\sqrt{r(4-3r)}}{4} \sigma_X^{(j)} (\Pi_{x,j}+\Xi_{x,j})$ & $\frac{\sqrt{4-3r}}{2\sqrt{2-r}}$ & $\frac{\sqrt{4-3r}}{2\sqrt{2-r}}$ &
 $0$ & $\frac{\sqrt{r}}{\sqrt{2(2-r)}}$ 
 \\ %
 $G_{Y,Z}^{x,j}= i  \frac{r}{4} \sigma_X^{(j)} (\Pi_{x,j}+\Xi_{x,j})$ & $-\frac{i\sqrt{r}}{2\sqrt{2-r}}$ & $-\frac{i\sqrt{r}}{2\sqrt{2-r}}$ & $0$ &
 $\frac{i\sqrt{4-3r}}{\sqrt{2(2-r)}}$
 \\ %
$G_{Z,Y}^{x,j}= -i\frac{r}{4} \sigma_X^{(j)} (\Pi_{x,j}-\Xi_{x,j})$ & $-\frac{i\sqrt{r}}{2\sqrt{2-r}}$ & $\frac{i\sqrt{r}}{2\sqrt{2-r}}$ & $-\frac{i\sqrt{4-3r}}{\sqrt{2(2-r)}}$ & $0$  
\end{tabular}

\bigskip

\begin{tabular}{l|cccc}
& $K_{1,Y}^{x,j}$ & $K_{0,Y}^{x,j}$ & $0$ & $0$
 \\ \hline %
$G_{I,Y}^{x,j}=\frac{\sqrt{r(4-3r)}}{4} \sigma_Y^{(j)} (\Pi_{x,j}-\Xi_{x,j})$ & $-\frac{\sqrt{4-3r}}{2\sqrt{2-r}}$ & $\frac{\sqrt{4-3r}}{2\sqrt{2-r}}$ &
 $\frac{\sqrt{r}}{\sqrt{2(2-r)}}$ & $0$  
 \\ %
$G_{Y,I}^{x,j}= \frac{\sqrt{r(4-3r)}}{4} \sigma_Y^{(j)} (\Pi_{x,j}+\Xi_{x,j})$ & $\frac{\sqrt{4-3r}}{2\sqrt{2-r}}$ & $\frac{\sqrt{4-3r}}{2\sqrt{2-r}}$ &
 $0$ & $\frac{\sqrt{r}}{\sqrt{2(2-r)}}$ 
 \\ %
 $G_{Z,X}^{x,j}=  i  \frac{r}{4} \sigma_Y^{(j)} (\Pi_{x,j}-\Xi_{x,j})$ & $\frac{i\sqrt{r}}{2\sqrt{2-r}}$ & $-\frac{i\sqrt{r}}{2\sqrt{2-r}}$ &
 $\frac{i\sqrt{4-3r}}{\sqrt{2(2-r)}}$ & $0$ 
 \\ %
$G_{X,Z}^{x,j}= -i  \frac{r}{4} \sigma_Y^{(j)} (\Pi_{x,j}+\Xi_{x,j})$ & $\frac{i\sqrt{r}}{2\sqrt{2-r}}$ & $\frac{i\sqrt{r}}{2\sqrt{2-r}}$ &
 $0$ & $-\frac{i\sqrt{4-3r}}{\sqrt{2(2-r)}}$ 
\end{tabular}

\bigskip

\begin{tabular}{l|cccc}
& $K_{1,Z}^{x,j}$ & $K_{0,Z}^{x,j}$ & $0$ & $0$ \\ \hline
$G_{Z,I}^{x,j}= \frac{\sqrt{r(4-3r)}}{4} \sigma_Z^{(j)} (\Pi_{x,j}+\Xi_{x,j})$ & $\frac{\sqrt{r}}{2\sqrt{2-r}}$ & $\frac{\sqrt{4-3r}}{2\sqrt{2-r}}$ &
 $\frac{1}{\sqrt{2}}$ & 0 \\
$G_{I,Z}^{x,j}= \frac{\sqrt{r(4-3r)}}{4} \sigma_Z^{(j)} (\Pi_{x,j}+\Xi_{x,j})$ & $\frac{\sqrt{r}}{2\sqrt{2-r}}$ & $\frac{\sqrt{4-3r}}{2\sqrt{2-r}}$ &
 $-\frac{1}{\sqrt{2}}$ & 0 \\
$G_{X,Y}^{x,j}= i\frac{r}{4} \sigma_Z^{(j)} (\Pi_{x,j}-\Xi_{x,j})$ &  $\frac{i\sqrt{4-3r}}{2\sqrt{2-r}}$ & $-\frac{i\sqrt{r}}{2\sqrt{2-r}}$ &
 0 & $\frac1{\sqrt 2}$ \\
$G_{Y,X}^{x,j}= -i\frac{r}{4} \sigma_Z^{(j)} (\Pi_{x,j}-\Xi_{x,j})$ & $-\frac{i\sqrt{4-3r}}{2\sqrt{2-r}}$ & $\frac{i\sqrt{r}}{2\sqrt{2-r}}$ &
 0 & $\frac1{\sqrt 2}$
\end{tabular}
\captionsetup{font=small}
\captionsetup{width=0.9\textwidth}
\caption[my caption]{%
Unitary matrices relating Kraus operators $G_{P',P''}^{x,j}$ and $K_{\beta,P}^{x,j}$.  The expressions for $G_{P',P''}^{x,j}$ are from (\ref{eq:GppViaPiXi}).}
\label{tab:KGKraus}
\end{table}

Let us list some useful observations about the coefficients $a_P$, $b_P$, $c_P$.

\begin{clm}
\label{clm:aPbPcP}
We have that
\begin{itemize}
\item both $a_I$, $b_I$ are at most $1-\frac r2$;
\item all $a_X$, $a_Y$, $a_Z$, $b_Z$, $c_I$, $c_X$, $c_Y$, $c_Z$ are at most $\sqrt{r/2}$;
\item $\sum_{P\in\Pauli}a_P^2=1$, $\sum_{P\in\Pauli}b_P^2\le 1$, and $\sum_{P\in\Pauli}c_P^2\le 2r$.
\end{itemize}
\end{clm}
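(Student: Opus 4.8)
The plan is to handle the two ``bounding'' bullets as scalar inequalities in the noise rate $r\in[0,1]$ (the only regime in which $\ccN_{j,r}$ is a channel and the radicands are nonnegative), and to derive the ``summation'' bullet from the completeness relation for Kraus operators together with the bounds just established. Since every coefficient in \eqref{eq:abcCoeffs} is a nonnegative real on $[0,1]$, each claimed inequality is equivalent, after squaring both sides, to a polynomial inequality that I verify by factoring. I should first record that $4-6r+3r^2$, $8-12r+5r^2$, $2-r$, $4-3r$, and $r(2-r)$ are all nonnegative on $[0,1]$, so that the squaring steps are reversible.

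For the first bullet, $a_I\le 1-\tfrac r2$ reduces to $4-6r+3r^2\le(2-r)^2$, i.e.\ $2r(r-1)\le 0$; and $b_I\le 1-\tfrac r2$, after cancelling the common factor $2-r>0$ and squaring, reduces to $4(1-r)^2\le 4-6r+3r^2$, i.e.\ $r(r-2)\le 0$. For the second bullet, the five coefficients $a_X=a_Y=a_Z=c_X=c_Y=\tfrac12\sqrt{r(2-r)}$ give $2-r\le 2$ after squaring; $b_Z\le\sqrt{r/2}$ gives $2(1-r)^2\le 2-r$, i.e.\ $r(2r-3)\le 0$; and $c_Z\le\sqrt{r/2}$ gives $3r^2-6r+4\ge 0$, which holds because that quadratic has negative discriminant. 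Each of these is immediate on $[0,1]$. The genuinely non-routine case is $c_I\le\sqrt{r/2}$: here squaring produces a cubic rather than a quadratic, namely $5r^3-18r^2+20r-8\le 0$. The trick I would use is to notice that $r=2$ is a root, so $5r^3-18r^2+20r-8=(r-2)(5r^2-8r+4)$; the quadratic factor has discriminant $64-80<0$ and is therefore everywhere positive, while $r-2<0$ on $[0,1]$, giving the bound.

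For the third bullet I would avoid summing the messy expressions directly and instead invoke completeness. By Lemma~\ref{lem:KrausAlt} the operators $\{K^{x,j}_{\beta,P}\}$ form a Kraus representation of the trace-preserving map $\ccO_{f_x,j,r}$, so $\sum_{\beta,P}(K^{x,j}_{\beta,P})^*K^{x,j}_{\beta,P}=I_\regQ$. Using that each $\sigma_P^{(j)}$ is unitary and that $\Pi_{x,j},\Xi_{x,j}$ are Hermitian with $\Pi_{x,j}^2=\Pi_{x,j}$ and $\Pi_{x,j}\Xi_{x,j}=\Xi_{x,j}\Pi_{x,j}=0$ (they act on complementary subspaces), one computes $(K^{x,j}_{0,P})^*K^{x,j}_{0,P}=a_P^2\Pi_{x,j}+b_P^2\Xi_{x,j}^2$ and $(K^{x,j}_{1,P})^*K^{x,j}_{1,P}=c_P^2\Xi_{x,j}^2$. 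Summing and comparing with $I_\regQ=\Pi_{x,j}+\Xi_{x,j}^2$, a sum of two orthogonal nonzero projectors, forces $\sum_{P\in\Pauli}a_P^2=1$ and $\sum_{P\in\Pauli}(b_P^2+c_P^2)=1$. The first is exactly the claimed equality (and can also be checked by a one-line direct computation, since the numerator telescopes to $4$); dropping the nonnegative term $\sum_P c_P^2$ from the second yields $\sum_P b_P^2\le 1$. The sharper bound $\sum_P c_P^2\le 2r$ does \emph{not} come from completeness, which only gives $\le 1$; instead it follows from the second bullet, since each of $c_I,c_X,c_Y,c_Z$ is at most $\sqrt{r/2}$, hence each $c_P^2\le r/2$, and summing over the four labels $P\in\Pauli$ gives $\sum_P c_P^2\le 4\cdot\tfrac r2=2r$.

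I expect the cubic inequality for $c_I$ to be the only real obstacle; everything else is a quadratic sign check or an application of completeness. The two points warranting care are spotting the factor $(r-2)$ in that cubic and confirming that the quadratic factor $5r^2-8r+4$ is positive, together with verifying the nonnegativity of all radicands so that squaring preserves the inequalities on the whole interval $[0,1]$.
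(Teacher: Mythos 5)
Your proposal is correct and matches the paper's approach: the paper simply asserts that the claim "can be verified by a direct computation" and separately notes that $\sum_P a_P^2=1$ and $\sum_P b_P^2\le 1$ follow from the completeness relation $I_\regQ=\Pi_{x,j}+\Xi_{x,j}^2$ for the Kraus representation of Lemma~\ref{lem:KrausAlt}, exactly as you argue. Your write-up just supplies the elementary polynomial checks (including the factorization $5r^3-18r^2+20r-8=(r-2)(5r^2-8r+4)$ for the $c_I$ bound) and the observation that $\sum_P c_P^2\le 2r$ follows from the pointwise bounds, all of which are correct.
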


Claim~\ref{clm:aPbPcP} can be verified by a direct computation. Note that, however, the statements $\sum_{P\in\Pauli}a_P^2=1$ and $\sum_{P\in\Pauli}b_P^2\le 1$ are also a corollary of Lemma~\ref{lem:KrausAlt}: because $\{K^{x,j}_{\beta,P}\}_{\beta,P}$ form a Kraus representation of a CPTP map, we have
\[
I_\regQ = \sum_{P\in\Pauli}\sum_{\beta\in\{0,1\}} 
(K^{x,j}_{\beta,P})^*K^{x,j}_{\beta,P} 
=
\Pi_{x,j}^2\sum_{P\in\Pauli}a_P^2 
+ \Xi_{x,j}^2\sum_{P\in\Pauli}(b_P^2 + c_P^2).
\]

\section{Progress measure}
\label{sec:progMes}

\subsection{Record registers}
\label{sec:recReg}

As in the original article, the record register $\regR$ starts empty, the subregister $\regR_1$ is appended to it by the first oracle call, $\regR_2$ by the second oracle call, and so on.
Now, for every $t\in\{1,\ldots,\tau\}$, the register $\regR_t$ consists of three qubits, and we further subdivide $\regR_t$ in subregisters as $\regR_t=\regR''_t\regR'_t$, where $\regR''_t$ are two qubits corresponding to the set $\Pauli$ of the labels of Pauli operators and $\regR'_t$ is a qubit corresponding to $\{0,1\}$.

We define the $t$-th \emph{extended} noisy oracle call as a linear isometry
\[
O_{j,r} := \sum_{x\in[n]}|f_x\>\<f_x|_\regT \otimes \sum_{P\in\Pauli}\sum_{\beta\in\{0,1\}} 
( K^{x,j}_{\beta,P})_\regQ \otimes |P,\beta\>_{\regR_t};
\]
it acts as the identity on the workspace register $\regW$ and earlier record subregisters $\regR_1\ldots \regR_{t-1}$.
Just like in the original article, \cite[Section 3.2]{Rosmanis:2023:NoisyOracle}, $O_{j,r}$ is a purification of the noisy oracle call in the sense that, if we call $O_{j,r}$ with the truth register $\regFunc$ being in the state $|f_x\>$ and then discard the newly introduced record register $\regR_t$, then the resulting map is equal to $\ccO_{f_x,j,r}$.
However, unlike in the original article, we do not write $O_{j,r}$ as the product of the faultless oracle call and a purification of the noise, which is the main technical innovation of this addendum.
Also unlike in the original article, here we do not use flag bits that indicate the presence or absence of error, therefore the size of the workspace does not change throughout the extended computation.

As before, the initial state of the extended computation is $|\phi_0\>:=|\unif\>\otimes|\psi^0\>$, where $|\unif\>:=\sum_{x\in[n]}|f_x\>/\sqrt{n}$, the intermediate states just before the extended oracle calls and the final state are defined as $|\phi_t\>:=U_tO_{j,r}|\phi_{t-1}\>$, where $t\in\{1,\ldots,\tau\}$, and the success probability for a randomly chosen $f_x$ is $q_{succ}=\|\Pi_{succ}|\phi_\tau\>\|^2$, where $\Pi_{succ}:=\sum_{x\in[n]}|f_x,x\>\<f_x,x|$.
 
\subsection{Progress-defining subspaces}
\label{ssec:progressSubsp}

When defining the progress measure, the content of registers $\regR''_t$, which correspond to $\Pauli$, will be completely irrelevant, while, if any of $\regR'_t$ will be in state $|1\>$, we will intuitively assume that a noise has collapsed the computation to the unique correct solution.

For the record register $\regR=\regR_1\ldots \regR_{t}$ containing $t$ entries, let us define complementary projectors
\[
\Lambda^{\progA\progB}_t:= \bigotimes_{s=1}^t 
\big(I_{\regR''_s}\otimes |0\>\<0|_{\regR'_s} \big)
\qqAnd
\Lambda^{\progC}_t:= I_\regR - \Lambda^{\progA\progB}_t.
\]

\begin{defn}
Define the \emph{progress projectors} on registers $\regFunc\regR$ as
\begin{align*}
&
\Pi^{\progC}_{t} :=
 I_\regT\otimes \Lambda^\progC_t,
\\ &
\Pi^{\progB}_{t} := 
 (I_\regT-|\unif\>\<\unif|) \otimes \Lambda^{\progA\progB}_t,
\\ &
\Pi^{\progA}_{t} := 
 |\unif\>\<\unif| \otimes \Lambda^{\progA\progB}_t.
\end{align*}
We define the \emph{progress measure} as
\[
\Psi_t:=\|\Pi^\progC_{t}|\phi_t\>\|^2 + 40 \|\Pi^\progB_{t}|\phi_t\>\|^2.
\]
\end{defn}

\bigskip

The following lemma shows that the progress measure $\Psi_{\tau}$ is, essentially, an upper bound on the success probability, and it also bounds by how much a single call to the noisy oracle can increase the value of the progress measure. The main theorem then easily follows the lemma.

\begin{lem}
\label{lem:progEvol}
We have 
\begin{subequations}
\begin{align}
& \Psi_0 = 0, \label{lem:progEvol:A} \\
& q_{succ} \le \Psi_\tau +  \frac2{n}, \label{lem:progEvol:B} \\
& \Psi_{t+1}-\Psi_{t} \le \frac{2000}{nr}. \label{lem:progEvol:C}
\end{align}
\end{subequations}
\end{lem}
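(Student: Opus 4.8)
The plan is to establish the three displayed relations separately, the first two being direct and the third carrying all the work. For \eqref{lem:progEvol:A}: at $t=0$ the record register is empty, so $\Lambda^{\progA\progB}_0=I$ and $\Lambda^\progC_0=0$; thus $\Pi^\progC_0=0$ and $\Pi^\progB_0=(I_\regT-|\unif\rangle\langle\unif|)\otimes I$, both of which annihilate the $\regT$-uniform state $|\phi_0\rangle=|\unif\rangle\otimes|\psi^0\rangle$, giving $\Psi_0=0$. For \eqref{lem:progEvol:B}: the three progress projectors are orthogonal and sum to $I$, and both $\Pi^\progC_\tau$ and $I_\regT\otimes\Lambda^{\progA\progB}_\tau$ commute with $\Pi_{succ}$ (which acts trivially on $\regR$). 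Hence $q_{succ}$ splits orthogonally as $\|\Pi_{succ}(I\otimes\Lambda^{\progA\progB}_\tau)|\phi_\tau\rangle\|^2+\|\Pi_{succ}\Pi^\progC_\tau|\phi_\tau\rangle\|^2$; the second summand is at most $\|\Pi^\progC_\tau|\phi_\tau\rangle\|^2$, and expanding $(I\otimes\Lambda^{\progA\progB}_\tau)|\phi_\tau\rangle=\sum_x|f_x\rangle|\psi_x\rangle$, splitting each $|\psi_x\rangle$ into its $\regT$-average and its deviation, and using that $|\unif\rangle$ overlaps any single $|f_x,x\rangle$ by only $1/\sqrt n$, bounds the first by $\tfrac2n+2\|\Pi^\progB_\tau|\phi_\tau\rangle\|^2$. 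Together these give $q_{succ}\le\|\Pi^\progC_\tau|\phi_\tau\rangle\|^2+2\|\Pi^\progB_\tau|\phi_\tau\rangle\|^2+\tfrac2n\le\Psi_\tau+\tfrac2n$, so the coefficient $40$ is generous for this step.

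Relation \eqref{lem:progEvol:C} is the crux. Since $U_{t+1}$ acts only on $\regQ\regW$ it commutes with every progress projector, so I may analyze $O_{j,r}|\phi_t\rangle$ directly. Split $|\phi_t\rangle=|g\rangle+|c\rangle$ with $|g\rangle=(I\otimes\Lambda^{\progA\progB}_t)|\phi_t\rangle$, $|c\rangle=\Pi^\progC_t|\phi_t\rangle$, and further $|g\rangle=|a\rangle+|b\rangle$ into its $\progA$- and $\progB$-parts; abbreviate $\|a\|:=\||a\rangle\|=\|\Pi^\progA_t|\phi_t\rangle\|$ and similarly $\|b\|$. Because $O_{j,r}$ leaves $\regR_1\dots\regR_t$ fixed and writes $\beta\in\{0,1\}$ into the fresh qubit $\regR'_{t+1}$, the part $|c\rangle$ together with the $\beta=1$ part $O^{(1)}|g\rangle$ lands in $\Pi^\progC_{t+1}$, whereas only the $\beta=0$ part $O^{(0)}|g\rangle$ survives into $\Lambda^{\progA\progB}_{t+1}$. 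Writing $A'=\|\Pi^\progA_{t+1}O^{(0)}|g\rangle\|^2$, $B'=\|\Pi^\progB_{t+1}O^{(0)}|g\rangle\|^2$, $C'=\|O^{(1)}|g\rangle\|^2$, the isometry of $O_{j,r}$ forces $A'+B'+C'=\|a\|^2+\|b\|^2$; since the $\|c\|^2$ contributions cancel between $\Psi_{t+1}$ and $\Psi_t$, I obtain the clean identity
\[
\Psi_{t+1}-\Psi_t = C'+40B'-40\|b\|^2 = 40\bigl(\|a\|^2-A'\bigr)-39C'.
\]
It therefore suffices to bound the loss of uniform mass by $\|a\|^2-A'\le\tfrac{39}{40}C'+\tfrac{50}{nr}$, which then produces exactly $2000/(nr)$.

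The heart of the argument is this last bound, and it is where the factor $1/r$ appears. Decomposing $\langle\unif|O^{(0)}|g\rangle$ over the recorded Pauli label $P$ and writing $K^{x,j}_{0,P}=\sigma_P^{(j)}(a_PI+N^P_x)$ with $N^P_x:=b_P\Xi_{x,j}-a_P\Xi^2_{x,j}$ supported on the $O(1)$-dimensional range of $\Xi^2_{x,j}$, the uniform-to-uniform transition reproduces $\sum_P a_P^2\|a\|^2=\|a\|^2$ up to an $O(1/n)$ defect, using $\sum_P a_P^2=1$ from Claim~\ref{clm:aPbPcP}. The only dangerous contribution is the $\progB$-to-$\progA$ cross term $2\sum_P\mathrm{Re}\langle\alpha_P|\beta_P\rangle$, with $|\beta_P\rangle=\tfrac1{\sqrt n}\sum_x\sigma_P^{(j)}N^P_x|b_x\rangle$ (the $a_PI$ piece drops because $\sum_x|b_x\rangle=0$). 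Two facts tame it: $N^P_x|b_x\rangle$ depends only on $\Xi^2_{x,j}|b_x\rangle$, and the ranges of the $\Xi^2_{x,j}$ overlap for at most two values of $x$, so Cauchy--Schwarz gives $\||\beta_P\rangle\|=O\bigl((a_P+b_P)\sqrt{S_b/n}\bigr)$ and hence a cross term of size $O(\|a\|\sqrt{S_b/n})$, where $S_b:=\sum_x\|\Xi^2_{x,j}|b_x\rangle\|^2$ is the $\progB$-mass inside the affected subspaces. Against this I can spend $C'=\sum_P c_P^2\sum_x\|\Xi_{x,j}|g_x\rangle\|^2\ge\Theta(r)\bigl(\tfrac12 S_b-O(1/n)\bigr)$, since $\sum_P c_P^2\ge c_X^2+c_Y^2=r(2-r)/2$. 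Balancing the positive $O(\|a\|\sqrt{S_b/n})$ against the negative $-\Theta(rS_b)$ by AM--GM is exactly what yields $\Theta(\|a\|^2/(nr))$: the worst case $\sqrt{S_b}=\Theta(\|a\|/(\sqrt n\,r))$ makes the increase $\Theta(1/(nr))$. Tracking the constants (with $\sum_P a_P^2=1$, $\sum_P a_Pb_P=1-r$, $\sum_P b_P^2\le1$, $\sum_P c_P^2\le2r$ from Claim~\ref{clm:aPbPcP}, and the target/index cases differing only by whether the $\Xi^2_{x,j}$ ranges overlap in one or two indices) produces the stated constant $2000$. I expect this AM--GM balancing, and in particular securing the extra $1/n$ in the cross term from the near-orthogonality of the affected subspaces across $x$, to be the main obstacle; the $\progA$/$\progC$ bookkeeping and the coefficient identities are routine by comparison.
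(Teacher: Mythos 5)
Your proposal is correct in outline and reaches the bound by a genuinely different organization of the same underlying mechanism. Parts (\ref{lem:progEvol:A}) and (\ref{lem:progEvol:B}) match the paper's argument. For (\ref{lem:progEvol:C}), the paper first splits $\cH^{\progB}$ into active and passive parts (Section~\ref{ssec:ActAndPas}), proves four operator-norm bounds (Claims~\ref{clm:CFixed}--\ref{clm:targ:toProgC}) --- in particular the contraction $\|\Pi^{\progB,\actv}_{t+1}O_{j,r}\Pi^{\progB,\actv}_t\|\le 1-r/9$ and the feed rate $\|\Pi^{\progB,\actv}_{t+1}O_{j,r}\Pi^{\progA}_t\|\le 2/\sqrt n$ --- and then completes a square. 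You instead exploit the exact conservation identity $A'+B'+C'=\|a\|^2+\|b\|^2$ to get $\Psi_{t+1}-\Psi_t=40(\|a\|^2-A')-39C'$, which removes the passive subspace and the contraction bound from the picture entirely, and you pay for the dangerous $\progB\to\progA$ cross term with the $\progC$-leakage of the mass $S_b$ sitting in the ranges of the $\Xi^2_{x,j}$. The two routes hinge on the same two facts (the cross-coupling is confined to $O(1)$-dimensional affected subspaces per $x$, hence carries a $1/\sqrt n$; and mass in those subspaces is taxed into $\cH^{\progC}$ at rate $\Theta(r)$ via $\sum_P c_P^2\ge r(2-r)/2$), and your AM--GM step is the analogue of the paper's completion of the square. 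What the paper's packaging buys is modularity --- the four claims are re-proved almost verbatim for the negligent oracle in Appendix~\ref{app:negligent} --- whereas your identity is arguably cleaner but requires the more delicate direct estimate of $\langle\alpha_P|\beta_P\rangle$. Two caveats: the identity $\sum_P a_Pb_P=1-r$ you invoke is true but is not among the statements of Claim~\ref{clm:aPbPcP} (it follows from (\ref{eq:abcCoeffs}) by direct computation), and you assert rather than carry out the constant tracking; a rough check (e.g.\ $\|\beta\|^2\le 8S_b/n$, $C'\ge\tfrac r2(\tfrac12 S_b-\tfrac2n)$, then maximizing over $S_b$) does land near your target $\tfrac{39}{40}C'+\tfrac{50}{nr}$ with very little slack, so the final constant should be written out explicitly before the argument is considered complete.
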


\begin{proof}[Proof of Theorem~\ref{thm:mainSingle} given Lemma~\ref{lem:progEvol}]
From Lemma~\ref{lem:progEvol} we see that the success probability $q_{succ}$ is at most 
\[
\frac{2}{n}+\frac{2000\tau}{nr} \le \frac{2000(\tau+1)}{nr}.
\]
Since we want this success probability to be at least $1-\epsilon$, the bound follows.
\end{proof}

The first claim of Lemma~\ref{lem:progEvol}, equality (\ref{lem:progEvol:A}), is trivial. Towards the second claim, inequality (\ref{lem:progEvol:B}), as in the original article, we observe that $\Pi_{succ}$ commutes with both $\Pi^\progC_\tau$ and $\Pi^\progB_\tau+\Pi^\progA_\tau$ and we can write 
\begin{align*}
q_{succ}  
\le  \|\Pi^\progC_\tau|\phi_\tau\>\|^2 + 
2\|\Pi^\progB_\tau|\phi_\tau\>\|^2
+ 2\|\Pi_{succ}\Pi^\progA_\tau\|^2
 \le  \Psi_\tau + 2\|\Pi_{succ}\Pi^\progA_\tau\|^2.
\end{align*}
The claim holds due to
\[
\|\Pi_{succ}\Pi^\progA_\tau\| 
= \max_{x\in[n]} \||f_x\>\<f_x|\unif\>\<\unif|\| = 1/\sqrt{n}.
\]
The proof of the last claim of Lemma~\ref{lem:progEvol} is much more involved than those of the former two, and we devote the reminder of this note for that proof.

Consider the last claim, inequality (\ref{lem:progEvol:C}), and note that $0\le\Psi_t\le40$. Therefore, if $n\le 11$, then $2000/(nr)\ge 2000/11 >40$ and the claim holds. Hence, it is left to prove (\ref{lem:progEvol:C}) only for $n\ge 12$.

\section{Change in the Progress Measure}

In this section, similarly as in the original article, we first decompose the space $\cH^{\progB}$ as $\cH^{\progB,\actv}\oplus\cH^{\progB,\pasv}$. We then provide Claims~\ref{clm:CFixed}--\ref{clm:targ:toProgC}, which regard how the calls to the extended noisy oracle can transfer the probability weight among subspaces $\cH^{\progA}$, $\cH^{\progB,\actv}$, $\cH^{\progB,\pasv}$, $\cH^{\progC}$. Using these claims we then conclude the proof of Lemma~\ref{lem:progEvol}. We leave the proofs of Claims~\ref{clm:CFixed}--\ref{clm:targ:toProgC} to the next section.

\subsection{Active and passive subspaces}
\label{ssec:ActAndPas}

Let us start by decomposing the space $\cH^{\progB}$ as $\cH^{\progB,\actv}\oplus\cH^{\progB,\pasv}$. We will perform this decomposition differently depending on whether the target qubit is noisy ($j=0$) or one of the index qubits is noisy ($j\in\{1,\ldots,\log n\}$).

\paragraph{Noisy target qubit, $j=0$.}

Let us define
\[
|\unif_{x}\>:=\frac{1}{\sqrt{n-1}}\sum_{x'\in[n]\setminus\{x\}}|f_{x'}\>,
\]
and note that 
$\big(|f_x\>\<f_x|+|\unif_{x}\>\<\unif_{x}|\big)|\unif\>=|\unif\>$.
We decompose $\check\Pi^\progB_t\otimes I_\regQ = \check\Pi^{\progB,\actv}_{t} + \check\Pi^{\progB,\pasv}_{t}$, where
\begin{align*}
& \check\Pi^{\progB,\actv}_{t} := \sum_{x\in[n]}\big(|\unif_{x}\>\<\unif_{x}|+|f_x\>\<f_x|-|\unif\>\<\unif|\big) \otimes |x\>\<x| \otimes I_\regQo \otimes \Lambda^{\progA\progB}_{t},  \\
& \check\Pi^{\progB,\pasv}_{t} := \sum_{x\in[n]}
\big(I_\regT-|\unif_{x}\>\<\unif_{x}|-|f_x\>\<f_x|\big) \otimes |x\>\<x| \otimes I_\regQo \otimes \Lambda^{\progA\progB}_{t}.
\end{align*}
We have chosen the above definition of $\check\Pi^{\progB,\actv}_{t}$ to match more closely to the corresponding definition of $\hat\Pi^{\progB,\actv}_{t}$ in the noisy index qubit scenario below. However, if we want the definition of $\check\Pi^{\progB,\actv}_{t}$ to resemble more the definition used in the original article, we can observe that
\begin{align*}
\check\Pi^{\progB,\actv}_{t} = \sum_{x\in[n]}|\tilde f_x\>\<\tilde f_x| \otimes |x\>\<x| \otimes \Lambda^{\progA\progB}_{t}
\end{align*}
where $|\tilde{f}_{x}\>$ is an \emph{approximation} of $|f_x\>$ defined as  
\begin{align*}
|\tilde{f}_{x}\> := \,&
 \sqrt{\frac{n-1}{n}}|f_x\> - \frac1{\sqrt{n}}|\unif_x\>
=
  \frac{\sqrt{n}|f_x\>-|\unif\>}{\sqrt{n-1}}.
\end{align*}
Note that $|\tilde{f}_x\>$ and $|\unif\>$ are orthogonal.

\paragraph{Noisy index qubit, $j\in\{1,\ldots,\log n\}$.}

Recall that $x^j\in[n]$ is such that the binary representation of $x^j$ is obtained by flipping the $j$-th bit in the binary representation of $x$. Let us define
\[
|\unif_{x,x^j}\>:=\frac{1}{\sqrt{n-2}}\sum_{x'\in[n]\setminus\{x,x^j\}}|f_{x'}\>,
\]
and note that
$
\big(|f_x\>\<f_x|+|f_{x^j}\>\<f_{x^j}|+|\unif_{x,x^j}\>\<\unif_{x,x^j}|\big)|\unif\>=|\unif\>$.
We define orthogonal projectors
\begin{align*}
 \hat\Pi^{\progB,\actv}_{t} := & \sum_{x\in[n]}\big(|\unif_{x,x^j}\>\<\unif_{x,x^j}|+|f_x\>\<f_x|+|f_{x^j}\>\<f_{x^j}|-|\unif\>\<\unif|\big) \otimes |x,1\>\<x,1| \otimes \Lambda^{\progA\progB}_{t},  \\
 \hat\Pi^{\progB,\pasv}_{t} := & \sum_{x\in[n]}
\big(I_\regT-|\unif_{x,x^j}\>\<\unif_{x,x^j}|-|f_x\>\<f_x|-|f_{x^j}\>\<f_{x^j}|\big) \otimes |x,1\>\<x,1| \otimes \Lambda^{\progA\progB}_{t}
\\ &
+\big(I_\regT-|\unif\>\<\unif|\big)\otimes I_\regQi\otimes|0\>\<0|_\regQo \otimes \Lambda^{\progA\progB}_{t},
\end{align*}
which give us a decomposition
$\hat\Pi^\progB_t\otimes I_\regQ = \hat\Pi^{\progB,\actv}_{t} + \hat\Pi^{\progB,\pasv}_{t}$.

\paragraph{Invariance of the noisy qubit.}

Note that, in both scenarios, both $\Pi^{\progB,\actv}_{t}$ and $\Pi^{\progB,\pasv}_{t}$ act as the identity on the noisy query qubit. Hence, the Pauli operators $\sigma_I^{(j)},\sigma_X^{(j)},\sigma_Y^{(j)},\sigma_Z^{(j)}$ acting on that very qubit commute with both $\Pi^{\progB,\actv}_{t}$ and $\Pi^{\progB,\pasv}_{t}$.

\subsection{Transitions among progress-defining subspaces}
\label{ssec:transitions}

For both the target qubit noise and the index qubit noise, we claim the following.
First, we claim that some subspaces remain invariant under oracle calls.

\begin{clm}
\label{clm:CFixed}
We have $O_{j,r} \Pi^{\progC}_{t} =  \Pi^{\progC}_{t+1} O_{j,r} \Pi^{\progC}_{t}$, and the images of 
  $\Pi^{\progC}_{t+1} O_{j,r} (\Pi^{\progB}_{t}+\Pi^{\progA}_{t})$
 and
  $\Pi^{\progC}_{t+1} O_{j,r} \Pi^{\progC}_{t}$
   are orthogonal.
\end{clm}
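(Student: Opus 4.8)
The plan is to reduce the entire claim to bookkeeping on the record register, since nothing here depends on the detailed form of the Kraus operators $K^{x,j}_{\beta,P}$. The two facts I would isolate first are that the $(t+1)$-th call $O_{j,r}$ acts as the identity on $\regR_1\ldots\regR_t$ (it merely appends $\regR_{t+1}$), and that $\Pi^\progB_t+\Pi^\progA_t=I_\regT\otimes\Lambda^{\progA\progB}_t$. The structural identity that drives everything is
\[
\Lambda^{\progA\progB}_{t+1}=\Lambda^{\progA\progB}_t\otimes\big(I_{\regR''_{t+1}}\otimes|0\>\<0|_{\regR'_{t+1}}\big),
\]
from which $\Lambda^{\progA\progB}_{t+1}\le\Lambda^{\progA\progB}_t\otimes I_{\regR_{t+1}}$ and, taking complements, $\Lambda^\progC_t\otimes I_{\regR_{t+1}}\le\Lambda^\progC_{t+1}$.

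For the first equality, I would take $|\psi\>$ in the range of $\Pi^\progC_t=I_\regT\otimes\Lambda^\progC_t$. Because $O_{j,r}$ leaves $\regR_1\ldots\regR_t$ untouched, $O_{j,r}|\psi\>$ still lies in the range of $\Lambda^\progC_t\otimes I_{\regR_{t+1}}$, and by the inequality above this is contained in the range of $\Lambda^\progC_{t+1}$. Hence $\Pi^\progC_{t+1}$ fixes $O_{j,r}|\psi\>$, which is exactly the assertion $\Pi^\progC_{t+1}O_{j,r}\Pi^\progC_t=O_{j,r}\Pi^\progC_t$.

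For the orthogonality I would compare the two images inside the record register. By the first part, the image of $\Pi^\progC_{t+1}O_{j,r}\Pi^\progC_t$ equals that of $O_{j,r}\Pi^\progC_t$ and therefore sits in the range of $\Lambda^\progC_t\otimes I_{\regR_{t+1}}$. On the other side, any input to $\Pi^\progC_{t+1}O_{j,r}(\Pi^\progB_t+\Pi^\progA_t)$ lies in the range of $I_\regT\otimes\Lambda^{\progA\progB}_t$; since $O_{j,r}$ preserves $\regR_1\ldots\regR_t$ and $\Lambda^\progC_{t+1}$ commutes with $\Lambda^{\progA\progB}_t\otimes I_{\regR_{t+1}}$ (both are diagonal in the computational basis of the prime qubits $\regR'_s$), this image remains in the range of $\Lambda^{\progA\progB}_t\otimes I_{\regR_{t+1}}$. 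As $\Lambda^{\progA\progB}_t$ and $\Lambda^\progC_t$ are complementary projectors on $\regR_1\ldots\regR_t$, the two ranges are orthogonal, and the claim follows.

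I do not expect a genuine obstacle here: the content is entirely combinatorial record-keeping, and the single point that requires care is to remember that $O_{j,r}$ is an isometry appending $\regR_{t+1}$ rather than a unitary on a fixed space, so that $\Pi^\progC_t$ (on $t$ entries) and $\Pi^\progC_{t+1}$ (on $t+1$ entries) live on different spaces. Once the tensor decomposition of $\Lambda^{\progA\progB}_{t+1}$ is written out, both parts are immediate.
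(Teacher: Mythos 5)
Your proof is correct and follows essentially the same route as the paper's: both rest on the observation that $O_{j,r}$ only appends the new subregister $\regR_{t+1}$ without touching $\regR_1\ldots\regR_t$, so that membership in the range of $\Lambda^{\progC}_t$ (resp.\ $\Lambda^{\progA\progB}_t$) is preserved and the two images stay orthogonal on the old record registers. Your version merely spells out the tensor decomposition of $\Lambda^{\progA\progB}_{t+1}$ and the resulting operator inequalities more explicitly than the paper does.
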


\begin{clm}
\label{clm:OQonPas}
We have $O_{j,r} \Pi^{\progB,\pasv}_{t} =  \Pi^{\progB,\pasv}_{t+1} O_{j,r}$.
\end{clm}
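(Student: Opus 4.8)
The plan is to strip off the record-register bookkeeping, reduce to a purely ``spatial'' identity on $\regFunc\regQ$, and then verify it from the structure of the Kraus operators. Since $O_{j,r}$ acts as the identity on the old record subregisters $\regR_1\ldots\regR_t$ and both passive projectors share the tensor factor $\Lambda^{\progA\progB}_t$ on those registers, I would first factor $\Pi^{\progB,\pasv}_t = S^{\pasv}\otimes\Lambda^{\progA\progB}_t$ and $\Pi^{\progB,\pasv}_{t+1}=S^{\pasv}\otimes\Lambda^{\progA\progB}_t\otimes\big(I_{\regR''_{t+1}}\otimes|0\>\<0|_{\regR'_{t+1}}\big)$, where $S^{\pasv}$ (either $\check S^{\pasv}$ or $\hat S^{\pasv}$) is the part acting on $\regFunc\regQ$. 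Writing $O_{j,r}=M\otimes I_{\regR_1\ldots\regR_t}$ with $M\colon\regFunc\regQ\to\regFunc\regQ\regR_{t+1}$ its spatial part, the common $\Lambda^{\progA\progB}_t$ cancels and the claim reduces to $M\,S^{\pasv}=\big(S^{\pasv}\otimes|0\>\<0|_{\regR'_{t+1}}\big)M$.

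Next I would split $M=M_0+M_1$ by the value $\beta$ written into $\regR'_{t+1}$, that is, $M_\beta:=\sum_x|f_x\>\<f_x|_\regT\otimes\sum_P(K^{x,j}_{\beta,P})_\regQ\otimes|P,\beta\>_{\regR_{t+1}}$. The projector $|0\>\<0|_{\regR'_{t+1}}$ keeps $M_0$ and discards $M_1$, and since the states $|P,\beta\>$ are orthonormal the reduced identity decouples, for each $P\in\Pauli$, into a $\beta=1$ condition $\sum_x(|f_x\>\<f_x|\otimes K^{x,j}_{1,P})\,S^{\pasv}=0$ and a $\beta=0$ condition that $\sum_x|f_x\>\<f_x|\otimes K^{x,j}_{0,P}$ commutes with $S^{\pasv}$.

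Recalling $K^{x,j}_{1,P}=c_P\,\sigma_P^{(j)}\Xi_{x,j}$ and $K^{x,j}_{0,P}=\sigma_P^{(j)}\big(a_P\Pi_{x,j}+b_P\Xi_{x,j}\big)$, and using that $\sigma_P^{(j)}$ commutes with $S^{\pasv}$ (the paragraph on invariance of the noisy qubit), I can pull each $\sigma_P^{(j)}$ outside, whereupon both conditions follow from two facts about the bare operators: (a) $(|f_x\>\<f_x|\otimes\Xi_{x,j})\,S^{\pasv}=0$ and $S^{\pasv}\,(|f_x\>\<f_x|\otimes\Xi_{x,j})=0$ for every $x$, which annihilates the whole $\beta=1$ term as well as the $b_P\Xi_{x,j}$ part of the $\beta=0$ term on both sides; and (b) $\sum_x|f_x\>\<f_x|\otimes\Pi_{x,j}$ commutes with $S^{\pasv}$, which handles the $a_P\Pi_{x,j}$ part. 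I would prove (a) and (b) by direct computation. In every case the factor $|x\>\<x|_\regQi$ (for $j=0$) or $|x,1\>\<x,1|,|x^j,1\>\<x^j,1|$ (for $j\neq0$) inside $\Pi_{x,j}$ and $\Xi_{x,j}$ is matched, through the $\regQi$-content of $S^{\pasv}$, only by the summand at index $x$ (respectively $x$ and $x^j$); at that index the truth-register part of $S^{\pasv}$ is the orthogonal complement of $\mathrm{span}\{|f_x\>,|\unif_x\>\}$ (respectively $\mathrm{span}\{|f_x\>,|f_{x^j}\>,|\unif_{x,x^j}\>\}$), so, using $\<f_x|\unif_x\>=0$ and $\<f_x|f_{x'}\>=\delta_{x,x'}$, the projector $|f_x\>\<f_x|$ kills it and both products in (a) vanish. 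For (b) I would note that $\sum_x|f_x\>\<f_x|\otimes\Pi_{x,j}$ equals $I_{\regT\regQ}$ minus a ``success-type'' projector (namely $\sum_x|f_x\>\<f_x|\otimes|x\>\<x|_\regQi\otimes I_\regQo$ for $j=0$, and $\sum_x(|f_x\>\<f_x|+|f_{x^j}\>\<f_{x^j}|)\otimes|x,1\>\<x,1|$ for $j\neq0$, after reindexing the cross term via $(x^j)^j=x$), and this projector both left- and right-annihilates $S^{\pasv}$ by the same orthogonality, so it---and hence $\sum_x|f_x\>\<f_x|\otimes\Pi_{x,j}$---commutes with $S^{\pasv}$.

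The main obstacle is the index-qubit case $j\in\{1,\ldots,\log n\}$, where $\Pi_{x,j}$ and $\Xi_{x,j}$ simultaneously involve the two indices $x$ and $x^j$ and where $\hat S^{\pasv}$ carries the extra summand $(I_\regT-|\unif\>\<\unif|)\otimes I_\regQi\otimes|0\>\<0|_\regQo$. I would handle this by exploiting the symmetries $(x^j)^j=x$ and $|\unif_{x,x^j}\>=|\unif_{x^j,x}\>$ (so the truth-register projector attached to index $x$ coincides with the one attached to $x^j$, letting the cross terms reindex cleanly), and by observing that the extra $|0\>\<0|_\regQo$ summand contributes nothing to (a) or (b) because $\Xi_{x,j}$ and the non-identity parts of $\Pi_{x,j}$ are supported on the target qubit being $|1\>$. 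The $j=0$ case runs the same way but is simpler, since there $\check\Pi_{x,0}=I_\regQ-|x\>\<x|\otimes\sigma_I$ and $\check\Xi_{x,0}=|x\>\<x|\otimes\sigma_Z$ involve only the single index $x$.
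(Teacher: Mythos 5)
Your proposal is correct and follows essentially the same route as the paper's proof: split $O_{j,r}$ by the bit $\beta$ written into $\regR'_{t+1}$, reduce to showing the $\beta=1$ part annihilates the passive projector and the $\beta=0$ part commutes with it, pull out $\sigma_P^{(j)}$ by the invariance of the noisy qubit, and finish with the orthogonality $(|f_x\>\<f_x|\otimes\Xi_{x,j})\Pi^{\progB,\pasv}=0$ together with $\Pi_{x,j}=I_\regQ-\Xi_{x,j}^2$ (your ``success-type projector'' is exactly $\sum_x|f_x\>\<f_x|\otimes\Xi_{x,j}^2$). No gaps.
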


\noindent
And then, informally speaking, we bound the transitions among various subspaces.

\begin{clm}
 \label{clm:targ:ProgAtoProgB}
We have
$
\| \Pi^{\progB,\actv}_{t+1} O_{j,r} \Pi^{\progA}_{t} \|  \le
2/\sqrt{n}$, and, assuming $n\ge 12$,  we also have $
\| \Pi^{\progB,\actv}_{t+1} O_{j,r} \Pi^{\progB,\actv}_{t} \|  \le 1 - r/9$.
\end{clm}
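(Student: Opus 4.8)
The plan is to turn both estimates into an explicit quadratic form in the coefficients $a_P,b_P$ from (\ref{eq:abcCoeffs}), exploiting that only the $\beta=0$ Kraus operators can contribute. I will spell out the target case $j=0$; the index case $j\in\{1,\dots,\log n\}$ runs identically with the pair $\{x,x^j\}$ and $\hat\Pi_{x,j},\hat\Xi_{x,j}$ replacing $\{x\}$ and $\check\Pi_{x,0},\check\Xi_{x,0}$. The first reduction is that any vector in the range of $\Pi^{\progB,\actv}_{t+1}$ lies in the range of $\Lambda^{\progA\progB}_{t+1}$, which pins the fresh qubit $\regR'_{t+1}$ to $|0\>$; hence in $\Pi^{\progB,\actv}_{t+1}O_{j,r}$ only the operators $K^{x,j}_{0,P}=\sigma_P^{(j)}(a_P\Pi_{x,j}+b_P\Xi_{x,j})$ survive, and since the record labels $|P\>_{\regR''_{t+1}}$ are orthonormal the different $P$ contribute orthogonally, so each squared norm splits as a sum over $P$ (and over the index $x$, which $\regQi$ keeps orthogonal).

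For the bound from $\Pi^{\progA}_t$, I write $K^{x,0}_{0,P}|\psi\>=a_P\sigma_P^{(0)}|\psi\>+|x\>_{\regQi}\otimes\sigma_P^{(0)}(b_P\sigma_Z^{(0)}-a_P)|\psi_x\>$, isolating an $x$-independent piece. Applied to $|\unif\>_\regT$ this piece leaves $\regT$ in $|\unif\>$ and is annihilated by $\Pi^{\progB,\actv}_{t+1}$, whose $\regT$-factor $|\tilde f_x\>\<\tilde f_x|$ is orthogonal to $|\unif\>$. Only the marking piece, supported on the single index $|x\>$, survives; folding in the amplitude $1/\sqrt n$ of $|\unif\>$ on each $|f_x\>$ and the overlap $\<\tilde f_x|f_x\>=\sqrt{(n-1)/n}$ yields
\[
\|\Pi^{\progB,\actv}_{t+1}O_{0,r}\Pi^{\progA}_t|\psi\>\|^2=\tfrac{n-1}{n^2}\Big[\big(1+\textstyle\sum_P b_P^2\big)-2\big(\textstyle\sum_P a_Pb_P\big)\<\psi|\sigma_Z^{(0)}|\psi\>\Big],
\]
which is at most $\tfrac{n-1}{n^2}\cdot 4\le 4/n$ by $\sum_P b_P^2\le1$ and $\sum_P a_Pb_P\le1$ from Claim~\ref{clm:aPbPcP}; this is the claimed $2/\sqrt n$.

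For the bound from $\Pi^{\progB,\actv}_t$, the same calculation applied to a generic unit vector $\sum_x c_x|\tilde f_x\>|x\>_{\regQi}|\eta_x\>$ shows that after projecting back the surviving $\regQo$-amplitude for each $(x,P)$ is $\tfrac1n a_P\sigma_P^{(0)}|\eta_x\>+\tfrac{n-1}{n}b_P\sigma_P^{(0)}\sigma_Z^{(0)}|\eta_x\>$ — note the reversal from the $\progA$ case, the non-marking $a_P$-term now being the one suppressed by $1/n$. Summing in quadrature, the only free parameter is $z=\<\eta|\sigma_Z^{(0)}|\eta\>\in[-1,1]$, and since its coefficient is nonnegative the worst case is $z=1$, giving an upper bound $\tfrac1{n^2}+\tfrac{(n-1)^2}{n^2}\sum_P b_P^2+\tfrac{2(n-1)}{n^2}\sum_P a_Pb_P$ (using $\sum_P a_P^2=1$). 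Bounding $\sum_P a_Pb_P\le1$, this collapses to $1-\tfrac{(n-1)^2}{n^2}\big(1-\sum_P b_P^2\big)$.

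The crux is then to show $1-\sum_P b_P^2$ is large enough. The essential observation is the completeness relation displayed after Claim~\ref{clm:aPbPcP}, which forces $\sum_P b_P^2=1-\sum_P c_P^2$; thus the deficit equals the weight $\sum_P c_P^2$ of the $\beta=1$ operators that leak to $\cH^{\progC}$, precisely the noise-induced loss that ought to drive the contraction. Lower-bounding $\sum_P c_P^2\ge c_X^2+c_Y^2=r(2-r)/2$, the squared norm is at most $1-\tfrac{(n-1)^2}{n^2}\cdot\tfrac{r(2-r)}2$, and a direct one-variable check confirms this is $\le(1-r/9)^2$ once $n\ge12$. The main obstacle is exactly getting a contraction valid all the way up to $r=1$: crude bounds such as $b_I\le1-r/2$, $b_Z\le\sqrt{r/2}$ overshoot near $r=1$, where in fact $\sum_P b_P^2\to0$, so one must pass through the identity $1-\sum_P b_P^2=\sum_P c_P^2$ to keep the estimate clean; the residual inequality in $r$ for fixed $n\ge12$ is then routine.
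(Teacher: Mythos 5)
Your target-qubit argument is correct, and it takes a genuinely different route from the paper's. Where the paper bounds the squared norm by $\sum_{P\in\{I,Z\}}\max\{a_P^2,b_P^2\}+\tfrac4{n^2}\sum_{P\in\{X,Y\}}a_P^2$ and closes with the termwise coefficient bounds of Claim~\ref{clm:aPbPcP} (via Claim~\ref{clm:aPbPcP:cor}), you keep the exact quadratic form in $a_P,b_P$ and finish with the completeness identity $\sum_P(b_P^2+c_P^2)=1$ together with $c_X^2+c_Y^2=r(2-r)/2$. That is an appealing alternative: it ties the contraction of $\cH^{\progB,\actv}$ directly to the weight $\sum_P c_P^2$ that leaks into $\cH^{\progC}$, and your reduction to $1-\tfrac{(n-1)^2}{n^2}\big(1-\sum_P b_P^2\big)$ is exact, since $\tfrac1{n^2}+\tfrac{2(n-1)}{n^2}=1-\tfrac{(n-1)^2}{n^2}$. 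Two small points: $\sum_P a_Pb_P\le 1$ is not literally in Claim~\ref{clm:aPbPcP}, though it follows by Cauchy--Schwarz from $\sum_P a_P^2=1$ and $\sum_P b_P^2\le 1$; and your remark that the paper's ``crude'' bounds fail near $r=1$ is not accurate --- $(1-r/2)^2+r/2+\tfrac{4r}{n^2}\le 1-2r/9$ holds for all $r\in[0,1]$ once $n\ge 12$, so the identity is not strictly necessary, merely cleaner.

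The gap is the index-qubit case, which you dismiss as running ``identically''. It does not. There the $\regT$-block of $\hat\Pi^{\progB,\actv}_t$ is the rank-two projector onto the orthocomplement of $|\unif\>$ inside $\mathrm{span}\{|f_x\>,|f_{x^j}\>,|\unif_{x,x^j}\>\}$, and $F^{x',j}_{0,P}$ acts on those three directions with eigenvalues $-b_P$, $+b_P$, $a_P$. The extremal input is therefore not parametrized by a single number $z=\<\eta|\sigma_Z^{(j)}|\eta\>$ but by a two-dimensional family constrained by orthogonality to $|\unif\>$, and the closed form $1-\tfrac{(n-1)^2}{n^2}(1-\sum_P b_P^2)$ does not carry over: writing the input as $\alpha|f_x\>+\beta|f_{x^j}\>+\gamma|\unif_{x,x^j}\>$ one finds a cross term proportional to $(\beta-\alpha)\gamma\sum_P a_Pb_P$ between the antisymmetric direction and the $|\unif_{x,x^j}\>$ direction, coupled by the constraint $|\beta-\alpha|^2=2-n|\gamma|^2$. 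This term must be estimated separately (an AM--GM step gives a bound of at most $1-(1-\tfrac2n)\sum_P c_P^2\le 1-\tfrac56\cdot\tfrac{r(2-r)}2$ for $n\ge12$, which does suffice), so your strategy does extend, but the extension is a real computation that your proposal has not carried out rather than a relabeling of the target-qubit case.
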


\begin{clm}
\label{clm:targ:toProgC}
 We have 
 $\| \Pi^{\progC}_{t+1} O_{j,r} \Pi^{\progB,\actv}_{t}\|^2 \le 2r$
  and 
$\| \Pi^{\progC}_{t+1} O_{j,r} \Pi^{\progA}_{t}\|^2\le 4r/n$.
 \end{clm}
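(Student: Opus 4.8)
The plan is to exploit that both source subspaces, $\Pi^\progA_t$ and $\Pi^{\progB,\actv}_t$, lie inside the record subspace $\Lambda^{\progA\progB}_t$, where every earlier record qubit $\regR'_s$ (for $s\le t$) is set to $|0\>$. Since $O_{j,r}$ acts as the identity on $\regR_1\ldots\regR_t$ and merely appends the fresh register $\regR_{t+1}$ in state $|P,\beta\>$, the projector $\Pi^\progC_{t+1}=I_\regT\otimes\Lambda^\progC_{t+1}$ can keep only the component with $\regR'_{t+1}=|1\>$, i.e.\ the terms with $\beta=1$. Hence, on these domains,
\[
\Pi^\progC_{t+1}O_{j,r} = \sum_{x\in[n]}|f_x\>\<f_x|_\regT\otimes\sum_{P\in\Pauli}(K^{x,j}_{1,P})_\regQ\otimes|P,1\>_{\regR_{t+1}},
\]
so that only the Kraus operators $K^{x,j}_{1,P}=\sigma_P^{(j)}c_P\Xi_{x,j}$ contribute, and the whole claim is reduced to estimating these.

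Next I would compute the relevant Gram operator. Writing $A:=\Pi^\progC_{t+1}O_{j,r}\Pi^{\progB,\actv}_t$, the orthogonality of the $|f_x\>$ (in $\regT$) and of the $|P,1\>$ (in $\regR_{t+1}$), together with $(\sigma_P^{(j)})^*\sigma_P^{(j)}=I$ and the Hermiticity of $\Xi_{x,j}$, collapses the double sum into
\[
A^*A=\Big(\sum_{P\in\Pauli}c_P^2\Big)\,\Pi^{\progB,\actv}_t\,M\,\Pi^{\progB,\actv}_t,
\qquad
M:=\sum_{x\in[n]}|f_x\>\<f_x|_\regT\otimes(\Xi_{x,j}^2)_\regQ .
\]
Because the $|f_x\>\<f_x|$ are mutually orthogonal and each $\Xi_{x,j}^2$ is a projector (indeed $\Xi_{x,j}^4=\Xi_{x,j}^2$), one checks $M^2=M$, so $\|M\|=1$ and hence $\|\Pi^{\progB,\actv}_t M\Pi^{\progB,\actv}_t\|\le 1$. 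Combined with $\sum_P c_P^2\le 2r$ from Claim~\ref{clm:aPbPcP}, this yields $\|A\|^2=\|A^*A\|\le 2r$, the first inequality. Note that the detailed structure of the active subspace plays no role here; I only use that $\Pi^{\progB,\actv}_t$ is a projector.

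For the second inequality I would run the identical reduction with $B:=\Pi^\progC_{t+1}O_{j,r}\Pi^\progA_t$, obtaining $B^*B=(\sum_P c_P^2)\,\Pi^\progA_t M\Pi^\progA_t$. Using $\Pi^\progA_t=|\unif\>\<\unif|_\regT\otimes\Lambda^{\progA\progB}_t$ and $\<\unif|f_x\>=1/\sqrt n$, the $\regT$-sandwich evaluates explicitly to
\[
\Pi^\progA_t M\Pi^\progA_t=\frac1n\,|\unif\>\<\unif|_\regT\otimes\Big(\sum_{x\in[n]}\Xi_{x,j}^2\Big)_\regQ\otimes\Lambda^{\progA\progB}_t .
\]
It then remains to bound $\|\sum_x\Xi_{x,j}^2\|$, and here I would split into the two scenarios: for $j=0$ one has $\check\Xi_{x,0}^2=|x\>\<x|\otimes I_\regQo$, so the sum telescopes to $I_\regQ$ with norm $1$; for an index qubit, $\hat\Xi_{x,j}^2=|x,1\>\<x,1|+|x^j,1\>\<x^j,1|$ and, since $x\mapsto x^j$ is a bijection of $[n]$, the sum equals $2\,I_\regQi\otimes|1\>\<1|_\regQo$ with norm $2$. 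Taking the worst case $2$ and again $\sum_P c_P^2\le 2r$ gives $\|B\|^2\le 2r\cdot 2/n=4r/n$.

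The computations are essentially bookkeeping, and the single genuine subtlety I would flag is the opening reduction: one must confirm that no transition into $\cH^\progC$ can occur through the $\beta=0$ Kraus branch. This hinges precisely on $\Pi^{\progB,\actv}_t,\Pi^\progA_t\subseteq\Lambda^{\progA\progB}_t$ and on $O_{j,r}$ leaving the earlier record registers untouched; everything after that follows from orthogonality and the one estimate $\sum_P c_P^2\le 2r$. The mild asymmetry between the two bounds (factor $4$ rather than $2$ in the $\Pi^\progA_t$ case) comes solely from the index-qubit scenario, where $\|\sum_x\Xi_{x,j}^2\|=2$.
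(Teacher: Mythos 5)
Your proof is correct and follows essentially the same route as the paper's: restrict to the $\beta=1$ Kraus branch (justified by $O_{j,r}$ leaving earlier record registers untouched), factor out $\sum_{P\in\Pauli}c_P^2\le 2r$, and then bound the residual norm separately for the two scenarios, obtaining $1$ for the active-$\progB$ source and $1/n$ (target qubit) versus $2/n$ (index qubit) for the $\progA$ source. Packaging the argument through the Gram operator $A^*A$ and the projector $M=\sum_x|f_x\>\<f_x|\otimes\Xi_{x,j}^2$ is only a cosmetic variation on the paper's direct norm computation with $\sum_x|f_x\>\<f_x|\otimes\Xi_{x,j}$.
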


We prove these claims in Section~\ref{sec:ProofsOfClaims}. Before we do that, let us use them to conclude the proof of Lemma~\ref{lem:progEvol} and, in turn, Theorem~\ref{thm:mainSingle}.

\begin{proof}[Proof of Lemma~\ref{lem:progEvol} given Claims~\ref{clm:CFixed}--\ref{clm:targ:toProgC}]
Recall that $|\phi_{t+1}\> = U_{t+1}O_{j,r}|\phi_t\>$ and that $U_{t+1}$ commutes with both $\Pi^\progC_{t+1}$ and $\Pi^\progB_{t+1}$. Hence we can express 
$\Psi_{t+1}$ as 
\begin{align*}
\Psi_{t+1}  =\, &
\|\Pi^\progC_{t+1} U_{t+1}O_{j,r}|\phi_t\>\|^2+40\|\Pi^\progB_{t+1} U_{t+1}O_{j,r} |\phi_t\>\|^2
\\
=\, &
\|\Pi^\progC_{t+1} O_{j,r}|\phi_t\>\|^2
+40\|\Pi^{\progB,\actv}_{t+1} O_{j,r} |\phi_t\>\|^2
+40\|\Pi^{\progB,\pasv}_{t+1} O_{j,r} |\phi_t\>\|^2
\\
=\, &
\|\Pi^\progC_{t+1} O_{j,r} \Pi^{\progC}_{t} |\phi_t\>\|^2
+ \|\Pi^\progC_{t+1} O_{j,r}(\Pi^{\progB,\actv}_{t}+\Pi^{\progA}_{t})|\phi_t\>\|^2
\\ & \qquad
+40\|\Pi^{\progB,\actv}_{t+1} O_{j,r} (\Pi^{\progB,\actv}_{t}+\Pi^{\progA}_{t}) |\phi_t\>\|^2
+40\|\Pi^{\progB,\pasv}_{t} |\phi_t\>\|^2
\\
=\, &
\Psi_t
+ \|\Pi^\progC_{t+1} O_{j,r}(\Pi^{\progB,\actv}_{t}+\Pi^{\progA}_{t})|\phi_t\>\|^2
-40 \|\Pi^{\progB,\actv}_{t} |\phi_t\>\|^2
\\ & \qquad
+40 \|\Pi^{\progB,\actv}_{t+1} O_{j,r} (\Pi^{\progB,\actv}_{t}+\Pi^{\progA}_{t}) |\phi_t\>\|^2,
  \end{align*}
  where the third equality is because $\Pi^\progC_{t+1} O_{j,r} \Pi^{\progC}_{t}$ and $\Pi^\progC_{t+1} O_{j,r} (\Pi^{\progB,\actv}_{t}+ \Pi^{\progA}_{t})$ have orthogonal images (Claim~\ref{clm:CFixed}) and because the passive space remains invariant under the extended noisy oracle call (Claim~\ref{clm:OQonPas}); the fourth equality uses Claim~\ref{clm:CFixed} to get
$
  \|\Pi^\progC_{t+1} O_{j,r} \Pi^{\progC}_{t} |\phi_t\>\|=\|\Pi^{\progC}_{t} |\phi_t\>\|.
  $

The above chain of equalities allows us to express $\Psi_{t+1}-\Psi_t$ using three norms. We bound two of them using the triangle inequality and Claims~\ref{clm:targ:ProgAtoProgB} and~\ref{clm:targ:toProgC}.
  First,   we have
\[
  \|\Pi^\progC_{t+1} O_{j,r}(\Pi^{\progB,\actv}_{t}+\Pi^{\progA}_{t})|\phi_t\>\|
\le
  \|\Pi^\progC_{t+1} O_{j,r} \Pi^{\progB,\actv}_{t}\|
  \!\cdot\!  \|\Pi^{\progB,\actv}_{t}|\phi_t\>\|
  +   \|\Pi^\progC_{t+1} O_{j,r} \Pi^{\progA}_{t}\|,
\]
and, in turn, taking its square, we get 
\begin{align*}
  \|\Pi^\progC_{t+1} O_{j,r}(\Pi^{\progB,\actv}_{t}+\Pi^{\progA}_{t})|\phi_t\>\|^2
& \le
  2\|\Pi^\progC_{t+1} O_{j,r} \Pi^{\progB,\actv}_{t}\|^2
  \!\cdot\!  \|\Pi^{\progB,\actv}_{t}|\phi_t\>\|^2
  +  2 \|\Pi^\progC_{t+1} O_{j,r} \Pi^{\progA}_{t}\|^2
\\ & \le
4r
  \!\cdot\!  \|\Pi^{\progB,\actv}_{t}|\phi_t\>\|^2
  +  8r/n,
\end{align*}
where the latter inequality is due to Claim~\ref{clm:targ:toProgC}.
Similarly, for $n\ge 12$, we have
  \begin{align*}
  \|\Pi^{\progB,\actv}_{t+1} O_{j,r} (\Pi^{\progB,\actv}_{t}+\Pi^{\progA}_{t}) |\phi_t\>\|
 & \le
 \|\Pi^{\progB,\actv}_{t+1} O_{j,r} \Pi^{\progB,\actv}_{t}\|\cdot \|\Pi^{\progB,\actv}_{t} |\phi_t\>\| +   \|\Pi^{\progB,\actv}_{t+1} O_{j,r} \Pi^{\progA}_{t}\|
 \\ & \le
\Big(1-\frac{r}{9}\Big)\|\Pi^{\progB,\actv}_{t} |\phi_t\>\| +   2/\sqrt{n},
  \end{align*}
where the latter inequality is due to Claim~\ref{clm:targ:ProgAtoProgB}.

Using the two inequalities above and observing that $(1-r/9)^2\le 1-2r/9 +r/81 \le 1-r/5$, we get 
\begin{align*}
\Psi_{t+1}  - \Psi_t 
\le\, &
4r \!\cdot\!  \|\Pi^{\progB,\actv}_{t}|\phi_t\>\|^2 + 8r/n
-40 \|\Pi^{\progB,\actv}_{t} |\phi_t\>\|^2
\\ &
+40
  \bigg(
\Big(1-\frac{r}{9}\Big)\|\Pi^{\progB,\actv}_{t} |\phi_t\>\| +   2/\sqrt{n}
 \bigg)^2
\\
\le\, &
4r
  \!\cdot\!  \|\Pi^{\progB,\actv}_{t}|\phi_t\>\|^2
  +  \frac{8}{nr}
  -40 \|\Pi^{\progB,\actv}_{t} |\phi_t\>\|^2
\\ &
+40
  \bigg(
\Big(1-\frac{r}{5}\Big)\|\Pi^{\progB,\actv}_{t} |\phi_t\>\|^2 + \frac{4}{\sqrt n}\|\Pi^{\progB,\actv}_{t} |\phi_t\>\|  + \frac{4}{nr} \bigg)
\\ = \, &
-\bigg(2\sqrt{r}\|\Pi^{\progB,\actv}_{t} |\phi_t\>\|
- \frac{40}{\sqrt{nr}}
\bigg)^2  
+ \frac{1768}{nr}
\\ \le \, & \frac{2000}{nr}. \qedhere
  \end{align*}  
\end{proof}

\section{Proofs of Claims~\ref{clm:CFixed}--\ref{clm:targ:toProgC}}
\label{sec:ProofsOfClaims}

Here we restate and prove the four claims used in the proof of Lemma~\ref{lem:progEvol}.

\bigskip
\noindent
\textbf{Claim~\ref{clm:CFixed}} (restated)\textbf{.}
\emph{
We have $O_{j,r} \Pi^{\progC}_{t} =  \Pi^{\progC}_{t+1} O_{j,r} \Pi^{\progC}_{t}$, and the images of 
  $\Pi^{\progC}_{t+1} O_{j,r} (\Pi^{\progB}_{t}+\Pi^{\progA}_{t})$
 and
  $\Pi^{\progC}_{t+1} O_{j,r} \Pi^{\progC}_{t}$
   are orthogonal.
} 

\begin{proof}
Essentially, the claim follows from definitions of $\Lambda^{\progA\progB}_t$ and $\Lambda^{\progC}_t$ and the observation that the extended noisy oracle call only appends new entries to the record register, never modifying the existing ones. This means that, if there was $s\in\{1,2,\ldots,t\}$ with the register $\regR'_s$ in state $|1\>$ already before the oracle call, then that register will remain in state $|1\>$ after the oracle call too, implying the former part of the claim.
Similarly we show the latter part: the content of the record registers $\regR_1\regR_2\ldots\regR_t$ of the states in the images of $\Pi^{\progB}_{t}+\Pi^{\progA}_{t}$ and $\Pi^{\progC}_{t}$ are orthogonal, and they will remain unchanged after applying $\Pi^{\progC}_{t+1} O_{j,r} $.
\end{proof}

\noindent
\textbf{Claim~\ref{clm:OQonPas}} (restated)\textbf{.}
\emph{
We have $O_{j,r} \Pi^{\progB,\pasv}_{t} =  \Pi^{\progB,\pasv}_{t+1} O_{j,r}$.
} 

\begin{proof} 
First, note that $\Pi_{t+1}^{\progB,\pasv}=\Pi_{t}^{\progB,\pasv}\otimes\sum_{P\in\Pauli}|P,0\>\<P,0|$, 
and, second, recall the expression for $O_{j,r}$ from Section~\ref{sec:recReg}, 
which, using the expressions for $K^{x}_{1,P}$ and $K^{x}_{0,P}$, we can rewrite it as 
\begin{align*}
O_{j,r} = \, &
\sum_{P\in\Pauli} \bigg[
\sigma_P^{(j)}
\sum_{x\in[n]}|f_x\>\<f_x| \otimes c_P\Xi_{x,j}
\bigg]
\otimes |P,1\>
\\ & +
\sum_{P\in\Pauli} \bigg[
\sigma_P^{(j)}
\sum_{x\in[n]}|f_x\>\<f_x| \otimes (a_P\Pi_{x,j}+b_P\Xi_{x,j})
\bigg]
\otimes |P,0\>.
\end{align*}
It suffices to show that, for every $P\in\Pauli$, the product of the operator in the former square brackets and  $\Pi^{\progB,\pasv}_t$ is $0$, while the operator in the latter square brackets commutes with $\Pi^{\progB,\pasv}_t$. 

Even though the definitions of $\Xi_{x,j}$ and $\Pi^{\progB,\pasv}$ differ in the noisy target qubit and the noisy index qubit scenarios, in both scenarios we have
\[
(|f_x\>\<f_x|\otimes \Xi_{x,j}) \Pi^{\progB,\pasv} = 0 = (|f_x\>\<f_x|\otimes \Xi_{x,j}^2) \Pi^{\progB,\pasv}.
\]
This already satisfies the requirement on the former square bracket.

As for the latter square bracket, because  $\Pi^{\progB,\pasv}_t$ acts as the identity on the $j$-th query qubit and, thus, the Pauli operators $\sigma^{(j)}_I,\sigma^{(j)}_X,\sigma^{(j)}_Y,\sigma^{(j)}_Z$  commute with $\Pi^{\progB,\pasv}_t$, it is left to show that $\sum_{x\in[n]}|f_x\>\<f_x| \otimes \Pi_{x,j}$ commutes with $\Pi^{\progB,\pasv}_t$.
This follows from both of these operators being Hermitian and $\Pi_{x,j}=I_\regQ-\Xi_{x,j}^2$, namely,
\begin{multline*}
\Big(\sum_{x\in[n]}|f_x\>\<f_x| \otimes \Pi_{x,j}\Big)\Pi^{\progB,\pasv}_t
= \Big(\sum_{x\in[n]}|f_x\>\<f_x| \otimes \big(I_\regQ-\Xi^2_{x,j}\big)\Big)\Pi^{\progB,\pasv}_t
\\
= \Big(\sum_{x\in[n]}|f_x\>\<f_x| \otimes I_\regQ\Big)\Pi^{\progB,\pasv}_t
= \Pi^{\progB,\pasv}_t.
\qedhere
\end{multline*}
\end{proof}

\subsection{Getting to and remaining in subspace $\cH^{\progB,\actv}$}

This section is devoted to proving Claim~\ref{clm:targ:ProgAtoProgB}, and we leave the proof of Claim~\ref{clm:targ:toProgC} to Secion~\ref{sec:targ:toProgC}.
Let us define the ``Pauli-$\sigma_P$-free'' part of Kraus operators as
\[
F_{0,P}^{x,j} :=  a_P \Pi_{x,j} + b_P \Xi_{x,j}
\qqAnd
F_{1,P}^{x,j} :=   c_P \Xi_{x,j}
\]
so that we have $K_{\beta,P}^{x,j} = \sigma_P^{(j)} F_{\beta,P}^{x,j}$ for all $P\in\Pauli$ and $\beta\in\{0,1\}$. 
The operators $F_{\beta,P}^{x,j}$ are diagonal in the computational basis.

Claims~\ref{clm:targ:ProgAtoProgB} and \ref{clm:targ:toProgC} concern both $ O_{j,r}\Pi^{\progA}_t$ and $ O_{j,r}\Pi^{\progB,\pasv}_t$. To simultaneously make statements about both of these operators, let $\Pi^{\mathfrak{Lab}}_t\in\big\{\Pi^{\progA}_t,\Pi^{\progB,\actv}_t\big\}$, where $\progLab$ stands for ``Label''.

\bigskip
\noindent
\textbf{Claim~\ref{clm:targ:ProgAtoProgB}} (restated)\textbf{.}
\emph{
We have
$\| \Pi^{\progB,\actv}_{t+1} O_{j,r} \Pi^{\progA}_{t} \|  \le
2/\sqrt{n}$, and, assuming $n\ge 12$,  we also have $
\| \Pi^{\progB,\actv}_{t+1} O_{j,r} \Pi^{\progB,\actv}_{t} \|  \le 1 - r/9$.
}

\begin{proof} 
Because the states $|0,P\>_{\regR_{t+1}}$ introduced by $O_{j,r}$ are clearly orthogonal for distinct $P$ and because Pauli operators $\sigma_P^{(j)}$ commute with $\Pi^{\progB,\actv}$, we have
\begin{align*}
\| \Pi^{\progB,\actv}_{t+1}  O_{j,r} \Pi_t^{\progLab}\|^2
 & \le 
 \sum_{P\in\Pauli}\bigg\|
 \sum_{x'\in [n]}
\Pi^{\progB,\actv}_{t} 
  \big( |f_{x'}\>\<f_{x'}| \otimes 
  F_{0,P}^{x',j} \big)
  \Pi_t^{\progLab}\bigg\|^2.
\end{align*}
Note the change of the subscript from $t+1$ in $\Pi^{\progB,\actv}_{t+1}$ to $t$ in $\Pi^{\progB,\actv}_{t}$, which is for the purely technical reason that we are dropping the register $\regR_{t+1}$.

For any given $x\in[n]$, we note that $|x\>\<x|_\regQi$ commutes with all $\Pi^{\progB,\actv}_{t}$, $F_{0,P}^{x',j}$, and $\Pi_t^{\progLab}$.
Hence, due to symmetry, we have 
\[
\bigg\|
 \sum_{x'\in [n]}
\Pi^{\progB,\actv}_{t} 
  \big( |f_{x'}\>\<f_{x'}| \otimes 
  F_{0,P}^{x',j} \big)
  \Pi_t^{\progLab}\bigg\|
  =
\bigg\|
 \sum_{x'\in [n]}
 \big(\<x|\otimes I\big)
\Pi^{\progB,\actv}_{t} 
  \big( |f_{x'}\>\<f_{x'}| \otimes 
  F_{0,P}^{x',j} \big)
  \Pi_t^{\progLab}
  \big(|x\>\otimes I\big) \bigg\|.
\]

\paragraph{Noisy target qubit.}
Note that, first, we have 
\[
\check\Pi^{\progB,\actv}_{t} \big(|x\>\otimes I\big)= |\tilde f_x,x\>\<\tilde f_x|\otimes I_\regQo \otimes \Lambda^{\progA\progB}_{t}
\qqAnd
\check\Pi^{\progA}_{t} \big(|x\>\otimes I\big) = |\unif,x\>\<\unif|\otimes I_\regQo \otimes \Lambda^{\progA\progB}_{t}
\]
and, second, we have
\[
\check F_{0,P}^{x,j} \big(|x\>\otimes I_{\regQo}\big)= b_P |x\>\otimes \sigma_Z
\qqAnd
\check F_{0,P}^{x',j} \big(|x\>\otimes I_{\regQo}\big)= a_P |x\>\otimes \sigma_I
\]
for $x'\ne x$. Since $\sum_{x'\in [n]\setminus\{x\}}  |f_{x'}\>\<f_{x'}| = I_\regT - |f_x\>\<f_x|$, we have
\begin{align*}
\| \check\Pi^{\progB,\actv}_{t+1}  O_{j,r} \check\Pi_t^{\progA}\|^2
 & \le 
 \sum_{P\in\Pauli}\bigg\|
\big(\<\tilde{f}_{x}|\otimes I\big) 
  \Big( b_P |f_x\>\<f_x|\otimes \sigma_Z 
  \\ & \hspace{100pt}
  +  a_P \big(I_\regT- |f_{x}\>\<f_{x}|\big) \otimes \sigma_I
\Big)\big(|\unif\>\otimes I\big)\bigg\|^2
\\  & =
 \sum_{P\in\Pauli}\bigg\|
\<\tilde{f}_{x}| 
  \Big( b_P |f_x\>\<f_x|\otimes \sigma_Z -  a_P  |f_{x}\>\<f_{x}| \otimes \sigma_I
\Big)|\unif\>\bigg\|^2
\\  & =
 \sum_{P\in\Pauli}
 |\<\tilde{f}_{x}|f_x\>|^2  \cdot
  |\<f_{x}|\unif\>|^2  \cdot
  \| b_P \sigma_Z -  a_P \sigma_I\|^2
\\  & =
 \sum_{P\in\Pauli}
 \frac{n-1}n \cdot  \frac1n  \cdot  (a_P+b_P)^2  
\\  & \le
\frac2n
 \sum_{P\in\Pauli}
 (a_P^2+b_P^2) \le \frac4n,
\end{align*}
where the last inequality is due to Claim~\ref{clm:aPbPcP}.
Similarly, because $b_X=0$ and $b_Y=0$, we have
\begin{align*}
\| \check\Pi^{\progB,\actv}_{t+1}  O_{j,r} \check\Pi_t^{\progB,\actv}\|^2
 & \le 
 \sum_{P\in\Pauli}\bigg\|
\big(\<\tilde{f}_{x}|\otimes I\big) 
  \Big( b_P |f_x\>\<f_x|\otimes \sigma_Z 
    \\ & \hspace{100pt} +  a_P 
  \big(I_\regT- |f_{x}\>\<f_{x}|\big)  \otimes \sigma_I
\Big)\big(|\tilde{f}_x\>\otimes I\big)\bigg\|^2
\\ & \le 
 \sum_{P\in\{I,Z\}}\Big\|
 b_P |f_x\>\<f_x|\otimes \sigma_Z +  a_P 
 \big(I_\regT- |f_{x}\>\<f_{x}|\big) \otimes \sigma_I
\Big\|^2
\\ & \quad +
 \sum_{P\in\{X,Y\}} a_P^2
 \bigg\|
\<\tilde{f}_{x}| 
 \big(I_\regT- |f_{x}\>\<f_{x}|\big) |\tilde{f}_x\> \bigg\|^2
\\ & =
 \sum_{P\in\{I,Z\}} \max\{a_P^2,b_P^2\} +
 \sum_{P\in\{X,Y\}} a_P^2
 \Big(1 -
\frac{n-1}{n}
\Big)^2.
  \end{align*}
 The following claim, which we will also use for the noisy index qubit scenario, lets us conclude that
  $\| \check\Pi^{\progB,\actv}_{t+1} O_{j,r} \check\Pi^{\progB,\actv}_{t} \|  \le 1 - r/9$ for $n\ge12$.
  
  \begin{clm}
  \label{clm:aPbPcP:cor}
Assuming $n\ge 12$, we have 
\[
\sum_{P\in\{I,Z\}} \max\{a_P^2,b_P^2\} +
 \frac{4}{n^2}\sum_{P\in\{X,Y\}} a_P^2 
 \le \Big(1-\frac{r}{9}\Big)^2.
\]
  \end{clm}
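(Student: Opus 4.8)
The plan is to first collapse the two maxima using a pointwise relation among the coefficients, and then reduce the whole inequality to an elementary one-variable bound.

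First I would record the identity $a_P^2 = b_P^2 + c_P^2$, valid for every $P\in\Pauli$. For $P\in\{X,Y\}$ this is immediate from (\ref{eq:abcCoeffs}), since there $b_P=0$ and $a_P=c_P$; for $P\in\{I,Z\}$ it is a short polynomial check, amounting to $4(2-3r+r^2)^2 + r^2(8-12r+5r^2) = (4-6r+3r^2)^2$ for $P=I$ and to $4(1-r)^2 + r(4-3r) = (2-r)^2$ for $P=Z$. Conceptually this is just the per-label version of the normalization $\sum_P a_P^2 = \sum_P(b_P^2+c_P^2)=1$ that underlies Lemma~\ref{lem:KrausAlt}. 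In particular $a_P^2\ge b_P^2$, so $\max\{a_I^2,b_I^2\}=a_I^2$ and $\max\{a_Z^2,b_Z^2\}=a_Z^2$, and the left-hand side of the claim equals $a_I^2 + a_Z^2 + \tfrac{4}{n^2}(a_X^2 + a_Y^2)$.

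Next I would substitute the explicit values. The two diagonal terms combine cleanly:
\[
a_I^2 + a_Z^2 = \frac{4-6r+3r^2}{4} + \frac{r(2-r)}{4} = 1 - r + \frac{r^2}{2},
\]
while $\tfrac{4}{n^2}(a_X^2+a_Y^2) = \tfrac{2r(2-r)}{n^2}$. Expanding $(1-\tfrac r9)^2 = 1 - \tfrac{2r}9 + \tfrac{r^2}{81}$, the claim becomes $1 - r + \tfrac{r^2}2 + \tfrac{2r(2-r)}{n^2}\le 1 - \tfrac{2r}9 + \tfrac{r^2}{81}$. The case $r=0$ is an equality, so for $r>0$ I would divide through by $r$ and rearrange to the equivalent form
\[
\frac{79r}{162} + \frac{4-2r}{n^2} \le \frac{7}{9}.
\]

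Finally I would bound the left-hand side using $r\le 1$ and $n\ge 12$, so $n^2\ge 144$: it is at most $\tfrac{79}{162} + \tfrac{4}{144} = \tfrac{79}{162} + \tfrac1{36} < 0.52 < \tfrac79$, which finishes the proof. There is no real obstacle here: the only mildly non-routine steps are the two polynomial identities establishing $a_P^2 = b_P^2+c_P^2$ for $P\in\{I,Z\}$, and those are short. The large slack in the final estimate shows that the hypothesis $n\ge 12$ is comfortably more than sufficient; it serves only to make the $1/n^2$ term negligible against the constant gap between $\tfrac79$ and $\tfrac{79}{162}$.
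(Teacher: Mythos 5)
Your proof is correct, but it takes a different route from the paper's. The paper simply plugs in the bounds already recorded in Claim~\ref{clm:aPbPcP} --- namely $a_I,b_I\le 1-\tfrac r2$ and $a_Z,b_Z,a_X,a_Y\le\sqrt{r/2}$ --- to get $\bigl(1-\tfrac r2\bigr)^2+\tfrac r2+\tfrac{4}{n^2}r$, and then uses $r^2\le r$ and $n\ge 12$ to land on $1-\tfrac{2r}{9}\le\bigl(1-\tfrac r9\bigr)^2$; it never needs to decide which of $a_P^2,b_P^2$ realizes the maximum. You instead establish the exact per-label identity $a_P^2=b_P^2+c_P^2$ (which is indeed stronger than the aggregate normalization the paper derives from Lemma~\ref{lem:KrausAlt}, and your two polynomial checks for $P\in\{I,Z\}$ are correct), use it to collapse each maximum to $a_P^2$, and then evaluate the left-hand side in closed form as $1-r+\tfrac{r^2}{2}+\tfrac{2r(2-r)}{n^2}$ before comparing with $\bigl(1-\tfrac r9\bigr)^2$. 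The paper's argument is shorter given that Claim~\ref{clm:aPbPcP} is already on the table; yours is self-contained at the level of the coefficient definitions in (\ref{eq:abcCoeffs}), exhibits exactly how much slack the inequality has, and the identity $a_P^2=b_P^2+c_P^2$ is a nice structural observation (it is the statement that $\{K^{x,j}_{\beta,P}\}_\beta$ is trace-preserving label by label) that the paper does not record. Both arguments use $r\le 1$ and $n\ge 12$ in the same mild way.
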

  
  \begin{proof}
  Using, first, Claim~\ref{clm:aPbPcP} and, then, $r^2\le r$,  we get 
   \begin{multline*}
   \sum_{P\in\{I,Z\}} \max\{a_P^2,b_P^2\} +
 \frac{4}{n^2}\sum_{P\in\{X,Y\}} a_P^2 
 \le 
\Big(1-\frac r2\Big)^2  +\frac r2 + \frac{4}{n^2} \Big(\frac r2+\frac r2\Big) 
\\ \le 
1-r+\frac r4   +\frac r2 + \frac{4}{12^2}r 
 = 
1-\frac {2r}9
\le \Big(1-\frac{r}{9}\Big)^2.
\qedhere
 \end{multline*}

 \end{proof}

\paragraph{Noisy index qubit.}

To begin with, because $\hat\Pi_t^{\progB,\actv}(I\otimes |0\>_{\regQo})=0$, we are only interested in the case when the target qubit is in state $|1\>$. 
Note that, first, we have 
\[
\hat\Pi^{\progB,\actv}_{t} \big(|x,1\>\otimes I\big)= 
\big(|\unif_{x,x^j}\>\<\unif_{x,x^j}|+|f_x\>\<f_x|+|f_{x^j}\>\<f_{x^j}|-|\unif\>\<\unif|\big) \otimes |x,1\>\otimes \Lambda^{\progA\progB}_{t},
\]
\[
\hat\Pi^{\progA}_{t} \big(|x,1\>\otimes I\big) = |\unif,x,1\>\<\unif| \otimes \Lambda^{\progA\progB}_{t},
\]
second, we have
\[
\hat F_{0,P}^{x,j} |x,1\> = - b_P |x,1\>,
\qquad
\hat F_{0,P}^{x^j,j} |x,1\> = b_P |x,1\>,
\qqAnd
\hat F_{0,P}^{x',j} |x,1\> = a_P |x,1\>
\]
for $x'\notin\{x,x^j\}$, and, third, we have
\[
\sum_{x'\in[n]\setminus\{x,x^j\}} |f_{x'}\>\<f_{x'}| = I_\regT-|f_{x}\>\<f_{x}|-|f_{x^j}\>\<f_{x^j}|.
\]
Using these observations, we get
\begin{multline*}
\| \hat\Pi^{\progB,\actv}_{t+1}  O_{j,r} \hat\Pi_t^{\progA}\|^2 
   \le 
 \sum_{P\in\Pauli}\bigg\|
 \big(|\unif_{x,x^j}\>\<\unif_{x,x^j}|+|f_x\>\<f_x|+|f_{x^j}\>\<f_{x^j}|-|\unif\>\<\unif|\big)
 \\ 
  \cdot\Big(
  -b_P |f_{x}\>\<f_{x}|+b_P |f_{x^j}\>\<f_{x^j}| +a_P\big(I_\regT-|f_{x}\>\<f_{x}|- |f_{x^j}\>\<f_{x^j}|\big) \Big)
 \cdot|u\>\<u|\bigg\|^2.
\end{multline*}
Note that the operator given by the first parenthesis is orthogonal to $|u\>\<u|$, therefore the $I_\regT$ term gets eliminated. Thus, when we multiply the operator under the norm by $-1$ and drop the term $\<u|$, we get
\begin{align*}
& 
 \big(|\unif_{x,x^j}\>\<\unif_{x,x^j}|+|f_x\>\<f_x|+|f_{x^j}\>\<f_{x^j}|-|\unif\>\<\unif|\big)
 \\ & \hspace{65pt}
  \cdot\big(
  b_P |f_{x}\>\<f_{x}|-b_P |f_{x^j}\>\<f_{x^j}| +a_P |f_{x}\>\<f_{x}|+ a_P |f_{x^j}\>\<f_{x^j}|\big) 
 \cdot|u\>
\\ & = 
\big(|f_x\>\<f_x|+|f_{x^j}\>\<f_{x^j}|-|\unif\>\<\unif|\big)
\\ & \hspace{65pt}
 \cdot
\big(
a_P |f_{x}\>
+ a_P |f_{x^j}\>
+  b_P |f_{x}\>
- b_P |f_{x^j}\>
\big)/\sqrt{n}
\\ & =
\frac1{\sqrt n} \Big( 
 ( a_P+b_P) |f_{x}\>
+ (a_P-b_P) |f_{x^j}\>
- 2 a_P \frac{|\unif\>}{\sqrt{n}}
\Big)
\\ & =
\frac1{\sqrt n} \bigg( 
 \Big( \frac{n-2}{n}a_P+b_P\Big) |f_{x}\>
+ \Big(\frac{n-2}{n} a_P-b_P\Big) |f_{x^j}\>
-2 a_P \frac{\sqrt{n-2}|\unif_{x,x^j}\>}{n}
\bigg).
\end{align*}
The norm of this vector is 
\begin{align*}
\frac1n  \bigg( 
\Big(\frac{n-2}{n} a_P+ b_P\Big)^2
+ \Big(\frac{n-2}{n} a_P- b_P\Big)^2
+ \Big(\frac{2\sqrt{n-2}}{n} a_P\Big)^2
\bigg)
 = 
 \frac2n \Big( 
\frac{n-2}{n} a_P^2
+  b_P^2
\Big),
\end{align*}
which is at most
$\frac2n(  a_P^2 +  b_P^2)$.
We conclude bounding $\| \hat\Pi^{\progB,\actv}_{t+1}  O_{j,r} \hat\Pi_t^{\progA}\|$ by recalling $\sum_{P\in\Pauli}a_P^2=1$ and $\sum_{P\in\Pauli}b_P^2\le 1$ from Claim~\ref{clm:aPbPcP}.

\bigskip

Now let us bound $\|\hat\Pi^{\progB,\actv}_{t+1} O_{j,r}\hat\Pi^{\progB,\actv}_t\|$.
Since $b_X = b_Y = 0$ and since $\sum_{x'\in[n]\setminus\{x,x^j\}} |f_{x'}\>\<f_{x'}|$ restricted to $\mathrm{span}\{|\unif_{x,x^j}\>,  |f_x\>, |f_{x^j}\>, |\unif\>\}$ equals $|\unif_{x,x^j}\>\<\unif_{x,x^j}|$, we have
\begin{align*}
\big\|\hat\Pi^{\progB,\actv}_{t+1} O_{j,r}\hat\Pi^{\progB,\actv}_t\big\|^2 \le & \,
\sum_{P\in\{I,Z\}} \Big(
\big\||\unif_{x,x^j}\>\<\unif_{x,x^j}|+|f_x\>\<f_x|+|f_{x^j}\>\<f_{x^j}|-|\unif\>\<\unif|\big\|
\\ & \hspace{45pt}
 \cdot
\Big\|
a_P |\unif_{x,x^j}\>\< \unif_{x,x^j}|
- b_P |f_{x}\>\<f_{x}|
+ b_P |f_{x^j}\>\<f_{x^j}|
\Big\|
\\ & \hspace{45pt}
 \cdot
\big\||\unif_{x,x^j}\>\<\unif_{x,x^j}|+|f_x\>\<f_x|+|f_{x^j}\>\<f_{x^j}|-|\unif\>\<\unif|\big\| \Big)^2
\\ & \, +
\sum_{P\in\{X,Y\}} a_P^2 \Big\| 
\big(|\unif_{x,x^j}\>\<\unif_{x,x^j}|+|f_x\>\<f_x|+|f_{x^j}\>\<f_{x^j}|-|\unif\>\<\unif|\big)
|\unif_{x,x^j}\>
\\ & \hspace{70pt}
 \cdot
\<\unif_{x,x^j}|
\big(|\unif_{x,x^j}\>\<\unif_{x,x^j}|+|f_x\>\<f_x|+|f_{x^j}\>\<f_{x^j}|-|\unif\>\<\unif|\big) \Big\|^2
\\ = & \,
\sum_{P\in\{I,Z\}} \max\{a_P^2,b_P^2\}
 + \sum_{P\in\{X,Y\}} a_P^2 \bigg\|  |\unif_{x,x^j}\>-\sqrt{\frac{n-2}{n}}|\unif\> \bigg\|^4.
\end{align*}
We have
\[
 \bigg\|
|\unif_{x,x^j}\>-\sqrt{\frac{n-2}{n}}|\unif\>
\bigg\|^2
 = 1 - \frac{n-2}n - \frac{n-2}n + \frac{n-2}n
 = \frac2n,
 \]
 and thus
\[
\big\|\hat\Pi^{\progB,\actv}_{t+1} O_{j,r}\hat\Pi^{\progB,\actv}_t\big\|^2 \le 
\sum_{P\in\{I,Z\}} \max\{a_P^2,b_P^2\}
 + \frac4{n^2}\sum_{P\in\{X,Y\}} a_P^2. 
\]
We again conclude the proof by Claim~\ref{clm:aPbPcP:cor}.
\end{proof}

\subsection{Getting to subspace $\cH^{\progC}$}
\label{sec:targ:toProgC}

\noindent
\textbf{Claim~\ref{clm:targ:toProgC}} (restated)\textbf{.}
\emph{
 We have 
 $\| \Pi^{\progC}_{t+1} O_{j,r} \Pi^{\progB,\actv}_{t}\|^2 \le 2r$
  and 
$\| \Pi^{\progC}_{t+1} O_{j,r} \Pi^{\progA}_{t}\|^2\le 4r/n$.
} 

\begin{proof}
As before, let  $\Pi_t^\progLab\in\{\Pi_t^\progA,\Pi_t^{\progB,\actv}\}$. 
Because the states $|P,1\>_{\regR_{t+1}}$ introduced by $O_{j,r}$ are clearly orthogonal for distinct $P$,  because $\Pi^{\progC}_{t+1}(I\otimes |P,1\>_{\regR_{t+1}})=I\otimes |P,1\>_{\regR_{t+1}}$, and because $\sigma_P$ is unitary,   we have
\begin{align*}
\| \Pi^{\progC}_{t+1} O_{j,r} \Pi^{\progLab}_{t}\big\|^2  
 & = 
 \bigg\|\Pi^{\progC}_{t+1}  \sum_{x\in[n]}
\Big(|f_x\>\<f_x|\otimes 
\sum_{P\in\Pauli} K_{1,P}^x \otimes |P,1\>\Big) \Pi^{\progLab}_{t}\bigg\|^2
\\ & \le  
\sum_{P\in\Pauli}
 \bigg\|  \sum_{x\in[n]}
\big(|f_x\>\<f_x|\otimes  \sigma_P^{(j)} c_P \Xi_{x,j} \big) \Pi^{\progLab}_{t}\bigg\|^2
\\ & = 
\bigg(
\sum_{P\in\Pauli} c_P^2
\bigg)
 \bigg\|  \sum_{x\in[n]} \big(|f_x\>\<f_x|\otimes  \Xi_{x,j} \big) \Pi^{\progLab}_{t}\bigg\|^2.
\end{align*}
Recall that $ \sum_{P\in\Pauli} c_P^2\le 2r$ by Claim~\ref{clm:aPbPcP}.

Note that $\big\| \sum_{x\in[n]} \big(|f_x\>\<f_x|\otimes  \Xi_{x,j} \big)\big\|=1$, already giving us 
$\big\| \Pi^{\progC}_{t+1} O_{j,r} \Pi^{\progB,\actv}_{t}  \big\|^2
\le 2r$.  
It is left to bound $\| \Pi^{\progC}_{t+1} O_{j,r} \Pi^{\progA}_{t}\|^2$, which we do separately for the two scenarios.

\paragraph{Noisy target qubit.}
 
Recall that $\check \Xi_{x,j} = |x\>\<x|\otimes \sigma_Z$ and note that $\|\sigma_Z\|=1$. Hence, we have 
\[
\bigg\| \sum_{x\in[n]} \Big(|f_x\>\<f_x|\otimes  \check\Xi_{x,j} \Big) \Pi^{\progA}_{t} \bigg\|
= \bigg\|  |f_x\>  \otimes |x\>\<x|\otimes\sigma_Z  \bigg\| / \sqrt{n} = 1/\sqrt{n},
\]
where we have used $\|\<f_x| \Pi^{\progA}_{t}\|=|\<f_x|\unif\>|=1/\sqrt{n}$.

\paragraph{Noisy index qubit.}

Recall that, for distinct $x,x'$, the operators $\hat \Xi_{x,j}$ and $\hat \Xi_{x',j}$ are mutually orthogonal, unless $x'=x^j$, for which $\hat \Xi_{x^j,j}=-\hat \Xi_{x,j}$.
We use $\<f_x|\unif\>=1/\sqrt n$ to get 
\[
\bigg\| \sum_{x\in[n]} \Big(|f_x\>\<f_x|\otimes  \hat\Xi_{x,j} \Big) \Pi^{\progA}_{t} \bigg\|
= \bigg\| \sum_{x\in\{x',x^{j'}\}} |f_x\>  \otimes \hat\Xi_{x,j}  \bigg\| / \sqrt{n} = \sqrt{2/n},
\]
where $x'$ can be arbitrary.
\end{proof}

\section*{Acknowledgements}

The author would like to thank Fran\c{c}ois Le Gall for fruitful and insightful discussions.
Part of the work was done while the author was at Graduate School of Mathematics, Nagoya University supported by JSPS KAKENHI Grant No.~JP20H05966 and MEXT Quantum Leap Flagship Program (MEXT Q-LEAP) Grant No.~JPMXS0120319794.
Part of the work was done while the author was hosted by the Centre for Quantum Technologies, the National University of Singapore, his friends, and and his family, and the author would like to thank all of them for their hospitality.

{
\small

}

\appendix

\section{Negligent Oracle}
\label{app:negligent}

In this section, we show how to utilize the techniques introduced above to reprove the $\Omega(np)$-query lower bound by Regev and Schiff on the quantum query complexity of search with the negligent oracle \cite{regev:faultySearch}. Note that the techniques of the original article~\cite{Rosmanis:2023:NoisyOracle} were insufficient for this purpose because they gave the same bounds for the error-concealing and the error-signaling noisy oracles.

The negligent oracle call of noise rate $p$ is defined as a CPTP map on the query register $\regQ=\regQi\regQo$ that with probability $1-p$ acts as the faultless oracle call $O_f$ (given in (\ref{eq:faultless})) and with probability $p$ acts as the identity operation. 
We denote this CPTP map as $\widetilde\ccO_{f,p}$.

\begin{thm}
\label{thm:negligent}
Every algorithm that is given an access to the negligent oracle of noise rate $p$ and that solves the unstructured search problem in the worst case setting with probability at least $1-\epsilon$ must make at least $\big(\frac{np(1-\epsilon)}{28}-1\big)/(1-p)$ queries to the negligent oracle.
\end{thm}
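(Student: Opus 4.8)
The plan is to run the same record‑register/progress‑measure machinery as for the single noisy qubit, specialised to the negligent oracle, whose affected subspace for the marked element $x$ is now the one‑dimensional $\mathrm{span}\{|x,1\>\}$. Writing $\Pi_{x}:=I_\regQ-|x,1\>\<x,1|$ and $\Xi_{x}:=-|x,1\>\<x,1|$, so that $O_{f_x}=\Pi_x+\Xi_x$ and $I_\regQ=\Pi_x+\Xi_x^2$, I would first produce the useful two‑operator Kraus representation of $\widetilde\ccO_{f_x,p}$ analogous to Lemma~\ref{lem:KrausAlt},
\[
K_0^{x} := \Pi_{x} + (1-2p)\,\Xi_{x},
\qquad
K_1^{x} := 2\sqrt{p(1-p)}\;\Xi_{x}.
\]
A one‑line $2\times2$ block computation checks that these reproduce $\widetilde\ccO_{f_x,p}(\rho)=(1-p)O_{f_x}\rho O_{f_x}+p\rho$ and that $(K_0^x)^*K_0^x+(K_1^x)^*K_1^x=I_\regQ$; the decisive feature, exactly as before, is $K_1^x\propto\Xi_x$, so the record bit $\beta=1$ certifies collapse onto the marked element. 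The one relevant coefficient bound (analogue of Claim~\ref{clm:aPbPcP}) is the leakage weight $\big(2\sqrt{p(1-p)}\big)^2=4p(1-p)$.

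Next I would define the extended call $\widetilde O_p:=\sum_x|f_x\>\<f_x|_\regT\otimes\sum_{\beta\in\{0,1\}}(K_\beta^x)_\regQ\otimes|\beta\>_{\regR_t}$ — now with a single record qubit per query and no Pauli label — and keep verbatim the projectors $\Pi^\progA_t,\Pi^\progB_t,\Pi^\progC_t$ and the progress measure $\Psi_t=\|\Pi^\progC_t|\phi_t\>\|^2+K\|\Pi^\progB_t|\phi_t\>\|^2$ for a constant $K$. The active subspace is $\Pi^{\progB,\actv}_t:=\sum_x|\tilde f_x\>\<\tilde f_x|\otimes|x,1\>\<x,1|\otimes\Lambda^{\progA\progB}_t$ with $|\tilde f_x\>=(\sqrt n\,|f_x\>-|\unif\>)/\sqrt{n-1}$, the passive subspace being the remainder of $\Pi^\progB_t$. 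The two invariance claims (analogues of Claims~\ref{clm:CFixed} and~\ref{clm:OQonPas}) carry over: $\progC$ is absorbing since records are only appended, and $\Xi_x\,\Pi^{\progB,\pasv}=0$ because $|f_x\>\in\mathrm{span}\{|\unif\>,|\tilde f_x\>\}$, so the passive space is invariant.

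The quantitative core is the transition bounds. The computation of Claim~\ref{clm:targ:toProgC} gives the leakage into $\progC$: $\|\Pi^\progC_{t+1}\widetilde O_p\Pi^{\progB,\actv}_t\|^2\le 4p(1-p)$ and $\|\Pi^\progC_{t+1}\widetilde O_p\Pi^\progA_t\|^2\le 4p(1-p)/n$. Using $K_0^x|x,1\>=(2p-1)|x,1\>$ and $K_0^{x'}|x,1\>=|x,1\>$ for $x'\ne x$, the reductions of Claim~\ref{clm:targ:ProgAtoProgB} then yield
\[
\big\|\Pi^{\progB,\actv}_{t+1}\widetilde O_p\,\Pi^{\progB,\actv}_{t}\big\|\le 1-2p
\qquad\text{and}\qquad
\big\|\Pi^{\progB,\actv}_{t+1}\widetilde O_p\,\Pi^{\progA}_{t}\big\|\le \frac{2(1-p)}{\sqrt n}.
\]
Feeding these into the completing‑the‑square bookkeeping of the proof of Lemma~\ref{lem:progEvol}, with $w:=\|\Pi^{\progB,\actv}_t|\phi_t\>\|$, the quadratic coefficient becomes $4p(1-p)(2-K)$ (using the effective gap $1-(1-2p)^2=4p(1-p)$), the linear coefficient is $O\!\big(K(1-p)/\sqrt n\big)$, and maximising over $w$ gives $\Psi_{t+1}-\Psi_t\le \dfrac{C(1-p)}{np}$ for an absolute constant $C$ (optimising $K\approx4$ brings it below $28$). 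Combined with $q_{succ}\le\Psi_\tau+2/n$ this gives $1-\epsilon\le \tfrac2n+\tfrac{C(1-p)\tau}{np}$, whence $\tau\ge\big(\tfrac{np(1-\epsilon)}{28}-1\big)/(1-p)$.

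The step I expect to be the crux — and the one separating this bound from a naive $\Omega(np(1-p))$ — is extracting the extra factor $(1-p)$ in the transition $\|\Pi^{\progB,\actv}_{t+1}\widetilde O_p\Pi^\progA_t\|\le 2(1-p)/\sqrt n$. It appears because the truth‑register overlaps collapse the coefficient $(2p-1)\<\tilde f_x|f_x\>+\sum_{x'\ne x}\<\tilde f_x|f_{x'}\>$ to exactly $-2(1-p)\<\tilde f_x|f_x\>$; its square $(1-p)^2$, divided by the completing‑the‑square denominator $\propto p(1-p)$ (the effective gap, which is complementary to the leakage via $(1-2p)^2+4p(1-p)=1$), is precisely what produces the $(1-p)/(np)$ increment and hence the $1/(1-p)$ in the query bound. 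Verifying the gap bound $\le 1-2p$ over the full range of $p$ together with the small‑$n$ carve‑out (as in the $n\le11$ case of the main proof) is the remaining careful‑but‑routine part.
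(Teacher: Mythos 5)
Your proposal follows the paper's proof essentially step for step: the unitary combination producing $K_0^x=I_\regQ-2(1-p)|x,1\>\<x,1|$ and $K_1^x\propto|x,1\>\<x,1|$, the single-qubit record register, the same progress projectors with active part $\sum_x|\tilde f_x\>\<\tilde f_x|\otimes|x,1\>\<x,1|\otimes\Lambda^{\progA\progB}_t$, and the same four transition claims are exactly what the paper does, and your identification of the $(1-p)$ factor in $\|\Pi^{\progB,\actv}_{t+1}\widetilde O_p\Pi^{\progA}_t\|$ as the source of the $1/(1-p)$ in the query bound is the right crux. One concrete slip: the stated bound $\|\Pi^{\progB,\actv}_{t+1}\widetilde O_p\Pi^{\progB,\actv}_t\|\le 1-2p$ is false. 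The paper computes this norm exactly as $|1-2(1-p)(1-1/n)|$, which for $p>1/2$ strictly exceeds $|1-2p|$ (and $1-2p$ itself is then negative, so it cannot upper-bound a norm). This does not sink the argument, because the quantity that matters is the gap $1-\big(1-2(1-p)(1-1/n)\big)^2\ge 4p(1-p)(1-1/n)$, still $\Theta(p(1-p))$; but you must replace the claimed inequality by the exact computation rather than treat it as routine. A second, smaller divergence: the paper uses weight $2$ on $\|\Pi^{\progB}_t|\phi_t\>\|^2$ and deliberately avoids applying $(N_1+N_2)^2\le 2(N_1^2+N_2^2)$ to the leakage term, noting that the extra factor $2$ would be fatal at that weight; you instead double the leakage and compensate with weight $K\approx 4$. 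That variant also closes (the quadratic coefficient $4p(1-p)(2-K)$ is then negative), but the claim that it yields a constant below $28$ is asserted rather than verified and should be checked explicitly.
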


To prove this lower bound, as in the main text, we only consider functions $f_x$, where $x\in[n]$ is the unique marked element. For such functions, $\widetilde\ccO_{f_x,p}$ has a Kraus representation 
\[
\widetilde\ccO_{f_x,p} \colon \rho \mapsto 
G_{x,0}\rho G_{x,0}^* + G_{x,1}\rho G_{x,1}^*
\]
with Kraus operators $G_{x,0}:=\sqrt{1-p}O_{f_x}$ and $G_{x,1}:=\sqrt{p}I_\regQ$.

Let $\{K_{x,0},K_{x,1}\}$ be a unitary combination of $\{G_{x,0},G_{x,1}\}$ given as
\begin{alignat*}{3}
& K_{x,0}&&:=\sqrt{1-p}G_{x,0} + \sqrt{p}G_{x,1} 
 = I_\regQ-2(1-p)|x,1\>\<x,1|,
\\
& K_{x,1}&&:=-\sqrt{p}G_{x,0} + \sqrt{1-p}G_{x,1} 
 = 2\sqrt{p(1-p)}|x,1\>\<x,1|,
\end{alignat*}
where we have used the fact that $O_{f_x}=I_\regQ-2|x,1\>\<x,1|$.
Therefore,  $\widetilde\ccO_{f_x,p}$ also has a Kraus representation 
\[
\widetilde\ccO_{f_x,p} \colon \rho \mapsto 
K_{x,0}\rho K_{x,0}^* + K_{x,1}\rho K_{x,1}^*,
\]
which we will be using from now on.

\subsection{Progress measure}

\paragraph{Record registers.}

Similarly as in Section~\ref{sec:recReg}, the record register $\regR$ starts empty, the subregister $\regR_1$ is appended to it by the first oracle call, $\regR_2$ by the second oracle call, and so on.
Now, however, each subregister $\regR_t$ consists of a single qubit.
We define the $t$-th \emph{extended} negligent oracle call as a linear isometry
\begin{align*}
\widetilde O_{p} := & \, \sum_{x\in[n]}|f_x\>\<f_x|_\regT \otimes \sum_{\beta\in\{0,1\}} 
( K_{x,\beta})_\regQ \otimes |\beta\>_{\regR_t},
\end{align*}
which, as before, acts as the identity on the earlier record subregisters $\regR_1\ldots \regR_{t-1}$.

Recall that $|\unif\>=\sum_{x\in[n]}|f_x\>/\sqrt{n}$, $\Pi_{succ}=\sum_{x\in[n]}|f_x,x\>\<f_x,x|$, and that $|\psi^0\>$ is the initial state of the computation. As before, the initial state of the extended computation is $|\phi_0\>=|\unif\>\otimes|\psi^0\>$, the intermediate states just before the extended oracle calls and the final state are defined as $|\phi_t\>:=U_t\widetilde O_{p}|\phi_{t-1}\>$, where $t\in\{1,\ldots,\tau\}$, and the success probability for a randomly chosen $f_x$ is $q_{succ}=\|\Pi_{succ}|\phi_\tau\>\|^2$.

\paragraph{Progress-defining subspaces.}

Now, similarly as in Section~\ref{ssec:progressSubsp}, when defining the progress measure, if any of the registers $\regR_t$ will be in state $|1\>$, we will intuitively assume that the extended computation has found the correct solution.
With this in mind, for the record register $\regR=\regR_1\ldots \regR_{t}$ containing $t$ entries, we define complementary projectors
$\Lambda^{\progA\progB}_t:= \bigotimes_{s=1}^t 
|0\>\<0|_{\regR_s}$
and $\Lambda^{\progC}_t:= I_\regR - \Lambda^{\progA\progB}_t$.
Then, given these new definitions of $\Lambda^{\progA\progB}_t$ and $\Lambda^{\progC}_t$, we define the progress projectors on registers $\regFunc\regR$ exactly as before:
\begin{align*}
&
\Pi^{\progC}_{t} := 
 I_\regT\otimes \Lambda^\progC_t,
\\ &
\Pi^{\progB}_{t} := 
 (I_\regT-|\unif\>\<\unif|) \otimes \Lambda^{\progA\progB}_t,
\\ &
\Pi^{\progA}_{t} :=  
 |\unif\>\<\unif| \otimes \Lambda^{\progA\progB}_t.
\end{align*}
As for the progress measure, we now define it as
\[
\Psi_t:=\|\Pi^\progC_{t}|\phi_t\>\|^2 + 2 \|\Pi^\progB_{t}|\phi_t\>\|^2.
\]

\bigskip

Similarly to Lemma~\ref{lem:progEvol} of the main text, we now can show the following.

\begin{lem}
\label{lem:progEvolNegl}
We have 
\begin{subequations}
\begin{align}
& \Psi_0 = 0, \label{lem:progEvolNegl:A} \\
& q_{succ} \le \Psi_\tau +  \frac2{n}, \label{lem:progEvolNegl:B} \\
& \Psi_{t+1}-\Psi_{t}       \le \frac{28(1-p)}{np} . \label{lem:progEvolNegl:C}
\end{align}
\end{subequations}
\end{lem}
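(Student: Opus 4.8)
The plan is to follow the route of the proof of Lemma~\ref{lem:progEvol}, specialized to the two-operator representation $\{K_{x,0},K_{x,1}\}$, in which $K_{x,1}=2\sqrt{p(1-p)}\,|x,1\>\<x,1|$ plays the role of the ``solution-recording'' operator: applying it writes $|1\>$ into $\regR_{t+1}$ and thus pushes the state into $\cH^\progC$. Equations (\ref{lem:progEvolNegl:A}) and (\ref{lem:progEvolNegl:B}) I would prove verbatim as in the main text: $\Pi^\progB_0|\phi_0\>=(I_\regT-|\unif\>\<\unif|)|\unif\>\otimes|\psi^0\>=0$ gives $\Psi_0=0$, and the commutation of $\Pi_{succ}$ with $\Pi^\progC_\tau$ and $\Pi^\progB_\tau+\Pi^\progA_\tau$ together with $\|\Pi_{succ}\Pi^\progA_\tau\|=1/\sqrt n$ gives $q_{succ}\le\Psi_\tau+2/n$; note the progress weight $2$ is exactly the factor this step needs. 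All the difficulty lies in (\ref{lem:progEvolNegl:C}).

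For (\ref{lem:progEvolNegl:C}) I would mirror Sections~\ref{ssec:ActAndPas}--\ref{ssec:transitions}. First decompose $\cH^\progB=\cH^{\progB,\actv}\oplus\cH^{\progB,\pasv}$ by setting
\[
\Pi^{\progB,\actv}_t:=\sum_{x\in[n]}|\tilde f_x\>\<\tilde f_x|_\regT\otimes|x,1\>\<x,1|_\regQ\otimes\Lambda^{\progA\progB}_t,
\qquad
\Pi^{\progB,\pasv}_t:=\Pi^\progB_t\otimes I_\regQ-\Pi^{\progB,\actv}_t,
\]
with the same $|\tilde f_x\>=(\sqrt n|f_x\>-|\unif\>)/\sqrt{n-1}\perp|\unif\>$ as in the main text; the factor $|x,1\>\<x,1|_\regQ$ makes the summands mutually orthogonal (hence $\Pi^{\progB,\actv}_t$ a genuine projector $\le\Pi^\progB_t\otimes I_\regQ$) and confines the active part to the single line on which $K_{x,\beta}$ acts nontrivially. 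I would then prove the four analogues of Claims~\ref{clm:CFixed}--\ref{clm:targ:toProgC}. The $\progC$-invariance of Claim~\ref{clm:CFixed} and the passive invariance $\widetilde O_p\Pi^{\progB,\pasv}_t=\Pi^{\progB,\pasv}_{t+1}\widetilde O_p$ (the analogue of Claim~\ref{clm:OQonPas}) hold for the same structural reasons as before: the record register is only appended, and off the line $|x,1\>$ one has $K_{x,0}=I_\regQ$ and $K_{x,1}=0$, so the passive space is frozen.

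The transition bounds are where the numbers enter; they follow from the explicit action $K_{x,0}|x,1\>=(2p-1)|x,1\>$, $K_{x,1}|x,1\>=2\sqrt{p(1-p)}\,|x,1\>$, together with $\<f_x|\tilde f_x\>=\sqrt{(n-1)/n}$ and $\sum_{x'}\<\tilde f_x|f_{x'}\>=\sqrt n\,\<\tilde f_x|\unif\>=0$. I expect
\[
\|\Pi^\progC_{t+1}\widetilde O_p\Pi^{\progB,\actv}_t\|^2\le 4p(1-p)\tfrac{n-1}{n},
\qquad
\|\Pi^\progC_{t+1}\widetilde O_p\Pi^\progA_t\|^2\le \tfrac{4p(1-p)}{n},
\]
\[
\|\Pi^{\progB,\actv}_{t+1}\widetilde O_p\Pi^{\progB,\actv}_t\|=\Big|1-2(1-p)\tfrac{n-1}{n}\Big|,
\qquad
\|\Pi^{\progB,\actv}_{t+1}\widetilde O_p\Pi^\progA_t\|^2\le\tfrac{4(1-p)^2}{n},
\]
the smallness of the last norm being exactly the payoff of choosing $|\tilde f_x\>\perp|\unif\>$, which kills the ``identity part'' of $K_{x,0}$ on the uniform component.

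Finally I would assemble these as in the proof of Lemma~\ref{lem:progEvol}. Writing $w:=\|\Pi^{\progB,\actv}_t|\phi_t\>\|$ and denoting the four norms above by $\alpha,\beta,\gamma,\delta$, Claim~\ref{clm:CFixed} and passive invariance let me drop the $\progC$- and passive-contributions and obtain
\[
\Psi_{t+1}-\Psi_t\le(\alpha w+\beta)^2+2(\gamma w+\delta)^2-2w^2.
\]
The crucial algebraic point is that, keeping the cross terms rather than crudely using $(a+b)^2\le2a^2+2b^2$, the coefficient of $w^2$ is
\[
\alpha^2+2(\gamma^2-1)\le-\frac{4(1-p)(n-1)\big(np+2(1-p)\big)}{n^2}<0,
\]
the exact cancellation that makes weight $2$ (rather than $40$) suffice. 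Completing the square then bounds the increase by $B^2/(4A)+C$ with $A=4(1-p)(n-1)(np+2(1-p))/n^2$, $B=2\alpha\beta+4\gamma\delta$, $C=\beta^2+2\delta^2$, and a routine estimate (using $np+2(1-p)\ge np$, $\gamma\le1$, and $n/(n-1)\le12/11$ for $n\ge12$) collapses this to $\le28(1-p)/(np)$; the leftover small-$np$ regime, where the target bound exceeds the trivial ceiling $\Psi_t\le2$, is handled separately. The main obstacle is precisely verifying this exact sign cancellation in the $w^2$-coefficient and then shepherding the constants through the square-completion to land on $28$; the rest is a faithful structural transcription of the main proof, made cleaner here by the fact that $\widetilde O_p$ touches only the single line $|x,1\>$.
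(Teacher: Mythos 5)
Your proposal matches the paper's proof in all essentials: the same active/passive decomposition via $|\tilde f_x\>\otimes|x,1\>$, the same four transition claims with the same (correct) norm values, and the same key observation that one must keep the cross terms rather than apply $(N_1+N_2)^2\le 2(N_1^2+N_2^2)$ so that the negative $-2w^2$ term cancels the quadratic growth. The only differences are cosmetic choices in the final square-completion, so this is essentially the paper's argument.
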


The first two claims of Lemma~\ref{lem:progEvolNegl}, namely, (\ref{lem:progEvolNegl:A}) and (\ref{lem:progEvolNegl:B}), are proven exactly the same way as the first two claims of Lemma~\ref{lem:progEvol}. The proofs of the last claims of the two lemmas also share many similarities, yet there are enough differences to warrant presenting the proof of inequality (\ref{lem:progEvolNegl:C}) in full.

The main difference, informally speaking, is that, unlike the noisy oracle considered in the main text, the negligent oracle with the noise rate $p$ close to $1$ can transfer only very little probability weight to the subspaces $\cH^\progB$ and $\cH^\progC$. In an extreme case when $p=1$, the oracle acts as the identity and those subspaces cannot be reached at all.

We prove  (\ref{lem:progEvolNegl:C}) in Sections~\ref{ssec:negl:change} and \ref{sec:ProofsOfClaims:negl}. Before that, let us note that the proof of Theorem~\ref{thm:negligent} given Lemma~\ref{lem:progEvolNegl} proceeds in essentially the same way as the proof of Theorem~\ref{thm:mainSingle} given Lemma~\ref{lem:progEvol}. In particular, we can now observe that the success probability $q_{succ}$ is at most 
\[
\frac{2}{n}+\frac{28\tau(1-p)}{np} \le \frac{28(\tau(1-p)+1)}{np}.
\]

\subsection{Change in the progress measure}
\label{ssec:negl:change}

Recall the definitions of $|u_x\>$ and $|\tilde f_x\>$ from Section~\ref{ssec:ActAndPas}, which were introduced when analyzing the scenario of noisy target qubit:
\[
|\unif_{x}\>=\frac{1}{\sqrt{n-1}}\sum_{x'\in[n]\setminus\{x\}}|f_{x'}\>,
\qquad
|\tilde{f}_{x}\> =  \sqrt{\frac{n-1}{n}}|f_x\> - \frac1{\sqrt{n}}|\unif_x\>
=  \frac{\sqrt{n}|f_x\>-|\unif\>}{\sqrt{n-1}}.
\]
Recall also that $|\tilde{f}_x\>$ and $|\unif\>$ are orthogonal, and thus
$|\unif_{x}\>\<\unif_{x}|+|f_x\>\<f_x|=|\unif\>\<\unif|+|\tilde f_x\>\<\tilde f_x|$. 
We decompose $\Pi^\progB_t\otimes I_\regQi = \Pi^{\progB,\actv}_{t} + \Pi^{\progB,\pasv}_{t}$, where
\begin{align*}
 \Pi^{\progB,\actv}_{t} := & \sum_{x\in[n]}|\tilde f_x\>\<\tilde f_x| \otimes |x,1\>\<x,1| \otimes \Lambda^{\progA\progB}_{t},  \\
 \Pi^{\progB,\pasv}_{t} := & \sum_{x\in[n]}
\big(I_\regT-|\unif_{x}\>\<\unif_{x}|-|f_x\>\<f_x|\big) \otimes |x,1\>\<x,1| \otimes \Lambda^{\progA\progB}_{t}
\\ &
+\big(I_\regT-|\unif\>\<\unif|\big)\otimes I_\regQi\otimes|0\>\<0| \otimes \Lambda^{\progA\progB}_{t}.
\end{align*}
The following Claims~\ref{clm:CFixed:negl}--\ref{clm:targ:toProgC:negl} for the negligent oracle correspond respectively to Claims~\ref{clm:CFixed}--\ref{clm:targ:toProgC} for the noisy oracle considered in the main text.

\begin{clm}
\label{clm:CFixed:negl}
We have $\widetilde O_{p} \Pi^{\progC}_{t} =  \Pi^{\progC}_{t+1} \widetilde O_{p} \Pi^{\progC}_{t}$, and the images of 
  $\Pi^{\progC}_{t+1} \widetilde O_{p} (\Pi^{\progB}_{t}+\Pi^{\progA}_{t})$
 and
  $\Pi^{\progC}_{t+1} \widetilde O_{p} \Pi^{\progC}_{t}$
   are orthogonal.
\end{clm}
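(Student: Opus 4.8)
The plan is to follow, essentially verbatim, the proof of Claim~\ref{clm:CFixed} from the main text. The only structural difference in the negligent setting is that each record subregister $\regR_t$ is now a single qubit rather than the three-qubit register $\regR''_t\regR'_t$; the projectors $\Lambda^{\progA\progB}_t$, $\Lambda^{\progC}_t$ and the progress projectors $\Pi^{\progA}_t,\Pi^{\progB}_t,\Pi^{\progC}_t$ are defined in exactly the same form. The single fact I would use about $\widetilde O_{p}$ is the structural one, immediate from its definition: the extended negligent oracle call appends a fresh subregister $\regR_{t+1}$ and acts as the identity on the earlier subregisters $\regR_1\ldots\regR_t$, so it never modifies records that are already present.

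For the equality $\widetilde O_{p} \Pi^{\progC}_{t} = \Pi^{\progC}_{t+1} \widetilde O_{p} \Pi^{\progC}_{t}$, I would argue that any state in the image of $\Pi^{\progC}_{t}=I_\regT\otimes\Lambda^{\progC}_t$ has at least one of $\regR_1,\ldots,\regR_t$ in state $|1\>$, since $\Lambda^{\progC}_t = I_\regR - \Lambda^{\progA\progB}_t$. Because $\widetilde O_{p}$ leaves those registers untouched, that same register remains in state $|1\>$ after the call, so $\widetilde O_{p} \Pi^{\progC}_{t}$ already lies in the image of $\Pi^{\progC}_{t+1}$; hence $\Pi^{\progC}_{t+1}$ acts as the identity on it, yielding the equality.

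For the orthogonality of the two images, I would track the content of $\regR_1\ldots\regR_t$ on both sides. States in the image of $\Pi^{\progB}_{t}+\Pi^{\progA}_{t}=I_\regT\otimes\Lambda^{\progA\progB}_t$ have $\regR_1,\ldots,\regR_t$ all in state $|0\>$, and since $\widetilde O_{p}$ does not modify these subregisters, every vector in the image of $\Pi^{\progC}_{t+1}\widetilde O_{p}(\Pi^{\progB}_{t}+\Pi^{\progA}_{t})$ still has $\regR_1\ldots\regR_t$ in the all-zero state; here the $|1\>$ forcing membership in $\Pi^{\progC}_{t+1}$ can only sit in the newly appended register $\regR_{t+1}$. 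By contrast, every vector in the image of $\Pi^{\progC}_{t+1}\widetilde O_{p}\Pi^{\progC}_{t}$ has at least one of $\regR_1,\ldots,\regR_t$ in state $|1\>$. These two record-register contents are orthogonal (all-zero versus not-all-zero on $\regR_1\ldots\regR_t$), so the two images are orthogonal.

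I expect no genuine obstacle here; the argument is pure bookkeeping on the record register and is identical in spirit to that of Claim~\ref{clm:CFixed}. The only point warranting care is making explicit that, on the $(\Pi^{\progB}_{t}+\Pi^{\progA}_{t})$-side, the qubit responsible for membership in $\Pi^{\progC}_{t+1}$ must be $\regR_{t+1}$, which is precisely what keeps $\regR_1\ldots\regR_t$ all-zero and therefore orthogonal to the $\Pi^{\progC}_{t}$-side.
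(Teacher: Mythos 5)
Your proposal is correct and matches the paper's own argument: the paper proves this claim by noting it is essentially equivalent to Claim~\ref{clm:CFixed}, whose proof is exactly the record-register bookkeeping you give (the extended oracle call only appends a fresh subregister and never modifies $\regR_1\ldots\regR_t$, so membership in the image of $\Lambda^{\progC}_t$ versus $\Lambda^{\progA\progB}_t$ is preserved). No issues.
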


\begin{clm}
\label{clm:OQonPas:negl}
We have $\widetilde O_{p} \Pi^{\progB,\pasv}_{t} =  \Pi^{\progB,\pasv}_{t+1} \widetilde O_{p}$.
\end{clm}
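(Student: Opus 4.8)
The plan is to follow the proof of Claim~\ref{clm:OQonPas} essentially verbatim, adapting it to the single-qubit record register and to the two Kraus operators $K_{x,0},K_{x,1}$ of the negligent oracle. First I would record the factorization
\[
\Pi^{\progB,\pasv}_{t+1} = \Pi^{\progB,\pasv}_t\otimes|0\>\<0|_{\regR_{t+1}},
\]
which holds because $\Lambda^{\progA\progB}_{t+1}=\Lambda^{\progA\progB}_t\otimes|0\>\<0|_{\regR_{t+1}}$ and the $\regT\regQ$-content of $\Pi^{\progB,\pasv}$ does not depend on $t$. Next I would split the extended oracle call along the new record qubit as $\widetilde O_p = A_0\otimes|0\>_{\regR_{t+1}}+A_1\otimes|1\>_{\regR_{t+1}}$, where $A_\beta:=\sum_{x\in[n]}|f_x\>\<f_x|\otimes K_{x,\beta}$ (acting as the identity on $\regR_1\ldots\regR_t$). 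Comparing the two sides of the claimed identity component-wise in $\regR_{t+1}$, the statement reduces to showing that $A_1\Pi^{\progB,\pasv}_t=0$ and that $A_0$ commutes with $\Pi^{\progB,\pasv}_t$.

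For the first reduction, recall $K_{x,1}=2\sqrt{p(1-p)}\,|x,1\>\<x,1|$, so the register-$\regQ$ factor of every term of $A_1$ projects the target qubit onto $|1\>$. This annihilates the $|0\>\<0|_\regQo$ summand of $\Pi^{\progB,\pasv}_t$ outright, while on the remaining summand $\sum_{x'}(I_\regT-|\unif_{x'}\>\<\unif_{x'}|-|f_{x'}\>\<f_{x'}|)\otimes|x',1\>\<x',1|\otimes\Lambda^{\progA\progB}_t$ only the term $x'=x$ survives, and there the truth-register factor $\<f_x|\big(I_\regT-|\unif_x\>\<\unif_x|-|f_x\>\<f_x|\big)$ vanishes because $\<f_x|\unif_x\>=0$ and $\<f_x|f_x\>=1$. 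Hence $A_1\Pi^{\progB,\pasv}_t=0$.

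For the second reduction, write $K_{x,0}=I_\regQ-2(1-p)|x,1\>\<x,1|$, so that $A_0=I_{\regT\regQ}-2(1-p)B$ with $B:=\sum_{x\in[n]}|f_x\>\<f_x|\otimes|x,1\>\<x,1|$. The identity term commutes with everything, so it suffices to show that $B$ commutes with $\Pi^{\progB,\pasv}_t$. Both operators are Hermitian, and the very same computation as above (the target-qubit mismatch kills the $|0\>\<0|_\regQo$ part, and $\<f_x|(I_\regT-|\unif_x\>\<\unif_x|-|f_x\>\<f_x|)=0$ kills the $|x,1\>\<x,1|$ part) gives $B\Pi^{\progB,\pasv}_t=0$; taking adjoints then yields $\Pi^{\progB,\pasv}_t B=(B\Pi^{\progB,\pasv}_t)^*=0$ as well, so the two operators commute. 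This establishes the claim.

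I expect no genuine obstacle here: the argument is purely algebraic and parallels Claim~\ref{clm:OQonPas}. The only points requiring care are bookkeeping the two-piece structure of $\Pi^{\progB,\pasv}_t$ (its target-qubit-$|1\>$ part and its target-qubit-$|0\>$ part) and recognizing that both sub-statements hinge on the single orthogonality relation $\<f_x|\unif_x\>=0$, i.e.\ that $|\unif_x\>$ was defined as the uniform superposition over the unmarked indices so that $|f_x\>$ is orthogonal to it.
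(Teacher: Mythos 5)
Your proof is correct and follows essentially the same route as the paper's: both rest on the factorization $\Pi^{\progB,\pasv}_{t+1}=\Pi^{\progB,\pasv}_t\otimes|0\>\<0|_{\regR_{t+1}}$, the annihilation $\big(\sum_x|f_x,x,1\>\<f_x,x,1|\big)\Pi^{\progB,\pasv}_t=0$ (via $\<f_x|\unif_x\>=0$ and the target-qubit mismatch), and the trivial commutation of the identity part. The only cosmetic difference is that you split $\widetilde O_p$ by the value of the new record qubit, whereas the paper splits it into $I_{\regT\regQ}\otimes|0\>$ plus a single remainder term; the substance is identical.
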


\begin{clm}
 \label{clm:targ:ProgAtoProgB:negl}
We have both
$\| \Pi^{\progB,\actv}_{t+1} \widetilde O_p \Pi^{\progA}_{t} \| = \frac{2(1-p)\sqrt{n-1}}{n}$ and
$  \|\Pi^{\progB,\actv}_{t+1} \widetilde O_p \Pi^{\progB,\actv}_{t} \|
  = |1 - 2(1-p)(1-1/n) |$.
\end{clm}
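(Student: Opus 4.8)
The plan is to follow the template of the proof of Claim~\ref{clm:targ:ProgAtoProgB} (noisy target qubit) from the main text, except that here every step collapses to an \emph{exact} scalar computation, because the useful Kraus operators $K_{x,0}=I_\regQ-2(1-p)|x,1\>\<x,1|$ and $K_{x,1}=2\sqrt{p(1-p)}\,|x,1\>\<x,1|$ are supported entirely on $|x,1\>\<x,1|$. First I would note that, since $\Pi^{\progB,\actv}_{t+1}$ carries the factor $\Lambda^{\progA\progB}_{t+1}$, which forces the freshly appended record qubit $\regR_{t+1}$ into $|0\>$, only the branch $\beta=0$ of $\widetilde O_{p}$ survives the projection. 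Dropping the now-trivial register $\regR_{t+1}$ (which only changes the subscript $t+1$ to $t$ on the left projector) lets me write, for $\Pi^{\progLab}_t$ equal to either $\Pi^{\progA}_t$ or $\Pi^{\progB,\actv}_t$,
\[
\|\Pi^{\progB,\actv}_{t+1}\widetilde O_{p}\Pi^{\progLab}_t\| = \Big\|\Pi^{\progB,\actv}_t\Big(\sum_{x\in[n]}|f_x\>\<f_x|\otimes K_{x,0}\Big)\Pi^{\progLab}_t\Big\|.
\]
As in the main text, $|x\>\<x|_\regQi$ commutes with all three factors, so the operator is block diagonal over the index register and, by permutation symmetry, it suffices to compute the norm of a single block where $\regQi=|x\>$.

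Within that block, $K_{x',0}$ acts on $\regQo$ as $I_\regQo-2(1-p)\,\delta_{x,x'}|1\>\<1|_\regQo$, so it is nontrivial only for the term $x'=x$ and only on $\regQo=|1\>$. The whole computation then reduces to three inner products with the approximation vector: $\<\tilde f_x|\unif\>=0$, $\<\tilde f_x|f_x\>=\sqrt{(n-1)/n}$, and $\<\tilde f_x|f_{x'}\>=-1/\sqrt{n(n-1)}$ for $x'\neq x$. For the case $\Pi^{\progLab}_t=\Pi^{\progA}_t$, the input $\regQo=|0\>$ maps to a state orthogonal to the image of $\Pi^{\progB,\actv}$ (the oracle acts as the identity there), while the input $\regQo=|1\>$ maps to $\tfrac{-2(1-p)\sqrt{n-1}}{n}\,|\tilde f_x,x,1\>$; the resulting rank-one operator has norm $\tfrac{2(1-p)\sqrt{n-1}}{n}$, which is the first claimed equality.

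For the case $\Pi^{\progLab}_t=\Pi^{\progB,\actv}_t$, the block is one-dimensional, so I just evaluate the single matrix element. Separating the $x'=x$ contribution (which picks up the factor $2p-1$ on $\regQo=|1\>$) from the $x'\neq x$ contributions and contracting with $\<\tilde f_x|$ gives
\[
(2p-1)\frac{n-1}{n}+\frac1n = 1 - 2(1-p)\Big(1-\frac1n\Big),
\]
whose absolute value is the claimed operator norm.

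I expect the only real work to be bookkeeping: correctly tracking the joint action on $\regT\otimes\regQo$ and substituting the three $|\tilde f_x\>$ inner products. Unlike the noisy-oracle Claims, which need the bounds from Claim~\ref{clm:aPbPcP} and only yield inequalities, there is no genuine obstacle here — each index block reduces to a single exactly computable number precisely because the negligent Kraus operators are rank-one perturbations of the identity supported on $|x,1\>$.
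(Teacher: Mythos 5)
Your proposal is correct and follows essentially the same route as the paper: restrict to the $\beta=0$ branch forced by $\Lambda^{\progA\progB}_{t+1}$, exploit that $|x,1\>\<x,1|_\regQ$ commutes with everything to reduce to a single symmetric block, and evaluate the resulting rank-one operator (resp.\ scalar matrix element) using $\<\tilde f_x|\unif\>=0$ and $\<\tilde f_x|f_x\>=\sqrt{(n-1)/n}$. The only cosmetic difference is that you expand the $x'\neq x$ terms explicitly where the paper just writes the block operator as $I-2(1-p)|f_x\>\<f_x|$ and contracts once; both give the same exact values.
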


\begin{clm}
\label{clm:targ:toProgC:negl}
 We have both 
 $\| \Pi^{\progC}_{t+1} \widetilde O_{p} \Pi^{\progB,\actv}_{t}\| = 2\sqrt{p(1-p)(1-1/n)}$
  and 
$\| \Pi^{\progC}_{t+1} \widetilde O_{p} \Pi^{\progA}_{t}\|= 2\sqrt{p(1-p)/n}$.
 \end{clm}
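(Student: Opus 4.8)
The plan is to reuse the structure of the proof of Claim~\ref{clm:targ:toProgC}: reaching $\cH^\progC$ forces the freshly appended record qubit $\regR_{t+1}$ into the state $|1\>$. Both $\Pi^\progA_t$ and $\Pi^{\progB,\actv}_t$ carry the factor $\Lambda^{\progA\progB}_t$, so every state in their images has $\regR_1\ldots\regR_t$ all in $|0\>$; since $\widetilde O_p$ only appends $\regR_{t+1}$ and leaves the earlier records untouched, those $t$ qubits stay in $|0\>$ after the call. Consequently $\Pi^\progC_{t+1}$ kills the $\beta=0$ branch (whose $\regR_{t+1}=|0\>$ part would land in $\Lambda^{\progA\progB}_{t+1}$) and retains only the $\beta=1$ branch. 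Writing $\Pi^\progLab_t\in\{\Pi^\progA_t,\Pi^{\progB,\actv}_t\}$, this yields
\[
\Pi^\progC_{t+1}\widetilde O_p\Pi^\progLab_t = \Big[\Big(\sum_{x\in[n]}|f_x\>\<f_x|\otimes K_{x,1}\Big)\Pi^\progLab_t\Big]\otimes|1\>_{\regR_{t+1}},
\]
and since $|1\>_{\regR_{t+1}}$ is a unit vector, the operator norm equals $\big\|\big(\sum_x|f_x\>\<f_x|\otimes K_{x,1}\big)\Pi^\progLab_t\big\|$, now computed on $\regT\regQ$ alone.

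Substituting $K_{x,1}=2\sqrt{p(1-p)}\,|x,1\>\<x,1|$ pulls out the scalar $2\sqrt{p(1-p)}$, so both equalities reduce to evaluating $\big\|\big(\sum_x|f_x\>\<f_x|\otimes|x,1\>\<x,1|\big)\Pi^\progLab_t\big\|$. For $\Pi^{\progB,\actv}_t$ the product collapses term by term, using $|x,1\>\<x,1|\,|x',1\>\<x',1|=\delta_{x,x'}|x,1\>\<x,1|$, to $\sum_x\<f_x|\tilde f_x\>\,|f_x\>\<\tilde f_x|\otimes|x,1\>\<x,1|$. The summands have mutually orthogonal ranges and supports (distinct $x$ in $\regQi$ with $\regQo=|1\>$), so the operator norm is the maximum of the individual rank-one norms, namely $\max_x|\<f_x|\tilde f_x\>|=\sqrt{(n-1)/n}$ by the overlap recorded in Section~\ref{ssec:negl:change}. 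This gives $2\sqrt{p(1-p)(1-1/n)}$, as claimed.

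For $\Pi^\progA_t=|\unif\>\<\unif|\otimes\Lambda^{\progA\progB}_t$ I would instead diagonalize $A^\ast A$ for $A:=\frac1{\sqrt n}\sum_x|f_x\>\<\unif|\otimes|x,1\>\<x,1|$, where the factor $1/\sqrt n$ comes from $\<f_x|\unif\>=1/\sqrt n$. Orthogonality of both $\{|f_x\>\}$ and $\{|x,1\>\}$ makes the cross terms vanish, leaving $A^\ast A=\tfrac1n\,|\unif\>\<\unif|\otimes\big(I_\regQi\otimes|1\>\<1|_\regQo\big)$, an operator of norm $1/n$; hence $\|A\|=1/\sqrt n$ and the second equality becomes $2\sqrt{p(1-p)/n}$. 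None of these computations is genuinely hard; the one step demanding care is the record-register bookkeeping in the first paragraph---confirming that exactly the $\beta=1$ branch survives $\Pi^\progC_{t+1}$---and, secondarily, recognizing both operators as block-diagonal in $x$ so that their norms are governed by a single block.
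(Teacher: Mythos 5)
Your proposal is correct and follows essentially the same route as the paper: isolate the $\beta=1$ branch of $\widetilde O_p$ (the only part surviving $\Pi^\progC_{t+1}$), pull out the scalar $2\sqrt{p(1-p)}$ from $K_{x,1}$, and evaluate the remaining norm via the overlaps $\<f_x|\tilde f_x\>=\sqrt{1-1/n}$ and $\<f_x|\unif\>=1/\sqrt n$. The paper reduces to a single $x$ by a commutation-and-symmetry argument where you instead compute the block norms (and $A^*A$) explicitly, but this is the same underlying observation.
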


Given these claims, which we will prove in Section~\ref{sec:ProofsOfClaims:negl}, we can proceed to prove the inequality (\ref{lem:progEvolNegl:C}).
Similarly as in Section~\ref{ssec:transitions}, we have
\begin{align*}
\Psi_{t+1}  - \Psi_t  = \, &
\|\Pi^\progC_{t+1} \widetilde O_p(\Pi^{\progB,\actv}_{t}+\Pi^{\progA}_{t})|\phi_t\>\|^2
-2 \|\Pi^{\progB,\actv}_{t} |\phi_t\>\|^2
\\ & 
+2 \|\Pi^{\progB,\actv}_{t+1} \widetilde O_p (\Pi^{\progB,\actv}_{t}+\Pi^{\progA}_{t}) |\phi_t\>\|^2
\\ 
\le \, & 
\Big( \|\Pi^\progC_{t+1} \widetilde O_p \Pi^{\progB,\actv}_{t}\|
  \!\cdot\!  \|\Pi^{\progB,\actv}_{t}|\phi_t\>\|
  +  \|\Pi^\progC_{t+1} \widetilde O_p \Pi^{\progA}_{t}\|\Big)^2
-2 \|\Pi^{\progB,\actv}_{t} |\phi_t\>\|^2
\\ & 
+2 \Big(
 \|\Pi^{\progB,\actv}_{t+1} \widetilde O_p \Pi^{\progB,\actv}_{t}\|\!\cdot\! \|\Pi^{\progB,\actv}_{t} |\phi_t\>\| +   \|\Pi^{\progB,\actv}_{t+1} \widetilde O_p \Pi^{\progA}_{t}\|
\Big)^2
%
   \\ = \, &
 \frac{4(1-p)}{n} \bigg[
-p
\underbrace{\Big(1+2\frac{1-p}{np}\Big)}_{\ge 1} \Big(\sqrt{n-1}\|\Pi^{\progB,\actv}_{t} |\phi_t\>\|\Big)^2
\\ &
+ 2\Big(\underbrace{p + \big|1 - 2(1-p)(1-1/n)\big| }_{\le 2}\Big)
\Big(\sqrt{n-1}\|\Pi^{\progB,\actv}_{t} |\phi_t\>\|\Big)
\\ & +
  \underbrace{p + 2 (1-p)(1-1/n)}_{\le 3 \le 3/p }\bigg].
  \end{align*}
  Here the first equality is due to Claims~\ref{clm:CFixed:negl} and \ref{clm:OQonPas:negl}, then the triangle inequality has been applied, and the final equality is due to Claims~\ref{clm:targ:ProgAtoProgB:negl} and \ref{clm:targ:toProgC:negl} followed by somewhat tedious yet straight forward simplifications.
 Note that,  now, unlike in the main text, after applying the triangle inequality to $\|\Pi^\progC_{t+1} \widetilde O_p(\Pi^{\progB,\actv}_{t}+\Pi^{\progA}_{t})|\phi_t\>\|$, we do not subsequently apply the inequality $(N_1+N_2)^2\le 2(N_1^2+N_2^2)$. The reason for that is that here the factor $2$ that would be introduced by the latter inequality is too large to obtain the desired result.

From the above bound on $\Psi_{t+1}  - \Psi_t $, 
by denoting $\alpha:=\sqrt{n-1}\|\Pi^{\progB,\actv}_{t} |\phi_t\>\|$ for the sake of brevity and by applying the inequalities under the braces, we further get
\begin{align*}
\Psi_{t+1}  - \Psi_t  \le \, &
 \frac{4(1-p)}{n} \bigg[
-p \alpha^2
+ 4 \alpha
+ \frac3p \bigg]
      = 
\frac{4(1-p)}{np} \bigg[ 7-(p\alpha + 2)^2  \bigg]
      \le
\frac{28(1-p)}{np} ,
  \end{align*}
which is the claimed inequality (\ref{lem:progEvolNegl:C}). This concludes the proof of Lemma~\ref{lem:progEvolNegl}, given Claims~\ref{clm:CFixed:negl}--\ref{clm:targ:toProgC:negl}. We prove those claims next.

%
%
%
%

\subsection{Proofs of Claims~\ref{clm:CFixed:negl}--\ref{clm:targ:toProgC:negl}}
\label{sec:ProofsOfClaims:negl}

\noindent
\textbf{Claim~\ref{clm:CFixed:negl}} (restated)\textbf{.}
\emph{
We have $\widetilde O_{p} \Pi^{\progC}_{t} =  \Pi^{\progC}_{t+1} \widetilde O_{p} \Pi^{\progC}_{t}$, and the images of 
  $\Pi^{\progC}_{t+1} \widetilde O_{p} (\Pi^{\progB}_{t}+\Pi^{\progA}_{t})$
 and
  $\Pi^{\progC}_{t+1} \widetilde O_{p} \Pi^{\progC}_{t}$
   are orthogonal.
} \bigskip

The proof of Claim~\ref{clm:CFixed:negl} is essentially equivalent to that of Claim~\ref{clm:CFixed}, except that now the record register does not store any labels of Pauli operators.

\bigskip

For the remaining claims, note that, using the above expressions for Kraus operators $K_{x,0}$ and $K_{x,1}$, we can write $\widetilde O_{p}$ as
\[
\widetilde O_{p} = 
I_{\regT\regQ}\otimes |0\>_{\regR_t}
+ 2 \sum_{x\in[n]}
|f_x,x,1\>\<f_x,x,1|_{\regT\regQ}\otimes 
\big(-(1-p)|0\> + \sqrt{p(1-p)}|1\>\big)_{\regR_t}.
\] 
The proof of Claim~\ref{clm:OQonPas:negl} proceeds along similar lines as that of Claim~\ref{clm:OQonPas}.

 \bigskip
\noindent
\textbf{Claim~\ref{clm:OQonPas:negl}} (restated)\textbf{.}
\emph{
We have $\widetilde O_{p} \Pi^{\progB,\pasv}_{t} =  \Pi^{\progB,\pasv}_{t+1} \widetilde O_{p}$.
}

\begin{proof} 
Note that $\Pi_{t+1}^{\progB,\pasv}=\Pi_{t}^{\progB,\pasv}\otimes|0\>\<0|$, and consider the above expression for $\widetilde O_p$.
We have both that $|f_x,x,1\>\<f_x,x,1|\Pi_{t}^{\progB,\pasv}=0$ and that $I_{\regT\regQ}$ commutes with $\Pi_{t}^{\progB,\pasv}$, which prove the claim.
\end{proof}

In the following two proofs, as in the main text, let $\Pi^{\mathfrak{Lab}}_t\in\big\{\Pi^{\progA}_t,\Pi^{\progB,\actv}_t\big\}$.

\bigskip
\noindent
\textbf{Claim~\ref{clm:targ:ProgAtoProgB:negl}} (restated)\textbf{.}
\emph{
We have both
$\| \Pi^{\progB,\actv}_{t+1} \widetilde O_p \Pi^{\progA}_{t} \| = \frac{2(1-p)\sqrt{n-1}}{n}$ and
$  \|\Pi^{\progB,\actv}_{t+1} \widetilde O_p \Pi^{\progB,\actv}_{t} \|
  = |1 - 2(1-p)(1-1/n) |$.
}

\begin{proof}
By taking the part of $\widetilde O_{p}$ that appends $|0\>_{\regR_{t+1}}$ to the record register,  we have
\begin{align*}
\| \Pi^{\progB,\actv}_{t+1}  \widetilde O_{p} \Pi_t^{\progLab}\|
 & =
\bigg\|
\Pi^{\progB,\actv}_{t} 
\Big(I_{\regT\regQ} - 2(1-p) \sum_{x'\in[n]} |f_{x'},x',1\>\<f_{x'},x',1|\Big)
  \Pi_t^{\progLab}\bigg\|.
  \end{align*}
For any given $x\in[n]$ and $\beta\in\{0,1\}$, we note that $|x,\beta\>\<x,\beta|_\regQ$ commutes with all $\Pi^{\progB,\actv}_{t}$, $\Pi_t^{\progLab}$, and the operator in the large parentheses above.
Hence, due to symmetry, the norm above equals
\begin{align*}
\Big\|\big(\<x,1|\otimes I\big)
\big(\Pi^{\progB,\actv}_{t}\Pi_t^{\progLab} 
 - 2(1-p) 
 \Pi^{\progB,\actv}_{t}|f_x,x,1\>\<f_x,x,1|
  \Pi_t^{\progLab}
 \big)\big(|x,1\>\otimes I\big) \Big\|.
\end{align*}
For $\Pi^{\mathfrak{Lab}}_t=\Pi^{\progA}_t$, the term $\Pi^{\progB,\actv}_{t}\Pi_t^{\progA}$  vanishes, and we get
\begin{multline*}
\| \Pi^{\progB,\actv}_{t+1}  \widetilde O_{p} \Pi_t^{\progA}\|
  =
2(1-p)
\big\| \Pi^{\progB,\actv}_{t}|f_x,x,1\> \big\|
\!\cdot\!
\big\| \<f_x,x,1| \Pi_t^{\progA}\big\|
\\  =
2(1-p)\big|\<\tilde{f}_{x}|f_x\>\big|\!\cdot\!\big|\<f_x| \unif\> \big| = 2(1-p)\frac{\sqrt{n-1}}{n}.
  \end{multline*}
For $\Pi^{\mathfrak{Lab}}_t=\Pi^{\progB,\actv}_t$, we get
\begin{align*}
\| \Pi^{\progB,\actv}_{t+1}  \widetilde O_{p} \Pi_t^{\progB,\actv}\|
 & =
\big\|
|\tilde{f}_{x}\>\<\tilde{f}_{x}|
 - 2(1-p)  |\tilde{f}_{x}\>\<\tilde{f}_{x}|f_x\>\<f_x|\tilde{f}_{x}\>\<\tilde{f}_{x}|
\big\|
\\  & =
\Big| 1 - 2(1-p)\frac{n-1}{n} \Big|. \qedhere
  \end{align*}
\end{proof}

\noindent
\textbf{Claim~\ref{clm:targ:toProgC:negl}} (restated)\textbf{.}
\emph{
 We have both 
 $\| \Pi^{\progC}_{t+1} \widetilde O_{p} \Pi^{\progB,\actv}_{t}\| = 2\sqrt{p(1-p)(1-1/n)}$
  and 
$\| \Pi^{\progC}_{t+1} \widetilde O_{p} \Pi^{\progA}_{t}\|= 2\sqrt{p(1-p)/n}$.
}

\begin{proof}
By taking the part of $\widetilde O_{p}$ that appends $|1\>_{\regR_{t+1}}$ to the record register,  we have
\begin{align*}
\| \Pi^{\progC}_{t+1}  \widetilde O_{p} \Pi_t^{\progLab}\|
 & =
 2 \sqrt{p(1-p)}
\bigg\|
\Pi^{\progC}_{t+1} 
\Big( \sum_{x\in[n]} |f_{x},x,1\>\<f_{x},x,1|\otimes|1\>_{\regR_{t+1}}\Big)
  \Pi_t^{\progLab}\bigg\|
  \\ & =
 2 \sqrt{p(1-p)}
\bigg\|
\sum_{x\in[n]} |f_{x},x,1\>\<f_{x},x,1|
  \Pi_t^{\progLab}\bigg\|.
  \end{align*}
Since $|x,1\>\<x,1|_\regQ$ commutes with $\Pi_t^{\progLab}$, due to symmetry among $x\in[n]$, for any such $x$ we have
\begin{align*}
\| \Pi^{\progC}_{t+1}  \widetilde O_{p} \Pi_t^{\progLab}\|
 & =
 2 \sqrt{p(1-p)}
\big\|\<f_{x},x,1|\Pi_t^{\progLab} \big\|.
  \end{align*}
We conclude by observing that
\begin{align*}
& \big\|\<f_{x},x,1|\Pi_t^{\progB,\actv} \big\| = \big|\<f_x|\tilde f_x\> \big| = \sqrt{1-1/n},
\\ & \big\|\<f_{x},x,1|\Pi_t^{\progA} \big\| = \big|\<f_x|u\> \big| = 1/\sqrt{n}. \qedhere
\end{align*}
\end{proof}

\section{One-Sided Noise Versus Two-Sided Noise}
\label{sec:OneTwoSided}

Let us show that the $\Omega(nr)$ lower bound given by Theorem~\ref{thm:mainSingle} does not hold when the noise is one-sided or when we have flag bits indicating the presence of error.
In all cases, we assume that $j$ is known in advance; otherwise, for the upper bounds, we get a factor-$\log n$ slowdown by having to guess $j$.

We are rather informal with the algorithms that we provide. All of them are essentially versions of Grover's algorithm adapted to specific scenarios.

Unlike when proving the lower bound (i.e., Theorem~\ref{thm:mainSingle}), for upper bounds, we assume that there might be multiple marked elements. This assumption slightly complicates some arguments,
in particular, when the index-qubit noise is concerned. For that matter, we will find the following simple claim useful.

\begin{clm}
\label{clm:coins}
Suppose we are tossing an unbiased coin until we have an odd number of heads and an even number of tails. The expected number of coin tosses $T$ until this happens is $3$.
\end{clm}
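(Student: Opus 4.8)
The plan is to track only the parities of the two counts, which turns the process into a random walk on a four-state Markov chain and reduces the claim to a routine first-step (hitting-time) analysis.

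First I would record the state after each toss as the pair $(h\bmod 2,\ t\bmod 2)$, where $h$ and $t$ are the running numbers of heads and tails. The chain starts in state $(0,0)$, and each toss flips the first coordinate (with probability $1/2$, on heads) or the second coordinate (with probability $1/2$, on tails). The stopping condition ``odd heads and even tails'' is exactly the event of reaching the state $(1,0)$, which I treat as absorbing. A useful sanity observation is that $h+t$ must be odd at the target, so the target lives at odd time steps and can only be hit from the odd-parity-sum states; this also explains why $T$ is always odd.

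Next I would set up the first-step equations for the expected number of remaining tosses $E_s$ from each non-absorbing state $s$, with $E_{(1,0)}=0$. Reading off the transitions (heads from $(0,0)$ is absorbing, tails leads to $(0,1)$, and so on) gives
\begin{align*}
E_{(0,0)} &= 1 + \tfrac12 E_{(0,1)}, \\
E_{(0,1)} &= 1 + \tfrac12 E_{(0,0)} + \tfrac12 E_{(1,1)}, \\
E_{(1,1)} &= 1 + \tfrac12 E_{(0,1)}.
\end{align*}
The first and third equations coincide (from each of $(0,0)$ and $(1,1)$, one toss outcome is absorbing and the other leads to $(0,1)$), so $E_{(0,0)}=E_{(1,1)}$; substituting into the middle equation yields $E_{(0,1)}=2+\tfrac12 E_{(0,1)}$, hence $E_{(0,1)}=4$ and $E_{(0,0)}=1+2=3$. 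Since the process begins in $(0,0)$, this gives $\mathbb{E}[T]=E_{(0,0)}=3$, as claimed.

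There is no real obstacle here beyond bookkeeping: the only points that require a moment of care are correctly identifying that the stopping event corresponds to a single absorbing parity state and checking the transition probabilities out of each state. As an independent cross-check, I would also compute the distribution of $T$ directly: after the first toss is a tail, over each subsequent pair of tosses the chain either returns to $(0,1)$ or absorbs, each with probability $1/2$, so $\Pr[T=2k+1]=2^{-(k+1)}$ for $k\ge 0$, and then $\sum_{k\ge0}(2k+1)2^{-(k+1)}=3$ confirms the answer.
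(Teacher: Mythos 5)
Your proof is correct and takes essentially the same approach as the paper's: a first-step (hitting-time) analysis of the parity process. The paper merely compresses your three-equation system into the single recurrence $T=\tfrac12\cdot 1+\tfrac12(2+T)$ by noting that after an initial tails, the two states reachable by the second toss (the starting state, or the symmetric ``odd tails, even heads'' state) both have remaining expectation $T$.
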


\begin{proof}
Let us inspect the first coin toss. With probability $1/2$ it is heads, and we are done. Otherwise, it is tails, and we have two scenarios depending on the second toss.

If the second toss comes out as tails, we are back to the starting position, where we require odd number of heads and an even number of tails. If the second toss comes out as heads, we are at an equivalent position where we require odd number of tails and an even number of heads.

Hence, $T$ satisfies $T = \frac12\cdot 1 + \frac12\cdot(2+T)$. This solves as $T=3$.
\end{proof}

\subsection{Overcoming noise after the oracle call}

Here let us consider the noisy oracle call to be $\ccO_f\circ\ccN_{j,r}$ for known $j$ (see Figure~\ref{fig:noisy_qubit_orac_after} for illustrations). As explained below, if $j=0$, we do not have any slowdown compared to Grover's algorithm, while, if $j\in\{1,\ldots,\log n\}$, we experience slowdown of a factor about $\sqrt{2}$, as we are searching over two spaces of half the size.

\begin{figure}[!h]
\centering
\begin{tikzpicture}
\draw [white] (-0.1,-0.35-\ygap*2.5) rectangle (0.1,0.25+\ygap*4); 
  \foreach \y in {1,...,5}
  {
      \node [gray] at (0,\y*\ygap-\ygap) {\tiny $\y$};
   }
   \node [gray] at (0,-2.5*\ygap) {\tiny $0$};
\end{tikzpicture}
\hspace{13pt}
\begin{tikzpicture}
\draw [white] (-2.4,-0.35-\ygap*2.5) rectangle (2.6,0.25+\ygap*4); 
  \foreach \y in {1,...,5}
  {
      \draw (-1.2,\ygap*\y-\ygap)--(2.4,\ygap*\y-\ygap) ;
   }
   \draw (-1.2,-\ygap*2.5)--(2.4,-\ygap*2.5);
   \draw [draw=black,fill=oracleColor] (-0.5,-0.15-\ygap*2.5) rectangle (0.5,0.15+\ygap*4);
   \draw [fill=noiseColor, rounded corners = 1.3mm] (0.8,-0.26-2.5*\ygap) rectangle (1.8,0.26-2.5*\ygap);

   \node at (0,\ygap) {$O_f$};
    \node at (1.32,-.03-2.5*\ygap) {$\ccN_{0,r}$};
\end{tikzpicture}
\hspace{17pt}
\begin{tikzpicture}
\draw [white] (-2.4,-0.35-\ygap*2.5) rectangle (2.6,0.25+\ygap*4); 
  \foreach \y in {1,...,5}
  {
      \draw (-1.2,\ygap*\y-\ygap)--(2.4,\ygap*\y-\ygap) ;
   }
   \draw (-1.2,-\ygap*2.5)--(2.4,-\ygap*2.5);
   \draw [draw=black,fill=oracleColor] (-0.5,-0.15-\ygap*2.5) rectangle (0.5,0.15+\ygap*4);
   \draw [fill=noiseColor, rounded corners = 1.3mm] (0.8,-0.26+3*\ygap) rectangle (1.8,0.26+3*\ygap);

   \node at (0,\ygap) {$O_f$};
    \node at (1.32,-.03+3*\ygap) {$\ccN_{4,r}$};
\end{tikzpicture}

{\hspace{29pt}\small(\emph{target-qubit noise}) \hspace{78pt}
\small(\emph{index-qubit noise})}
\captionsetup{font=small}
\captionsetup{width=0.9\textwidth}
\caption[my caption]{%
The circuit diagrams of the noisy oracle call when the depolarizing noise $\ccN_{j,r}$ acts only after the faultless oracle call $O_f$.}
\label{fig:noisy_qubit_orac_after}
\end{figure}
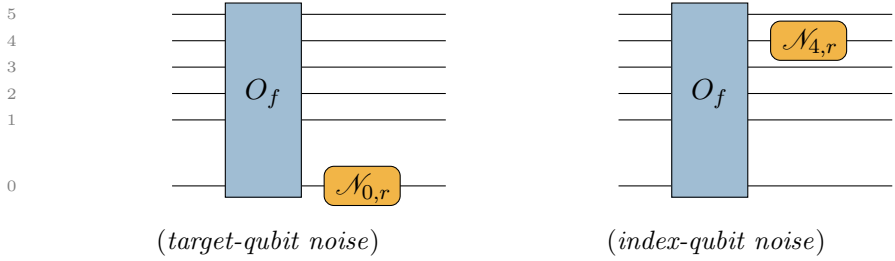

\paragraph{Noisy target qubit, $j=0$.}

To deal with the noise, we run Grover's algorithm, except that, before every oracle call, we initialize the target qubit to a new instance of $|1\>$, perform the \emph{noisy} oracle call, and then discard the target qubit, which may or may not have been affected by the noise after the application of $O_f$.

\paragraph{Noisy index qubit, $j\in\{1,\ldots,\log n\}$.}

In this scenario, we run two instances of Grover's algorithm, one searching over all $x$ such that $x_{\mathrm b,j}=0$ and one over all $x$ such that $x_{\mathrm b,j}=1$. Before every oracle call, we reinitialize $x_{\mathrm b,j}$ to the desired value, so no problem arrises even if the depolarizing noise $\ccN_{j,r}$ randomizes this value after $O_f$.

\subsection{Overcoming noise before the oracle call}
\label{ssec:noiseBefore}

Now let us consider the noisy oracle call to be $\ccN_{j,r}\circ \ccO_f$, that is, it is composed of the depolarizing noise on $j$-th query qubit followed by the faultless oracle call (see Figure~\ref{fig:noisy_qubit_orac_before} for illustrations). For sake of simplicity, in both scenarios, let us purposely completely depolarize $j$-th query qubit before every oracle call. Because $\ccN_{j,1}\circ (\ccN_{j,r}\circ \ccO_f)=\ccN_{j,1}\circ \ccO_f$, this is equivalent to having $r=1$.

\begin{figure}[!h]
\centering
\begin{tikzpicture}
\draw [white] (-0.1,-0.35-\ygap*2.5) rectangle (0.1,0.25+\ygap*4); 
  \foreach \y in {1,...,5}
  {
      \node [gray] at (0,\y*\ygap-\ygap) {\tiny $\y$};
   }
   \node [gray] at (0,-2.5*\ygap) {\tiny $0$};
\end{tikzpicture}
\hspace{13pt}
\begin{tikzpicture}
\draw [white] (-2.4,-0.35-\ygap*2.5) rectangle (2.6,0.25+\ygap*4); 
  \foreach \y in {1,...,5}
  {
      \draw (-2.4,\ygap*\y-\ygap)--(1.2,\ygap*\y-\ygap) ;
   }
   \draw (-2.4,-\ygap*2.5)--(1.2,-\ygap*2.5);
   \draw [draw=black,fill=oracleColor] (-0.5,-0.15-\ygap*2.5) rectangle (0.5,0.15+\ygap*4);
   \draw [fill=noiseColor, rounded corners = 1.3mm] (-1.8,-0.26-2.5*\ygap) rectangle (-0.8,0.26-2.5*\ygap);

   \node at (0,\ygap) {$O_f$};
    \node at (-1.28,-.03-2.5*\ygap) {$\ccN_{0,r}$};
\end{tikzpicture}
\hspace{17pt}
\begin{tikzpicture}
\draw [white] (-2.4,-0.35-\ygap*2.5) rectangle (2.6,0.25+\ygap*4); 
  \foreach \y in {1,...,5}
  {
      \draw (-2.4,\ygap*\y-\ygap)--(1.2,\ygap*\y-\ygap) ;
   }
   \draw (-2.4,-\ygap*2.5)--(1.2,-\ygap*2.5);
   \draw [draw=black,fill=oracleColor] (-0.5,-0.15-\ygap*2.5) rectangle (0.5,0.15+\ygap*4);
   \draw [fill=noiseColor, rounded corners = 1.3mm] (-1.8,-0.26+3*\ygap) rectangle (-0.8,0.26+3*\ygap);

   \node at (0,\ygap) {$O_f$};
    \node at (-1.28,-.03+3*\ygap) {$\ccN_{4,r}$};
\end{tikzpicture}

{\hspace{29pt}\small(\emph{target-qubit noise}) \hspace{78pt}
\small(\emph{index-qubit noise})}
\captionsetup{font=small}
\captionsetup{width=0.9\textwidth}
\caption[my caption]{%
The circuit diagrams of the noisy oracle call when the depolarizing noise $\ccN_{j,r}$ acts only before the faultless oracle call $O_f$.}
\label{fig:noisy_qubit_orac_before}
\end{figure}
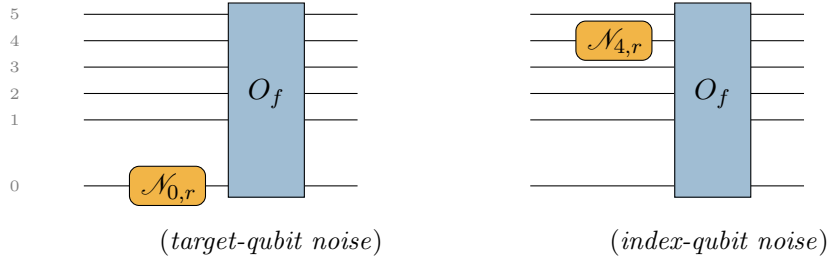

\paragraph{Noisy target qubit, $j=0$.}

Let us describe a procedure that, using the noisy oracle call $\ccN_{0,1}\circ \ccO_f$ the expected number of $T=2$ times acts on the query-input register $\regQi$ as the reflection about the marked elements.
The procedure calls the noisy oracle and then measures the target qubit. If the measurement yields $|1\>$, which happens with probability $1/2$, the desired reflection has been performed correctly. If the measurement yields $|0\>$, the noisy oracle call had no effect on the query-input register $\regQi$, and we can again call the noisy oracle and try to measure the target register to $|1\>$. (Note that $T=\frac12\cdot 1 + \frac12\cdot(1+T)$ indeed solves to $T=2$.)

After we have successfully performed the reflection about the marked elements, we can proceed with the algorithm by applying Grover's diffusion operator. 
Since the expected number of noisy oracle calls to implement a faultless reflection is $2$, the expected number of noisy oracle calls to implement $\Theta(\sqrt{n})$ faultless reflections is also $\Theta(\sqrt{n})$. Hence, by Markov's inequality, the probability that it requires $\omega(\sqrt{n})$ noisy oracle calls is negligible, i.e., it is $o(1)$.

\paragraph{Noisy index qubit, $j\in\{1,\ldots,\log n\}$.}

Let us show how to search for a marked element $x$ such that $x_{\bin,j}=0$, the case $x_{\bin,j}=1$ being analogous.
We initialize the target qubit $\regQo$ to $|1\>$, and its state will remain $|1\>$ throughout the computation.
Also throughout the computation, $j$-th query qubit will remain unentangled to any other qubit.

Let $y_\bin$ denote an $(\log n -1)$-bit string, which we think of as a $\log n$-bit string with the $j$-th bit missing. Then, for $\beta\in\{0,1\}$, let $y_\bin\|_j\beta$ denote the $\log n$-bit string that is obtained form $y_\bin$ by inserting $\beta$ as the $j$-th bit.
We want to implement the reflection about all $y_\bin$ such that (the input corresponding to the bit string) $y_\bin\|_j 0$ is marked, which in essence is the reflection  about all marked $x$ such that $x_{\bin,j}=0$.
Let us denote this reflection, which acts on all the qubits of register $\regQi$ except the $j$-th qubit, by $R_{j,0}$, and, similarly, let $R_{j,1}$ denote the reflection about all $y_\bin$ such that $y_\bin\|_j 1$ is marked.

To perform the desired reflection $R_{j,0}$, we repeatedly call the noisy oracle $\ccN_{j,1}\circ O_f$ and measure $j$-th query qubit until that measurement has yielded $|0\>$ an odd number of times and $|1\>$ an even number of times. According to Claim~\ref{clm:coins}, the expected number of noisy oracle calls to achieve this is $T=3$.
Note that, whenever we measure $|1\>$, we have just performed the reflection $R_{j,1}$. However, because we have an even number of such reflections and because they commute with $R_{j,0}$, they pairwise cancel out. As for the reflections $R_{j,0}$, we have an odd number of them, so all but one cancels out.

The rest of the analysis for the search of a marked $x$ such that $x_{\bin,j}=0$ goes along the same lines as in the scenario of the noisy target qubit. That is, using the linearity of expectation and Markov's inequality, we can see that $\OO(\sqrt n)$ calls to the noisy oracle suffice to perform this search.

\subsection{Search with flag bits}
\label{sec:search_with_flags}

Let us again assume that the noise is two-sided, occurring both before and after the faultless oracle call $O_f$. Except now, unlike in the main text, we assume that the noise operation, which we will denote by $\ccN^+_{j,r}$, produces a flag bit indicating whether or not the completely depolarizing error has occurred. Here we show that having such flag bits reduce the complexity from $\Omega(\max\{nr,\sqrt{n}\})$  to $\OO(\max\{nr^2,\sqrt{n}\})$. This contrasts the noise which simultaneously depolarizes all the query register $\regQ$ studied in~\cite{Rosmanis:2023:NoisyOracle}; there having flag bits had no effect on the complexity.

More formally, we define the \emph{error-signaling} noise as
\[
\ccN^+_{j,r} := 
(1-r)\itangle{|} \mathit{0} \itangle{\rangle}\otimes\ccI
+ r\itangle{|} \mathit{1} \itangle{\rangle}\otimes\ccN_{j,1},
\]
where $\itangle{|} \mathit{0} \itangle{\rangle}$ and $\itangle{|} \mathit{1} \itangle{\rangle}$ denote, respectively, the CPTP maps that prepare the states $|0\>$ and $|1\>$ to be used as a flag.
Note that, in contrast, the \emph{error-concealing} noise considered in the main text is
\[
\ccN_{j,r} = (1-r)\ccI + r \ccN_{j,1}.
\]
We define the error-signaling noisy oracle call as $\ccO^+_{f,j,r}:=\ccN^+_{j,r}\circ\ccO_f\circ\ccN^+_{j,r}$ (see Figure~\ref{fig:noisy_qubit_orac_signal} for illustrations), and note that this oracle call introduces a pair of flag bits $(\beta,\beta')$, the former indicating whether the error occurred before $O_f$, the latter indicating whether the error occurred after $O_f$.

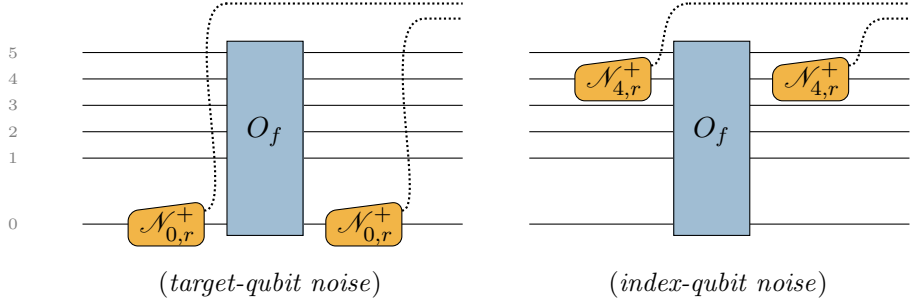
\begin{figure}[!h]
\centering
\begin{tikzpicture}
\draw [white] (-0.1,-0.4-\ygap*2.5) rectangle (0.1,1.0+\ygap*3); 
  \foreach \y in {1,...,5}
  {
      \node [gray] at (0,\y*\ygap-\ygap) {\tiny $\y$};
   }
   \node [gray] at (0,-2.5*\ygap) {\tiny $0$};
\end{tikzpicture}
\hspace{13pt}
\begin{tikzpicture}
\draw [white] (-2.4,-0.4-\ygap*2.5) rectangle (2.6,1.0+\ygap*3); 
  \foreach \y in {1,...,5}
  {
      \draw (-2.4,\ygap*\y-\ygap)--(2.6,\ygap*\y-\ygap) ;
   }
   \draw (-2.4,-\ygap*2.5)--(2.6,-\ygap*2.5);
   \draw [draw=black,fill=oracleColor] (-0.5,-0.15-\ygap*2.5) rectangle (0.5,0.15+\ygap*4);
   \draw [fill=noiseColor, rounded corners = 1.3mm] 
   (-0.8,0.30-2.5*\ygap) -- (-0.8,-0.29-2.5*\ygap) -- (-1.8,-0.29-2.5*\ygap) -- (-1.8, 0.17-2.5*\ygap) -- cycle;
   \draw [fill=noiseColor, rounded corners = 1.3mm] 
   (1.8,0.30-2.5*\ygap) -- (1.8,-0.29-2.5*\ygap) -- (0.8,-0.29-2.5*\ygap) -- (0.8, 0.17-2.5*\ygap) -- cycle;

   \draw [densely dotted, thick]  (-0.8,0.18-2.5*\ygap) .. controls 
             (-0.35,0.18-2.5*\ygap) and
             (-1.15,1.00+3.0*\ygap) ..
             (-0.5,1.00+3.0*\ygap);
    \draw [densely dotted, thick] (-0.52,1.00+3*\ygap)--(2.6,1.00+3*\ygap) ;

\draw [densely dotted, thick] (1.8,0.18-2.5*\ygap) .. controls 
             (2.15,0.18-2.5*\ygap) and
             (1.45,0.80+3*\ygap) ..
             (2.1,0.80+3*\ygap);
 \draw [densely dotted, thick]  (2.1,0.80+3*\ygap)--(2.6,0.80+3*\ygap) ;

   \node at (0,\ygap) {$O_f$};
    \node at (-1.28,-0.025-2.5*\ygap) {$\ccN_{0,r}^+$};
    \node at (1.32,-0.025-2.5*\ygap) {$\ccN_{0,r}^+$};

\end{tikzpicture}
\hspace{17pt}
\begin{tikzpicture}
\draw [white] (-2.4,-0.4-\ygap*2.5) rectangle (2.6,1.0+\ygap*3); 
  \foreach \y in {1,...,5}
  {
      \draw (-2.4,\ygap*\y-\ygap)--(2.6,\ygap*\y-\ygap) ;
   }
   \draw (-2.4,-\ygap*2.5)--(2.6,-\ygap*2.5);
   \draw [draw=black,fill=oracleColor] (-0.5,-0.15-\ygap*2.5) rectangle (0.5,0.15+\ygap*4);
   \draw [fill=noiseColor, rounded corners = 1.3mm] 
   (-0.8,0.30+3*\ygap) -- (-0.8,-0.29+3*\ygap) -- (-1.8,-0.29+3*\ygap) -- (-1.8, 0.17+3*\ygap) -- cycle;
   \draw [fill=noiseColor, rounded corners = 1.3mm] 
   (1.8,0.30+3*\ygap) -- (1.8,-0.29+3*\ygap) -- (0.8,-0.29+3*\ygap) -- (0.8, 0.17+3*\ygap) -- cycle;

   \draw [densely dotted, thick]  (-0.8,0.18+3*\ygap) .. controls 
             (-0.55,0.18+3*\ygap) and
             (-0.75,1.00+3*\ygap) ..
             (-0.3,1.00+3*\ygap);
    \draw [densely dotted, thick] (-0.3,1.00+3*\ygap)--(2.6,1.00+3*\ygap) ;

\draw [densely dotted, thick] (1.8,0.18+3*\ygap) .. controls 
             (2.05,0.18+3*\ygap) and
             (1.85,0.80+3*\ygap) ..
             (2.3,0.80+3*\ygap);
 \draw [densely dotted, thick]  (2.3,0.80+3*\ygap)--(2.6,0.80+3*\ygap) ;

   \node at (0,\ygap) {$O_f$};
    \node at (-1.28,-0.025+3*\ygap) {$\ccN_{4,r}^+$};
    \node at (1.32,-0.025+3*\ygap) {$\ccN_{4,r}^+$};

\end{tikzpicture}

{\hspace{29pt}\small(\emph{target-qubit noise}) \hspace{78pt}
\small(\emph{index-qubit noise})}
\captionsetup{font=small}
\captionsetup{width=0.9\textwidth}
\caption[my caption]{%
The circuit diagram of the two-sided error-signaling noisy oracle call $\ccO^+_{f,j,r}=\ccN^+_{j,r}\circ\ccO_f\circ\ccN^+_{j,r}$. The flag bits introduced by $\ccN^+_{j,r}$ are displayed as dotted lines.}
\label{fig:noisy_qubit_orac_signal}
\end{figure}

\bigskip

It is natural to ask: Why in the error-signaling case can we get speedups over the error-concealing case, when we do not have flag bits? An answer seems to be: Now, if only one error has occurred, the flag bits let us distinguish whether it occurred before or after $O_f$. As discussed in Section~\ref{sec:intuition}, intuitively, not having such an ability ruins our computation even if only one of the two errors has occurred. On the other hand, now, with the flag bits, the computation gets ruined only if both errors occur.

\bigskip

Below we provide the main ideas behind the algorithms that solve the search using $\OO(\max\{nr^2,\sqrt{n}\})$ calls to the error-signaling noisy oracle $\ccO^+_{f,j,r}$. We consider the scenarios of the noisy target qubit and the noisy index qubit separately.
For both scenarios, we will construct procedures to implement a single iteration of Grover's algorithm that work equally fast when no error occurs or when exactly one of the flag bits is zero. This way we will incur some slowdown compared to more intricate methods, but this slowdown is at most by a constant factor, and the analysis becomes easier.

The framework of the algorithms is essentially the same as that of the algorithm outlined in the original article (see \cite[Section 7]{Rosmanis:2023:NoisyOracle}): in parallel, we run $\Theta(nr^4)$ instances of Grover's algorithm for $\Theta(1/r^2)$ iterations and then, for each instance, we check the correctness of the returned result.

Unlike in the original article, here a single iteration of Grover's algorithm will requite multiple calls to the noisy oracle $\ccN^+_{j,r}$, however, in expectation, still at most a constant number. The advantage that we will have now is that we will not fail whenever just a single error occurs, but only when we get the \emph{double error}, that is, when the error occurs both before and after $O_f$ and, thus, the pair of flag bits given by oracle is $(1,1)$. If we perform $\Theta(1/r^2)$ calls to $\ccN^+_{j,r}$, with at least a constant probability we never get a double error.

If we had no flag bits, checking the correctness of the result of each instance of Grover's algorithm might require a logarithmic number of oracle calls to obtain the desired confidence. However, now, because of the flag bits, in expectation, we can do it using a constant number of oracle calls.

\paragraph{Noisy target qubit, $j=0$.} 

The following procedure, conditioned on the noisy oracle never giving flag bits $(1,1)$, applies on register $\regQi$ the reflection about the marked elements, that is, $I_\regQi-2\sum_{x\colon f(x)=1}|x\>\<x|$.
 
We choose $\beta\in\{0,1\}$ uniformly at random, set the target register to $|\beta\>$, call the noisy oracle $\ccN^+_{0,r}$, and measure the target register in the standard basis, obtaining  $\beta'\in\{0,1\}$.
If we do not get a double error, then we have one of the following two equally likely events. 
\begin{itemize}
\item
If $\beta=1$ and the flag bits are either $(0,0)$ or $(0,1)$ or if $\beta'=1$ and the flag bits are $(1,0)$, then we have applied the desired reflection.
\item
If $\beta=0$ and the flag bits are either $(0,0)$ or $(0,1)$ or if $\beta'=0$ and the flag bits are $(1,0)$, then we have not affected the register $\regQi$ at all.
\end{itemize}
(Note: if the flag bits are $(0,0)$, then $\beta=\beta'$.)
We repeat this subroutine until the former event has occurred, which, in expectation, requires $T=2$ oracle calls.

\paragraph{Noisy index qubit, $j\in\{1,\ldots,\log n\}$.}

Recall the reflections $R_{j,0}$ and $R_{j,1}$ from Section~\ref{ssec:noiseBefore}, which reflect about all $(\log n-1)$-bit strings $y_\bin$ such that, respectively, $y_\bin\|_j 0$ is marked and $y_\bin\|_j 1$ is marked.
The following procedure, conditioned on the noisy oracle never giving flag bits $(1,1)$, applies the reflection $R_{j,0}$, which enables us to search for marked $x$ such that $x_{\bin,j}=0$. The search for marked $x$ such that $x_{\bin,j}=1$ is analogous.

We choose $\beta\in\{0,1\}$ uniformly at random, set the $j$-th query qubit to $|\beta\>$ and the target register to $|1\>$, call the noisy oracle $\ccN^+_{j,r}$, and measure the $j$-th query qubit in the standard basis, obtaining  $\beta'\in\{0,1\}$.
If we do not get a double error, then we have one of the following two equally likely events. 
\begin{itemize}
\item
If $\beta=1$ and the flag bits are either $(0,0)$ or $(0,1)$ or if $\beta'=1$ and the flag bits are $(1,0)$, then we have applied $R_{j,1}$.
\item
If $\beta=0$ and the flag bits are either $(0,0)$ or $(0,1)$ or if $\beta'=0$ and the flag bits are $(1,0)$, then we have applied $R_{j,0}$.
\end{itemize}
We repeat this subroutine until the former event has occurred even number of times and the latter event has occurred odd number of times, which, in expectation, as per Claim~\ref{clm:coins}, requires $T=3$ oracle calls.

\end{document}